\documentclass{article}
\usepackage[utf8]{inputenc}
\usepackage[utf8]{inputenc}
\usepackage{amsmath,amssymb}
\usepackage{amsthm}
\usepackage{algorithm}
\usepackage{algorithmic}
\usepackage{graphicx}
\usepackage{url}
\usepackage{amsmath}
\usepackage[left=1.5cm, right=1.5cm, top=1.5cm, bottom=2cm]{geometry}
\newtheorem{definition}{Definition}
\newtheorem{assumption}{Assumption}
\newtheorem{remark}{Remark}
\newtheorem{lemma}{Lemma}
\newtheorem{theorem}{Theorem}
\newtheorem{proposition}{Proposition}

\usepackage{authblk}
\usepackage{hyperref}
\hypersetup{colorlinks=true,urlcolor=blue}

\title{Stochastic Average Consensus Filtering and Distributed State Estimation for Boolean Control Networks}
\author[1]{Rong Yang\thanks{Email: \url{42326006@smail.swufe.edu.cn}}}
\author[2]{Chi Huang$^\ast$\thanks{$^\ast$ Corresponding author, Email: \url{huangchi@swufe.edu.cn}}}

\affil[1]{School of Mathematics, Southwestern University of Finance and Economics}
\affil[2]{School of Computing and Artificial Intelligence, Southwestern University of Finance and Economics}

\date{}

\begin{document}
	\maketitle
	
	\begin{abstract}
		This paper addresses the distributed multi-sensor fusion state estimation and consensus filtering for  Boolean control networks (BCNs). Existing centralized multi-sensor estimation schemes for stochastic BCNs have limitations of high communication costs and single-point failures, and continuous-state consensus algorithms are difficult to extend to discrete logical systems. By integrating probability measure transformation, semi-tensor product and stochastic approximation, a distributed stochastic average consensus filter is proposed. Moreover, the almost sure convergence of the algorithm is proved by martingale convergence theorem and perturbed stochastic Lyapunov functions. The proposed framework realizes global state estimation via local communication, avoiding the defects of centralized architectures.
	\end{abstract} 
	
	\noindent \textbf{Index Terms}---Boolean control networks, distributed multi-sensor fusion, state estimation, semi-tensor product, stochastic average consensus.
	
	\section{Introduction}
	
	Boolean Networks (BNs) were first proposed by Kauffman \cite{Kauffman1969} in 1969 as a classic logical dynamic model for studying Gene Regulatory Networks (GRNs) and cellular metabolic processes, garnering widespread attention across multiple disciplines. In a BN, each node evolves between two discrete states activation (1) and inhibition (0). Subsequently, with the integration of control theory, researchers introduced external control inputs to BNs, leading to the development of BCNs \cite{BCN_intro}, which provide a powerful mathematical framework for investigating targeted interventions, drug treatment strategies, and therapeutic pathways for diseases in complex biological systems.
	
	Traditional mathematical analysis of logical dynamic systems has encountered significant challenges owing to their strong nonlinearity. To address this bottleneck, Cheng et al. proposed the Semi-tensor Product (STP) \cite{Cheng2009STP}, which successfully transforms discrete logical evolution equations into an equivalent Algebraic State Space Representation. This breakthrough enabled the application of matrix analysis from modern control theory to solve core problems of BNs/BCNs, such as stability \cite{Stability_BN}, controllability \cite{Controllability_BCN}, optimal control \cite{OptimalControl_BCN}, and fault detection \cite{FaultDetection_BCN}, laying the foundation for subsequent stochastic extension research.
	
	In real world biological and engineering systems, network dynamics are rarely purely deterministic. Molecular thermal motion, inherent randomness in gene expression, and external environmental fluctuations inevitably introduce process and measurement noise \cite{Noise1, Noise2}. To accurately capture such uncertainty, advanced stochastic extensions of BNs have been developed, including Probabilistic BNs \cite{PBN_intro}, Markov Jump BNs \cite{MJBN_intro}, and Stochastic BNs \cite{SBN_intro}. State estimation for these stochastic systems is pivotal for predicting biological behaviors and implementing effective control strategies. For instance, estimating the state of the p53-MDM2 feedback network is critical for cancer treatment \cite{p53_MDM2}.
	
	To mitigate the impact of noise and improve estimation reliability, multi-sensor fusion has emerged as a key research direction \cite{MultiSensor1, MultiSensor2}. By integrating redundant information from multiple sensors, multi-sensor fusion can enhance estimation accuracy and robustness compared to single-sensor systems, making it indispensable for high precision applications such as power grids, aircraft, and body sensor networks \cite{MultiSensorApp1, MultiSensorApp2, MultiSensorApp3}. For SBNs, Li and Tang \cite{Li2022MultiSensor} pioneered multi-sensor fusion Boolean Bayesian filtering, proposing a recursive algorithm to compute prior and posterior state beliefs without dimension expansion. Their work demonstrated that multi-sensor fusion significantly improves state estimation confidence compared to single sensor schemes \cite{Li2022MultiSensor}. Other existing studies have explored optimal estimation via Bayesian filtering \cite{Bayesian_SBN}, maximum-likelihood methods \cite{ML_SBN}, and hidden Markov modeling \cite{HMM_BCN}, but most rely on centralized architectures.
	
	A critical limitation of current multi-sensor fusion and state estimation research for SBNs is the centralized processing paradigm \cite{Li2022MultiSensor, CentralizedLimitation}. In centralized frameworks, a central processor collects measurement data from all sensors for global computation, which faces inherent drawbacks, such as high single-point failure risk, severe communication bandwidth bottlenecks, and exponential computational complexity with increasing network scale and sensor quantity \cite{DistributedConsensus}. This is particularly problematic for large-scale GRNs or distributed industrial systems, where sensor nodes are geographically dispersed, communication resources are limited, and the number of sensors is substantial. In contrast, distributed state estimation enables each sensor node to complete global estimation tasks using only local observation data and limited communication with neighboring nodes, offering superior robustness, scalability, and energy efficiency \cite{DistributedConsensus}.
	
	Existing multi-sensor fusion methods \cite{Li2022MultiSensor} still depend on data aggregation, failing to leverage the advantages of distributed architectures. Specifically, achieving consistent optimal estimation of BCN states across the entire sensor network using limited local communication, while integrating multi-sensor information to mitigate noise impacts, has not been addressed in existing literature. Current multi-sensor state estimation \cite{Li2022MultiSensor,26} adopts centralized processing, suffering from poor scalability, high communication costs, and single-point failure risks in large-scale sensor networks. Existing distributed methods lack effective integration of multi-sensor redundant information to minimize estimation errors, as achieved by centralized multi-sensor fusion.

	Motivated by these gaps, this paper focuses on distributed multi-sensor fusion state estimation and consensus filtering for disturbed BCNs in sensor networks. We deeply integrate hidden Markov modeling, probability measure transformation, multi-sensor data fusion principles \cite{Li2022MultiSensor,26}, and distributed stochastic approximation algorithms to propose a comprehensive theoretical and algorithmic framework. This paper's contributions are summarized as follows
	\begin{enumerate}
		\item To overcome computational challenges from strongly nonlinear observations and heterogeneous sensor data, we introduce the Radon-Nikodym derivative for change of Measure. Under the new measure, we derive recursive equations for each node's conditional state expectation based on local sensor observations.
		\item Addressing communication constraints, we propose a distributed algorithm based on stochastic approximation. Each sensor node updates its global state estimate by exchanging local estimate results with topological neighbors, avoiding centralized bottlenecks.
		\item Using the martingale convergence theorem and perturbed stochastic Lyapunov functions, we prove almost sure practical convergence of the proposed algorithm. Under graph connectivity and step-size conditions, each sensor node's estimation error is uniformly ultimately bounded by an explicitly quantified constant, with the bound governed by the network algebraic connectivity and sensor heterogeneity. The radius of this bound can be systematically reduced by enhancing the network algebraic connectivity.
	\end{enumerate}
	
	Notations: Set $\mathcal{D}$ is denoted as $\mathcal{D} = \{0,1\}$. A $p \times q$-dimensional logical matrix $L$ can be represented by $L=[\delta_{p}^{k_{1}}\delta_{p}^{k_{2}}\ldots\delta_{p}^{k_{q}}]$, where $k_{i} \in [1:p]$, for $i \in [1:q]$. 
	It is also represented by $L=\delta_{p}[k_{1}k_{2}\ldots k_{q}]$ for simplicity. 
	All $p \times q$-dimensional logical matrices construct set $\mathcal{L}_{p\times q}$. For matrix $A$, $\operatorname{Col}_{j}(A)$ be the $j$th column, $\operatorname{Row}_{i}(A)$ be the $i$th row, and $A_{p,q}$ corresponds to the element of the $p$th row and the $q$th column. And $|S|$ represents the number of elements in set $S$. $\mathbf{1}_{n}(\mathbf{0}_{n})$ is an $n$-dimensional vector with all entries being $1 (0)$. For an $n$-dimensional vector $x$, $x_{i}$ represents the $i$th entry of vector $x$. Let $\lfloor x \rfloor$ denote the floor of x, i.e., the greatest integer less than or equal to x. $\langle \cdot , \cdot \rangle$ represents the inner product of vectors.
	
	\section{PRELIMINARIES}
	
	\begin{definition}
		The semi-tensor product of two matrices $A \in \mathbb{R}^{m \times n}$ and $B \in \mathbb{R}^{p \times q}$ is defined as follows:
		\begin{equation}
			A \ltimes B = \left(A \otimes I_{\alpha/n}\right)\left(B \otimes I_{\alpha/p}\right)
		\end{equation}
		where $\alpha = \mathrm{lcm}(n,p)$ denotes the least common multiple of $n$ and $p$, and $\otimes$ represents the tensor  product.
	\end{definition}
	
	Given logical arguments $x_1, \ldots, x_n \in \Delta_2$ and a logical function $f(x_1, \ldots, x_n)$, there exists a unique structure matrix $L_f \in \mathcal{L}_{2 \times 2^n}$ such that the logical function admits the following multilinear representation
	$f(x_1, \ldots, x_n) = L_f x_1 \ltimes \cdots \ltimes x_n$.
	Let $x = \ltimes_{i=1}^{n} x_i$, then the dynamic equation of node $i$ can be formulated using its structure matrix $L_{f_i}$ as $x_i(t+1) = L_{f_i} x(t)$. To characterize the relationship between $x(t+1)$ and $x(t)$, we first introduce the following operation.
	\begin{definition}
		For matrices $A \in \mathbb{R}^{m \times r}$ and $B \in \mathbb{R}^{n \times r}$, the Khatri--Rao product of $A$ and $B$ is defined by
		\begin{equation}
			A * B = \left(\mathrm{Col}_1(A) \ltimes \mathrm{Col}_1(B), \ldots, \mathrm{Col}_r(A) \ltimes \mathrm{Col}_r(B)\right).
		\end{equation}
	\end{definition}
	Accordingly, $x(t+1) = Lx(t)$, where $L = L_{f_1} * \cdots * L_{f_n}$. 
	Let the swap matrix be defined as $W_{[m,n]} = \left[ I_n \otimes \delta_m^1, \ldots, I_n \otimes \delta_m^m \right]$ and $D^{[m,n]}_r = I_m \otimes \mathbf{1}^\top_n$, then one has $x_i = D^{[2,2^{n-1}]}_r W_{[2,2^{i-1}]} x$.
	
	The communication topology of the sensor network is modeled by an undirected simple graph $\mathcal{G} = (\mathcal{N}, \mathcal{E})$, where
	\begin{itemize}
		\item $\mathcal{N} = \{1, 2, \dots, r\}$ with $r \in \mathbb{N}$ denotes the set of vertices, where vertex corresponds to a sensor node,
		\item $\mathcal{E} \subseteq \mathcal{N} \times \mathcal{N}$ represents the set of edges, where an unordered pair $(i, j) \in \mathcal{E}$ indicates a communication link between node $i$ and node $j$.
	\end{itemize}
	
	Throughout this paper, all graphs are assumed to be undirected and simple, i.e., no self-loops. Formally, for every edge $(i, j) \in \mathcal{E}$, we require $i \neq j$. The set of neighbors of node $j$ is defined as $\mathcal{N}_j \triangleq \{i \in \mathcal{N} \mid (i, j) \in \mathcal{E}\}$. 
	
	\section{PROBLEM FORMULATION}
	
	Consider the following disturbed BCN. For $i = 1, 2, \ldots, n$
	\begin{equation}
		X_i(k+1) = f_i(X(k), U(k), W(k))
	\end{equation}
	where $X_i(k) \in \mathcal{D}$ is the state value of node $i$, and the network state $X(k)$ is defined as $X(k) = (X_1(k), \ldots, X_n(k))^{\top} \in \mathcal{D}^n$.
	The control input is $U(k) = (U_1(k), \ldots, U_u(k))^{\top} \in \mathcal{D}^u$, and the system disturbance is $W(k) = (W_1(k), \ldots, W_{l_1}(k))^{\top} \in \mathcal{D}^{l_1}$,
	with $U_j(k) \in \mathcal{D}$, $j \in [1:u]$ and $W_t(k) \in \mathcal{D}$, $t \in [1:l_1]$. Here, $u$ and $l_1$ are the numbers of control inputs and disturbance inputs, respectively.
	The disturbance noise is assumed to be independent and identically distributed. Moreover, the state transition function $f_i(\cdot, \cdot, \cdot)$ maps $\mathcal{D}^{n+u+l_1}$ to $\mathcal{D}$.
	The output measurement dynamics for every sensor s $\in \mathcal{N}$ are given as follows
	\begin{equation}
		Y^s_j(k) = h^s_j(X(k), V^s(k)),  j\in[1:m_s] 
	\end{equation}
	where the output vector $Y^s(k) = (Y^s_1(k), \ldots, Y^s_{m_s}(k))^{\top} \in \mathcal{D}^{m_s}$, and $Y^s_j(k) \in \mathcal{D}$ denotes the observer measurement.
	$V^s(k) = (V^s_1(k), \ldots, V^s_{l^s_2}(k))^{\top} \in \mathcal{D}^{l^s_2}$ represents the output noise of system (3).
	Analogous to $f_i(\cdot, \cdot, \cdot)$, the output function $h^s_j(\cdot, \cdot): \mathcal{D}^{n+l^s_2} \to \mathcal{D}^{m_s}$.
	The initial state $X(0)$ and noises $W(k), V^s(k)$ are assumed to be mutually independent.
	Define a mapping from the set $\mathcal{D}$ to $\Delta_2$ such that $0 \mapsto \delta_2^2$ and $1 \mapsto \delta_2^1$.
	For node $i$, $f_i(X(k), U(k), W(k))$ corresponds to $\mathbb{E}\{f_i(x(k), u(k))\}$, where
	$x_i(k) = \delta_2^{2-X_i(k)}$, $u_j(k) = \delta_2^{2-U_j(k)}$, $x(k) = \ltimes_{i=1}^{n} x_i(k)$, $u(k) = \ltimes_{j=1}^{u} u_j(k)$.
	Consequently, there exists a unique probability matrix $F_i$ which is the expected structure matrix of node $i$. Then, by Markov property, system (3) can be transformed into
	\begin{equation}
		\mathbb{E}[x_i(k+1) | \mathcal{F}^U_k \cup \mathcal{F}^X_k] = \mathbb{E}[x_i(k+1) | u(k), x(k)] = F_i u(k) x(k),
	\end{equation}
	where $\mathcal{F}^U_k \triangleq \sigma(u(1),\ldots,u(k))$ and $\mathcal{F}^X_k \triangleq \sigma(x(0),\ldots,x(k))$ denote the $\sigma$-algebras generated by the control inputs and the states up to time $k$, respectively.
	Taking the STP of $\mathbb{E}[x_i(k+1) | u(k), x(k)]$ for $i=1$ to $n$, we obtain
	\begin{equation}
		\mathbb{E}[x(k+1) | \mathcal{F}^U_k \cup \mathcal{F}^X_k] =\mathbb{E}[x(k+1) | u(k), x(k)] = F u(k) x(k),
	\end{equation}
	where $F = F_1 * F_2 * \cdots * F_n$. This constructs a hidden Markov state control model.
	The probability matrix $F$ is partitioned into $2^u$ equal blocks as
	\begin{equation}
		F = \left[\mathrm{Blk}_1(F) \quad \mathrm{Blk}_2(F) \quad \cdots \quad \mathrm{Blk}_{2^u}(F)\right].
	\end{equation}
	For any $\alpha \in [1:2^u]$ and $i, j \in [1:2^n]$, under the control input $u(k) = \delta_{2^u}^{\alpha}$, we have
	\begin{equation}
		\left[\mathrm{Blk}_{\alpha}(F)\right]_{j,i} = \ P\left\{x(k+1) = \delta_{2^n}^j | x(k) = \delta_{2^n}^i, u(k) = \delta_{2^u}^{\alpha}\right\},
	\end{equation}
	where $\ P\{\cdot\}$ denotes the probability measure.
	We introduce the following state feedback control (SFC)
	\begin{equation}
		u(k) = K x(k),
	\end{equation}
	where K is a proper SFC gain.
	Let $y^s_j(k) = \delta_2^{2-Y^s_j(k)}$ and $y^s(k) = \ltimes_{j=1}^{m_s} y^s_j(k)$. Since $Y^s(k)$ is the measurement of the state $X(k)$, denote
	\begin{equation}
		\ P\left\{y^s(k) = \delta_{M_s}^j | x(k) = \delta_{N}^i\right\} = \left[H^s\right]_{j,i},
	\end{equation}
	where $M_s \triangleq 2^{m_s}$ and $N \triangleq 2^{n}$.
	With the output noise, the output feedback is characterized by the following hidden Markov model
	\begin{equation}
		\mathbb{E}\left\{y^s(k) | \mathcal{F}^U_k \cup \mathcal{F}^X_k \cup \mathcal{F}^{Y(s)}_k\right\} = \mathbb{E}\left\{y^s(k) | x(k)\right\} = H^s x(k),
	\end{equation}
	where $\mathcal{F}^{Y(s)}_k \triangleq \sigma({y^s(1),...,y^s(k)})$ denotes the $\sigma$-algebra generated by the observations up to time k. 
	
	With the system and observation models given by (6) and (11), the distributed state estimation problem addressed in this paper is formally stated as follows. Consider a sensor network consisting of \(r\) nodes modeled by an undirected graph \(\mathcal{G}=(\mathcal{N},\mathcal{E})\). Each node \(s\in\mathcal{N}\) serves a dual role. It acts as an observer that acquires the local observation sequence \(\{y^s(k)\}\) generated by the observation model, and simultaneously functions as a state estimator that computes a local inference of the global state. No central fusion center exists, and raw observations are never exchanged between nodes. At each time step \(k\), every node \(s\) is permitted to communicate exactly once with its immediate neighbors \(j\in\mathcal{N}_s\), exchanging only their current local estimate vectors. Under these communication constraints, the objective is to design a fully distributed  algorithm such that each node asymptotically tracks the network-wide average of all local conditional expectations with a bounded error. Specifically, the goal is to ensure
	\[	\limsup_{k \to \infty} \|\varepsilon_k\| \le c, \quad \mathbb{P}\text{-a.s.}\]
	where the error $\varepsilon_k$ represents the error between the consensus filter state vector and the average vector, and $c > 0$ is a finite constant determined by the network algebraic connectivity and the heterogeneity among sensor observation models. Based on this converged average estimate, each node further extracts an optimal estimate \(X^*_i(k)\). The main challenges lie in (i) recursively computing the local conditional expectation \(\mathbb{E}[x(k)\mid\mathcal{F}_k^{Y(s)}]\), and (ii) guaranteeing almost sure practical convergence of the distributed consensus process under stochastic system disturbances, observation noises, and the one-communication-per-step constraint. The subsequent section presents a solution that integrates semi-tensor product based recursive filtering with a stochastic approximation consensus scheme to overcome these difficulties.
	\section{MAIN RESULTS}
	\subsection{Stochastic Average Consensus Filtering}
	In this part, we utilize an approach to estimate the expectation of states. We introduce the following two assumptions.
	
	\begin{assumption}
		Let $Y_k \triangleq \{y^s(k)\}_{s \in \mathcal{N}}$ denote the collection of all sensor observations at time $k$. The observations of different sensors are mutually conditionally independent given the state $x(k)$. Specifically, for any finite subset of sensors $\{s_1, s_2, \ldots, s_m\} \subseteq \mathcal{N}$,
		\[
		\mathbb{P}\bigl(y^{s_1}(k), y^{s_2}(k), \ldots, y^{s_m}(k) \mid x(k)\bigr) = \prod_{i=1}^{m} \mathbb{P}\bigl(y^{s_i}(k) \mid x(k)\bigr).
		\]
	\end{assumption}
	
	\begin{assumption}
		The state process $\{x(k)\}$ is assumed to be geometrically ergodic under SFC (9).
	\end{assumption}
	
	\begin{remark}
		Under Assumption 2, the state process $\{x(k)\}$ admits a unique stationary distribution. 
		Accordingly, the extended Markov chain $\{(x(k), y^s(k))\}$ possesses a unique stationary distribution $\pi^s(\cdot,\cdot)$ defined on $\Delta_N \times \Delta_{M_s}$. 
		Denote by $y_i^s(\infty)$ the marginal density of the invariant measure $\pi^s$, where $y^s(\infty) = [y_i^s(\infty)]_{i \in [1:M_s]}$ and
		$ y_i^s(\infty) = \sum_{j=1}^{N} \pi^s(\delta_N^j, \delta_{M_s}^i).$
	\end{remark}
	
	In the following, we introduce an approach to estimate the conditional expectation of states. It is assumed that the dynamic of process $x(k)$ makes no difference under the new probability measure $P^s$ for each sensor $s$, that is
	\begin{equation}
		\mathbb{E}^s\bigl[x(k) \mid \mathcal{F}_{k-1}^ X \cup \mathcal{F}_{k-1}^ U \cup \mathcal{F}_{k-1}^{Y(s)}\bigr] = F u(k-1) x(k-1).
	\end{equation}
	However, the dynamic of process $y^s(k)$ is supposed to be independent of $x(k)$ and $u(k)$. Furthermore, it follows 
	\begin{equation}
		\mathbb{E}^s\bigl[\langle y^s(k), \delta_{M_s}^i \rangle \mid \mathcal{F}_k^X \cup \mathcal{F}_{k-1}^U \cup \mathcal{F}_{k-1}^{Y(s)}\bigr] = \frac{1}{M_s}
	\end{equation}
	for $i \in [1, M_s]$.
	Then we can denote the following parameters. Let
	\begin{equation}
		\Lambda_k^s = \prod_{t=1}^{k} \lambda_t^s, 
	\end{equation}
	where  $\lambda_k^s \triangleq M_s \sum_{j=1}^{M_s} \langle H^s x(k), \delta_{M_s}^j \rangle \langle y^s(k), \delta_{M_s}^j \rangle$. Then we have the following lemma.
	
	\begin{lemma}
		Under new probability measure $P^s$, the following two equations hold
		\begin{equation}
			\ \mathbb{E}^s\bigl[\lambda_k^s | \mathcal{F}_k^X \cup \mathcal{F}_{k-1}^U \cup \mathcal{F}_{k-1}^{Y(s)}\bigr] = 1 ,
		\end{equation} 
		\begin{equation}
			\ \mathbb{E}^s\bigl[\lambda_k^s | \mathcal{F}_{k-1}^X \cup \mathcal{F}_{k-1}^U \cup \mathcal{F}_{k-1}^{Y(s)}\bigr] = 1 .
		\end{equation}	
	\end{lemma}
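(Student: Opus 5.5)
The plan is a direct computation for (15), followed by the tower property to get (16).

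For (15), write $\mathcal{G}_k \triangleq \mathcal{F}_k^X \cup \mathcal{F}_{k-1}^U \cup \mathcal{F}_{k-1}^{Y(s)}$. Since $x(k)$ is $\mathcal{G}_k$-measurable, each factor $\langle H^s x(k), \delta_{M_s}^j\rangle$ can be pulled out of the conditional expectation. Linearity then gives
\[
\mathbb{E}^s[\lambda_k^s \mid \mathcal{G}_k] = M_s \sum_{j=1}^{M_s} \langle H^s x(k), \delta_{M_s}^j\rangle\, \mathbb{E}^s\bigl[\langle y^s(k), \delta_{M_s}^j\rangle \mid \mathcal{G}_k\bigr].
\]
By the defining property (13) of $P^s$, each conditional expectation on the right equals $1/M_s$. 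The expression therefore reduces to $\sum_{j=1}^{M_s} \langle H^s x(k), \delta_{M_s}^j\rangle = \mathbf{1}_{M_s}^\top H^s x(k)$.

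It remains to check that this equals $1$. By (10), $H^s$ is column-stochastic: its $i$-th column is the conditional distribution of $y^s(k)$ given $x(k)=\delta_N^i$, so $\mathbf{1}_{M_s}^\top H^s = \mathbf{1}_N^\top$. Since $x(k) \in \Delta_N$ is a canonical basis vector, $\mathbf{1}_N^\top x(k) = 1$, which proves (15).

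For (16), set $\mathcal{H}_{k-1} \triangleq \mathcal{F}_{k-1}^X \cup \mathcal{F}_{k-1}^U \cup \mathcal{F}_{k-1}^{Y(s)}$ (more precisely, the generated $\sigma$-algebras). Then $\mathcal{H}_{k-1} \subseteq \sigma(\mathcal{G}_k)$, so the tower property and (15) give
\[
\mathbb{E}^s[\lambda_k^s \mid \mathcal{H}_{k-1}] = \mathbb{E}^s\bigl[\mathbb{E}^s[\lambda_k^s \mid \mathcal{G}_k] \mid \mathcal{H}_{k-1}\bigr] = \mathbb{E}^s[1 \mid \mathcal{H}_{k-1}] = 1.
\]

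The only step needing care is the measurability bookkeeping: the paper writes unions of $\sigma$-algebras, which must be read as the $\sigma$-algebras they generate, and the nesting $\mathcal{H}_{k-1} \subseteq \sigma(\mathcal{G}_k)$ should be confirmed. This holds because $\mathcal{F}_{k-1}^X \subseteq \mathcal{F}_k^X$ while the $U$ and $Y$ parts coincide. Integrability is automatic, since $0 \le \lambda_k^s \le M_s$.
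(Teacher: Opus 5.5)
Your proposal is correct and follows the paper's own proof: pull the $\mathcal{F}_k^X$-measurable factors $\langle H^s x(k), \delta_{M_s}^j\rangle$ out of the conditional expectation, apply (13) to reduce to $\sum_j \langle H^s x(k), \delta_{M_s}^j\rangle = 1$, then obtain (16) from (15) by the tower property. You also spell out details the paper leaves implicit: why the sum equals $1$ (column-stochasticity of $H^s$ and $x(k)\in\Delta_N$), the nesting of the generated $\sigma$-algebras, and integrability.
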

	\begin{proof}
		See Appendix A.
	\end{proof}
	
	\begin{lemma}[{\cite[Th.~3.2 of Ch.~2]{41}}]
		The probabilistic space is denoted by $(\Omega, \mathcal{F}, P)$ and $\mathcal{G} \subset \mathcal{F}$ is a sub-$\sigma$-field. $\hat{P}$ is another probability measurement which is absolutely continuous with respect to $P$ and satisfies Radon--Nikodym derivative $dP/d\hat{P} = \Lambda$. For any $P$ integrable random variable $\phi$, one can obtain
		\begin{equation}
			\mathbb{E}[\phi | \mathcal{G}] = 
			\begin{cases}
				\dfrac{\hat{\mathbb{E}}[\Lambda\phi | \mathcal{G}]}{\hat{\mathbb{E}}[\Lambda | \mathcal{G}]}, & \hat{\mathbb{E}}[\Lambda | \mathcal{G}] > 0 \\[12pt]
				0, & \text{otherwise}.
			\end{cases}
		\end{equation}
	\end{lemma}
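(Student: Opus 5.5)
This is the abstract Bayes formula. The plan is to verify directly that the proposed right-hand side is a version of $\mathbb{E}[\phi\mid\mathcal{G}]$.

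\textbf{Setup and measurability.} Write $\Lambda = dP/d\hat{P}$. I will use the conventions that $\Lambda\ge 0$ and that $\hat{\mathbb{E}}[\Lambda\phi]=\mathbb{E}[\phi]$ is finite for $P$-integrable $\phi$. Note that $\mathbb{E}$ is taken under $P$ and $\hat{\mathbb{E}}$ under $\hat{P}$, and that $dP/d\hat P$ existing means $P\ll\hat P$. Set $\eta \triangleq \hat{\mathbb{E}}[\Lambda\mid\mathcal{G}]$ and $\xi \triangleq \hat{\mathbb{E}}[\Lambda\phi\mid\mathcal{G}]$. Let $\psi = \xi/\eta$ on $\{\eta>0\}$ and $\psi=0$ elsewhere. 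Since $\psi$ is built from $\mathcal{G}$-measurable quantities, it is $\mathcal{G}$-measurable.

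\textbf{The exceptional set is $P$-null.} Let $A=\{\eta=0\}\in\mathcal{G}$. By the defining property of $\hat{\mathbb{E}}[\cdot\mid\mathcal{G}]$,
\[
P(A)=\hat{\mathbb{E}}[\Lambda\mathbf{1}_A]=\hat{\mathbb{E}}[\eta\mathbf{1}_A]=0.
\]
So the ``otherwise'' branch only affects a $P$-null set. Because conditional expectations under $P$ are defined only up to $P$-null sets, the value assigned there is irrelevant.

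\textbf{Main identity.} Fix $G\in\mathcal{G}$. Repeatedly using the change of measure and the tower property under $\hat P$ with $\mathcal{G}$-measurable multipliers, I compute
\[
\mathbb{E}[\phi\mathbf{1}_G]=\hat{\mathbb{E}}[\Lambda\phi\mathbf{1}_G]=\hat{\mathbb{E}}[\xi\mathbf{1}_G].
\]
On the other side,
\[
\mathbb{E}[\psi\mathbf{1}_G]=\hat{\mathbb{E}}[\Lambda\psi\mathbf{1}_G]=\hat{\mathbb{E}}[\eta\psi\mathbf{1}_G]=\hat{\mathbb{E}}[\xi\mathbf{1}_{G\cap A^c}].
\]
It remains to show $\hat{\mathbb{E}}[\xi\mathbf{1}_{G\cap A}]=0$, which holds if $\xi=0$ $\hat P$-a.s. on $A$. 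To see this, note that $\hat{\mathbb{E}}[|\Lambda\phi|\,\mathbf{1}_A]=\mathbb{E}[|\phi|\mathbf{1}_A]=0$ because $P(A)=0$. Hence $\hat{\mathbb{E}}[|\xi|\mathbf{1}_A]\le \hat{\mathbb{E}}[\hat{\mathbb{E}}[|\Lambda\phi|\mid\mathcal{G}]\mathbf{1}_A]=0$. Combining these gives $\mathbb{E}[\phi\mathbf{1}_G]=\mathbb{E}[\psi\mathbf{1}_G]$ for every $G\in\mathcal{G}$. Therefore $\psi=\mathbb{E}[\phi\mid\mathcal{G}]$ $P$-a.s.

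\textbf{Main obstacle: integrability of $\psi$.} The step needing care is pulling the $\mathcal{G}$-measurable but possibly unbounded factor $\psi\mathbf{1}_G$ inside $\hat{\mathbb{E}}[\Lambda\,\cdot\,]$. I plan to handle it in three moves:
\begin{enumerate}
\item Split $\phi=\phi^+-\phi^-$, so that $\xi\ge 0$ and $\psi\ge0$ for each part.
\item Apply the identities with $G$ replaced by $G\cap\{\psi\le n\}$, where every factor is bounded.
\item Let $n\to\infty$ by monotone convergence.
\end{enumerate}
As a by-product this shows $\mathbb{E}[\psi]=\mathbb{E}[\phi^{\pm}]<\infty$, so $\psi$ is $P$-integrable.
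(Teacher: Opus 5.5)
Your proof is correct. The paper gives no argument of its own and only cites the textbook conditional Bayes theorem, and your proof is essentially that standard proof with the two measures' roles swapped: isolate the $\mathcal{G}$-set $\{\hat{\mathbb{E}}[\Lambda\mid\mathcal{G}]=0\}$, show it is $P$-null, and verify the defining identity of $\mathbb{E}[\phi\mid\mathcal{G}]$ on its complement by the tower property under $\hat{P}$. Your truncation argument for the integrability of $\psi$ fills a detail that the textbook argument glosses over. You were also right to read the hypothesis as $P\ll\hat{P}$: that is what the existence of $dP/d\hat{P}=\Lambda$ requires, even though the statement as written says the reverse.
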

	
	\begin{theorem}
		The relationship (6) and (11), respectively, under the original probability $P$ can be recovered by Radon--Nikodym derivative, that is
		\begin{equation}
			\left.\frac{dP}{dP^s}\right|_{\mathcal{F}_k^X \cup \mathcal{F}_k^U \cup \mathcal{F}_k^{Y(s)}} = \Lambda_k^s.
		\end{equation}
	\end{theorem}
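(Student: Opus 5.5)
We work in the reference measure $P^s$ and define $P$ on $\mathcal{G}_k \triangleq \mathcal{F}_k^X \cup \mathcal{F}_k^U \cup \mathcal{F}_k^{Y(s)}$ by $dP/dP^s|_{\mathcal{G}_k} = \Lambda_k^s$. The goal is to show that, under this $P$, the state obeys (6) and the observation obeys (11). Three things are needed: that the definition is consistent, that (6) is preserved, and that (11) is recovered.

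\emph{Consistency.} First we show that $\{\Lambda_k^s\}$ is a nonnegative $P^s$-martingale with mean one with respect to $\{\mathcal{G}_k\}$. This makes the family of measures consistent across $k$ (Kolmogorov extension), and each of them is a probability measure. Nonnegativity holds because $\lambda_k^s \ge 0$. Since $\Lambda_{k-1}^s$ is $\mathcal{G}_{k-1}$-measurable, the tower property together with (16) of Lemma 1 gives
\[
\mathbb{E}^s[\Lambda_k^s \mid \mathcal{G}_{k-1}] = \Lambda_{k-1}^s\, \mathbb{E}^s[\lambda_k^s \mid \mathcal{G}_{k-1}] = \Lambda_{k-1}^s .
\]
(Here $\mathcal{G}_{k-1}$ also contains $u(k-1)=Kx(k-1)$.) Iterating yields $\mathbb{E}^s[\Lambda_k^s]=1$.

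\emph{Observation law (11).} Let $\mathcal{H}_k \triangleq \mathcal{F}_k^X \cup \mathcal{F}_{k-1}^U \cup \mathcal{F}_{k-1}^{Y(s)}$ and apply Lemma 2 with $\mathcal{G}=\mathcal{H}_k$ and $\phi=\langle y^s(k),\delta_{M_s}^j\rangle$. Pulling the $\mathcal{H}_k$-measurable factor $\Lambda_{k-1}^s$ out of both numerator and denominator, it cancels, leaving
\[
P\bigl(y^s(k)=\delta_{M_s}^j \mid \mathcal{H}_k\bigr)=\frac{\mathbb{E}^s[\lambda_k^s\langle y^s(k),\delta_{M_s}^j\rangle\mid\mathcal{H}_k]}{\mathbb{E}^s[\lambda_k^s\mid\mathcal{H}_k]} .
\]
By (15) of Lemma 1 the denominator equals $1$. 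For the numerator, $y^s(k)$ is a canonical basis vector, so $\lambda_k^s\langle y^s(k),\delta_{M_s}^j\rangle = M_s\langle H^s x(k),\delta_{M_s}^j\rangle\langle y^s(k),\delta_{M_s}^j\rangle$. Since $x(k)$ is $\mathcal{H}_k$-measurable, (13) gives the value $M_s\langle H^s x(k),\delta_{M_s}^j\rangle\cdot\frac{1}{M_s}=\langle H^s x(k),\delta_{M_s}^j\rangle$. Stacking over $j$ gives $\mathbb{E}[y^s(k)\mid\mathcal{H}_k]=H^s x(k)$, which is (11). It depends only on $x(k)$, which is the required conditional structure.

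\emph{State law (6).} Let $\mathcal{G}_{k-1}$ be as above and set $\phi=x(k)$ in Lemma 2. After cancelling $\Lambda_{k-1}^s$ and using (16) for the denominator, we get $\mathbb{E}[x(k)\mid\mathcal{G}_{k-1}] = \mathbb{E}^s[\lambda_k^s x(k)\mid\mathcal{G}_{k-1}]$. Conditioning first on $\mathcal{H}_k\supset\mathcal{G}_{k-1}$ and using (15) gives
\[
\mathbb{E}^s\bigl[x(k)\,\mathbb{E}^s[\lambda_k^s\mid\mathcal{H}_k]\bigm|\mathcal{G}_{k-1}\bigr]=\mathbb{E}^s[x(k)\mid\mathcal{G}_{k-1}]=Fu(k-1)x(k-1)
\]
by (12), which is (6).

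The main obstacle is the bookkeeping of $\sigma$-algebras. Equation (13) is stated with $\mathcal{F}_k^X$ but only $\mathcal{F}_{k-1}^U$, whereas Theorem 1 uses $\mathcal{F}_k^U$. These are reconciled through the feedback $u(k)=Kx(k)$ from (9): $\mathcal{F}_k^U\subseteq\mathcal{F}_k^X$, so the extra $\sigma$-algebra adds nothing, and the conditionings agree. A further technical point is the identification of the product $\Lambda_k^s$ with a density on the growing filtration. This is handled by the martingale property from the consistency step, which ensures the restriction to $\mathcal{G}_{k-1}$ of the measure defined at time $k$ coincides with the one defined by $\Lambda_{k-1}^s$.
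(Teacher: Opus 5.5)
Your proposal is correct and follows essentially the same route as the paper's proof. You first establish $\mathbb{E}^s[\Lambda_k^s\mid\mathcal{F}_{k-1}^X\cup\mathcal{F}_{k-1}^U\cup\mathcal{F}_{k-1}^{Y(s)}]=\Lambda_{k-1}^s$, then apply the Bayes formula of Lemma 2, cancel $\Lambda_{k-1}^s$, and use (15), (16) together with (12), (13) to recover the state and observation laws. Your additional remarks are sound and make explicit what the paper leaves implicit: the consistency of the measures across $k$, and the inclusion $\mathcal{F}_k^U\subseteq\mathcal{F}_k^X$ under $u(k)=Kx(k)$, which reconciles the conditioning $\sigma$-algebras.
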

	\begin{proof}
		See Appendix B.
	\end{proof}
	
	Based on the above technical theorem, the relationship between the original probability $P$ and new probability $P^s$ is established. Next, we will solve the initial state estimation problem.
	
	\begin{theorem}
		The conditional expectation $\mathbb{E}[x(k) | \mathcal{F}_k^{Y(s)}]$ has the following recursive dynamics
		\begin{equation}
			\mathbb{E}\bigl[x(k+1) | \mathcal{F}_{k+1}^{Y(s)}\bigr] = \frac{\bigl[(H^s)^{\!\top} \mathcal{T}^s y^s(k+1)\bigr] \circ \bigl[FK\Phi_n \mathbb{E}[x(k) | \mathcal{F}_k^{Y(s)}]\bigr]}{\bigl\langle (H^s)^{\!\top} \mathcal{T}^s y^s(k+1), FK\Phi_n \mathbb{E}[x(k) | \mathcal{F}_k^{Y(s)}] \bigr\rangle}, 
		\end{equation}
		where ``$\circ$'' represents the Hadamard product, $\Phi_n \triangleq \delta_{2^{2n}}[1,2^n+2,2\cdot 2^n+3,...,(2^n-2)2^n+2^n-1,2^{2n}]$ and
		\[
		\mathcal{T}^s \triangleq \begin{bmatrix} M_s & & \\ & \ddots & \\ & & M_s \end{bmatrix}_{M_s \times M_s}
		\]  .
	\end{theorem}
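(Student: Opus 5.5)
The plan is the standard reference-probability (Elliott-type) derivation. First form an unnormalized filter under $P^s$, then normalize it using Lemma 2 together with Theorem 1. Define the unnormalized conditional expectation
\[
q_k^s \triangleq \mathbb{E}^s\bigl[\Lambda_k^s x(k) \mid \mathcal{F}_k^{Y(s)}\bigr] \in \mathbb{R}^N .
\]
By Theorem 1 we have $dP/dP^s = \Lambda_k^s$ on the relevant $\sigma$-field. Lemma 2 with $\mathcal{G}=\mathcal{F}_k^{Y(s)}$ then gives
\[
\mathbb{E}[x(k)\mid\mathcal{F}_k^{Y(s)}] = \frac{q_k^s}{\mathbb{E}^s[\Lambda_k^s\mid\mathcal{F}_k^{Y(s)}]} .
\]
Since $x(k)\in\Delta_N$, we have $\mathbf{1}_N^\top x(k)=1$. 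It follows that $\mathbb{E}^s[\Lambda_k^s\mid\mathcal{F}_k^{Y(s)}]=\langle \mathbf{1}_N, q_k^s\rangle$. The normalized filter is therefore $q_k^s/\langle\mathbf{1}_N,q_k^s\rangle$, and the goal reduces to a linear recursion for $q_k^s$.

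For the recursion, write $\Lambda_{k+1}^s x(k+1)=\Lambda_k^s\,\lambda_{k+1}^s\,x(k+1)$. Since $x(k+1)$ is a canonical vector, we get $x(k+1)=\sum_{i}\langle x(k+1),\delta_N^i\rangle\delta_N^i$ and
\[
\lambda_{k+1}^s\langle x(k+1),\delta_N^i\rangle = M_s\sum_j [H^s]_{j,i}\langle y^s(k+1),\delta_{M_s}^j\rangle\langle x(k+1),\delta_N^i\rangle .
\]
The factor multiplying $\langle x(k+1),\delta_N^i\rangle$ is exactly the $i$th entry of $(H^s)^\top\mathcal{T}^s y^s(k+1)$, because $\mathcal{T}^s=M_sI_{M_s}$. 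This gives the pathwise identity
\[
\lambda_{k+1}^s x(k+1)=\bigl[(H^s)^\top\mathcal{T}^s y^s(k+1)\bigr]\circ x(k+1).
\]
Under $P^s$, the observation $y^s(k+1)$ is independent of the state history. Hence $y^s(k+1)$ is $\mathcal{F}_{k+1}^{Y(s)}$-measurable and can be pulled out of the conditional expectation, which leaves $\mathbb{E}^s[\Lambda_k^s x(k+1)\mid\mathcal{F}_{k+1}^{Y(s)}]$.

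The main obstacle is the next step: showing that $\mathcal{F}_{k+1}^{Y(s)}$ can be replaced by $\mathcal{F}_k^{Y(s)}$, and then evaluating the one-step prediction. I would argue as follows, using (13) and the $P^s$-independence of $y^s(k+1)$ from $(x,u)$ and from past observations. First condition on the larger $\sigma$-field $\mathcal{F}_k^X\cup\mathcal{F}_k^U\cup\mathcal{F}_{k+1}^{Y(s)}$. By (12),
\[
\mathbb{E}^s[x(k+1)\mid\cdots]=Fu(k)x(k).
\]
Under the feedback (9), $u(k)=Kx(k)$, so $Fu(k)x(k)=FKx(k)x(k)=FK\Phi_n x(k)$, using the power-reducing identity $x\ltimes x=\Phi_n x$ for $x\in\Delta_N$. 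One should verify that the columns of $\Phi_n$ listed in the statement are exactly $\delta_N^i\ltimes\delta_N^i=\delta_{N^2}^{(i-1)N+i}$. The tower property, together with the fact that $\Lambda_k^s$ is measurable with respect to the larger $\sigma$-field, then yields
\[
\mathbb{E}^s[\Lambda_k^s x(k+1)\mid\mathcal{F}_{k+1}^{Y(s)}]=FK\Phi_n\,\mathbb{E}^s[\Lambda_k^s x(k)\mid\mathcal{F}_{k+1}^{Y(s)}]=FK\Phi_n q_k^s .
\]
Here the independence of $y^s(k+1)$ under $P^s$ is what permits dropping $y^s(k+1)$ from the conditioning. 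Combining the pieces gives the unnormalized recursion
\[
q_{k+1}^s=\bigl[(H^s)^\top\mathcal{T}^s y^s(k+1)\bigr]\circ\bigl[FK\Phi_n q_k^s\bigr].
\]

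Finally, normalize the recursion. Write $\hat x_k=q_k^s/\langle\mathbf{1},q_k^s\rangle$, divide the recursion by $\langle\mathbf{1},q_k^s\rangle$, and use the linearity of both sides in $q_k^s$. The Hadamard term becomes $[(H^s)^\top\mathcal{T}^sy^s(k+1)]\circ[FK\Phi_n\hat x_k]$. Its entry sum is $\langle\mathbf{1},a\circ b\rangle=\langle a,b\rangle$, which is the stated denominator. When this inner product vanishes, the ``otherwise'' branch of Lemma 2 applies. This yields the recursion in the statement.
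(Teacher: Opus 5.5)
Your proposal is correct and takes essentially the same route as the paper's proof in Appendix C. You form the unnormalized filter $\mathbb{E}^s[\Lambda_k^s x(k)\mid\mathcal{F}_k^{Y(s)}]$, normalize it by its entry sum via Lemma 2 and Theorem 1, pull out $(H^s)^{\top}\mathcal{T}^s y^s(k+1)$ as a Hadamard factor, and use the tower property with (12) and $u(k)=Kx(k)$ to produce $FK\Phi_n$. The one difference is in your favor: you explicitly justify replacing $\mathcal{F}_{k+1}^{Y(s)}$ by $\mathcal{F}_k^{Y(s)}$ through the $P^s$-independence of $y^s(k+1)$ given by (13), a step the paper performs silently.
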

	
	\begin{proof}
		See Appendix C.
	\end{proof}
	
	Having derived the recursive formulation for each node's local conditional state expectation, we next address the distributed fusion problem. While each sensor can independently compute its local posterior estimate from its own observation sequence, these heterogeneous local estimates cannot fully leverage the redundant information across the network. To enable collaborative state estimation with only local communication, we propose a distributed stochastic approximation algorithm that aims to track the network-wide average of all local conditional expectations. Formally, let the \(l\)-th component of \(\mathbb{E}[x(k) \mid \mathcal{F}_k^{Y(s)}]\) be denoted by \(e_k^s(l)\), i.e.,
	$e_k^s(l) \triangleq \bigl\langle \mathbb{E}[x(k) \mid \mathcal{F}_k^{Y(s)}], \delta_N^l \bigr\rangle, l \in [1:N].$
	Let \(\hat{\theta}_k^s(l)\) denote the consensus filter state at sensor \(s \in \mathcal{N}\) at time \(k\), which serves as the local estimate of the global average target
	$
	\theta_k^*(l) \triangleq \frac{1}{|\mathcal{N}|} \sum_{s \in \mathcal{N}} e_k^s(l).
	$
	Each sensor $s$ employs a stochastic approximation algorithm to estimate $\theta_k^*(l)$ using the input messages $e_k^{s'}(l), s' \in \mathcal{N}_s \cup \{s\}$, and consensus filter states $\hat{\theta}_k^s(l)$ only from its immediate neighbors. The state of each sensor $s \in \mathcal{N}$ is updated using the following algorithm
	\begin{equation}
		\begin{aligned}
			\hat{\theta}_k^s(l) &= \bigl(1 - \rho(1 + 2\operatorname{Row}_s(A) \cdot \mathbf{1})\bigr) \hat{\theta}_{k-1}^s(l)  + \rho\Bigl(\sum_{j=1}^{|\mathcal{N}|} a_{sj} \hat{\theta}_{k-1}^j(l) + \sum_{i=1}^{|\mathcal{N}|} a_{si} e_k^i(l) + e_k^s(l)\Bigr),
		\end{aligned} \label{eq:17}
	\end{equation}
	where $\rho$ is a fixed small scalar gain called step size, and $A = [a_{ij}]_{i,j \in \mathcal{N}}$ specifies the interconnection topology of the network which is assumed that $a_{ij} > 0$ for $j \in \mathcal{N}_i$ and is zero otherwise. Since the same algorithm is performed for every state $l$ in $\Delta_N$, to simplify the notation, henceforth we omit the notation $l$.
	
	\subsection{Convergence of the Algorithm and Optimal State Estimate}
	
	In this section, we study the convergence of the average consensus algorithm (20). The vectors are defined by $\hat{\theta}_k = [\hat{\theta}_k^s]_{s \in \mathcal{N}}$ and $e_k = [e_k^s]_{s \in \mathcal{N}}$, respectively. Define the error process $\{\varepsilon_k, k \in \mathbb{Z}^+\}$, where the error $\varepsilon_k$ is defined as
	\begin{equation}
		\varepsilon_k = \hat{\theta}_k - \theta_k^* \cdot \mathbf{1}
	\end{equation}
	which represents the error between the consensus filter state vector $\hat{\theta}_k$ and the average $\theta_k^*$. Let $\xi_k \triangleq (e_k, e_{k-1})$ adapted to $\sigma(Y_k , Y_{k-1})$ denotes the extended data at time k.

	\begin{lemma}
		For a given sequence $\{e_k, k \in \mathbb{Z}^+\}$, the error vector $\varepsilon_k$ evolves according to the following stochastic approximation algorithm
		\begin{equation}
			\varepsilon_{k+1} = \varepsilon_k + \rho \bar{\varepsilon}(\varepsilon_k, \xi_{k+1}), \quad k \in \mathbb{Z}^+ 
		\end{equation}
		where $\bar{\varepsilon}(\cdot, \cdot)$ is defined by
		\begin{equation}
			\bar{\varepsilon}(\varepsilon_k, \xi_{k+1}) \triangleq \Lambda \varepsilon_k + \varGamma\bigl(e_{k+1} - \tfrac{1}{|\mathcal{N}|}\mathbf{1}\mathbf{1}^\top e_k\bigr) - \tfrac{1}{|\mathcal{N}|\rho}\mathbf{1}\mathbf{1}^\top(e_{k+1} - e_k). 
		\end{equation}
		The matrices $\Lambda, \varGamma$ are defined by $\Lambda \triangleq \operatorname{diag}\bigl[-(1 + 2\operatorname{Row}_s(A) \cdot \mathbf{1})\bigr]_{s \in \mathcal{N}} + A$ and $\varGamma = I + A$, respectively.
	\end{lemma}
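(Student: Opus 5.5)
The plan is a direct algebraic rewriting: put the update \eqref{eq:17} in vector form, substitute the decomposition $\hat{\theta}_k = \varepsilon_k + \theta_k^*\mathbf{1}$, and collect terms. No probabilistic argument is needed, since the statement holds pathwise for any given sequence $\{e_k\}$.

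\emph{Step 1: vector form of the update.} Write $d_s \triangleq \operatorname{Row}_s(A)\cdot\mathbf{1}$ for the weighted degree of node $s$. Stacking \eqref{eq:17} over $s\in\mathcal{N}$, with the index shifted to $k+1$, gives
\[
\hat{\theta}_{k+1} = \hat{\theta}_k + \rho\bigl(\operatorname{diag}[-(1+2d_s)]\hat{\theta}_k + A\hat{\theta}_k + Ae_{k+1} + e_{k+1}\bigr) = \hat{\theta}_k + \rho\bigl(\Lambda\hat{\theta}_k + \varGamma e_{k+1}\bigr).
\]
Here I use $1-\rho(1+2d_s) = 1 + \rho\cdot(-(1+2d_s))$, together with the definitions $\Lambda = \operatorname{diag}[-(1+2d_s)] + A$ and $\varGamma = I+A$.

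\emph{Step 2: substitute the decomposition.} By definition of $\theta_k^*$ we have $\theta_k^*\mathbf{1} = \tfrac{1}{|\mathcal{N}|}\mathbf{1}\mathbf{1}^\top e_k$, so $\hat{\theta}_k = \varepsilon_k + \tfrac{1}{|\mathcal{N}|}\mathbf{1}\mathbf{1}^\top e_k$. Substituting this into Step 1 and subtracting $\theta_{k+1}^*\mathbf{1}$ yields
\[
\varepsilon_{k+1} = \varepsilon_k + \rho\Lambda\varepsilon_k + \rho\,\theta_k^*\Lambda\mathbf{1} + \rho\varGamma e_{k+1} - \tfrac{1}{|\mathcal{N}|}\mathbf{1}\mathbf{1}^\top(e_{k+1}-e_k).
\]

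\emph{Step 3: the key identity $\Lambda\mathbf{1} = -\varGamma\mathbf{1}$.} This is the only non-bookkeeping point, because $\Lambda$ is not a Laplacian and $\Lambda\mathbf{1}\neq 0$. Computing row by row,
\[
(\Lambda\mathbf{1})_s = -(1+2d_s) + d_s = -(1+d_s) = -(\varGamma\mathbf{1})_s .
\]
Hence
\[
\theta_k^*\Lambda\mathbf{1} = -\varGamma\,\theta_k^*\mathbf{1} = -\varGamma\tfrac{1}{|\mathcal{N}|}\mathbf{1}\mathbf{1}^\top e_k .
\]
This combines with $\rho\varGamma e_{k+1}$ to give $\rho\varGamma\bigl(e_{k+1}-\tfrac{1}{|\mathcal{N}|}\mathbf{1}\mathbf{1}^\top e_k\bigr)$. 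Only row sums of $A$ enter here, so symmetry of $A$ is not required.

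\emph{Step 4: match the stated form.} The last term is rewritten as
\[
-\tfrac{1}{|\mathcal{N}|}\mathbf{1}\mathbf{1}^\top(e_{k+1}-e_k) = \rho\cdot\Bigl(-\tfrac{1}{|\mathcal{N}|\rho}\mathbf{1}\mathbf{1}^\top(e_{k+1}-e_k)\Bigr).
\]
Factoring $\rho$ out of all three terms gives exactly $\varepsilon_{k+1} = \varepsilon_k + \rho\,\bar{\varepsilon}(\varepsilon_k,\xi_{k+1})$ with $\bar{\varepsilon}$ as stated. Finally, $\bar{\varepsilon}$ depends on the data only through $(e_{k+1},e_k) = \xi_{k+1}$, which confirms the claimed dependence. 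The only step I expect to need care is Step 3, namely getting the sign and degree bookkeeping in $\Lambda\mathbf{1}$ right; everything else is routine.
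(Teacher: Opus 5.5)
Your proof is correct. You vectorize (20) as $\hat{\theta}_{k+1} = \hat{\theta}_k + \rho(\Lambda\hat{\theta}_k + \varGamma e_{k+1})$, substitute $\hat{\theta}_k = \varepsilon_k + \tfrac{1}{|\mathcal{N}|}\mathbf{1}\mathbf{1}^\top e_k$, and use $\Lambda\mathbf{1} = -\varGamma\mathbf{1}$, which yields exactly the stated recursion.

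The paper states this lemma without any proof, so your computation supplies the omitted verification rather than offering an alternative to an existing argument. The key identity also follows from $\Lambda = -\varGamma - 2L$, where $L = D - A$ is the Laplacian of Appendix E and $L\mathbf{1} = 0$.
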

	In the following, we define a continuous-time interpolation $\mathcal{I}(t)$ of the error sequence $\{\varepsilon_k\}$ in terms of the step size $\rho$, that is,
	\[
	\mathcal{I}(t) \triangleq 
	\begin{cases}
		\varepsilon_k, & 0 \leq k\rho \leq t < (k+1)\rho, \\
		\varepsilon_0, & t < 0,
	\end{cases}\quad
	and  \ \mathcal{I}_n(t) \triangleq \mathcal{I}(t+n\rho).
	\]
	Define the mean vector field $\bar{\varepsilon}(\cdot)$ by
	\begin{equation}
		\bar{\varepsilon}(\eta) \triangleq \lim_{k \to \infty} \mathbb{E}_\eta\bigl[\bar{\varepsilon}(\eta, \xi_k)\bigr], 
	\end{equation}
	where $\mathbb{E}_\eta[\cdot]$ denotes the expectation with respect to the distribution of $\xi_k$ for a fixed $\eta$. In order to analyse the asymptotic properties of $\varepsilon_k$, we define the following initial problem, that is
	\begin{equation}
		\dot{\eta} = \bar{\varepsilon}(\eta), \quad \eta(0) = \varepsilon_0, 
	\end{equation}
	where $\varepsilon_0$ is the initial condition.
	Define $\varPhi = \bigl[\operatorname{Var}(\mathbb{E}[\langle x(k), \delta_N^\ell \rangle | \mathcal{F}_k^{Y(s)}])\bigr]_{s \in \mathcal{N}}$. Notice that the value of $\mathbb{E}[\langle x(k), \delta_N^\ell \rangle | \mathcal{F}_k^{Y(s)}]$ is less than or equal to 1, then it follows that $\operatorname{Var}(\mathbb{E}[\langle x(k), \delta_N^\ell \rangle | \mathcal{F}_k^{Y(s)}])$ is bounded. Furthermore, we can obtain that the expectation of $|\bar{\varepsilon}(\eta, \xi_k)|$ satisfies
	\begin{equation}
		\mathbb{E}\bigl[|\bar{\varepsilon}(\eta, \xi_k)|\bigr] < \infty, 
	\end{equation}
	and
	\begin{equation}
		\lim_{|\mathcal{N}| \to \infty} \sum_{s \in \mathcal{N}}\frac{1}{s^2}   \operatorname{Var}(\mathbb{E}[\langle x(k), \delta_N^\ell \rangle | \mathcal{F}_k^{Y(s)}]) < \infty. 
	\end{equation}
	Here, we use the strong law of large numbers to specify the mean vector field $\bar{\varepsilon}(\cdot)$ . Define the average
	\begin{equation}
		\bar{\varepsilon}_k(\eta) = \frac{1}{k} \sum_{t=1}^{k} \bar{\varepsilon}(\eta, \xi_t).
	\end{equation}
	
	\begin{proposition}
		If Assumption 2 holds, then there exists a finite $\bar{\varepsilon}(\eta)$ such that $\lim_{k \to \infty} \bar{\varepsilon}_k(\eta) = \bar{\varepsilon}(\eta)$ $\mathbb{P}$-a.s. is satisfied uniformly in $\eta$, where
		\begin{equation}
			\bar{\varepsilon}(\eta) = \Lambda\eta + \varGamma(\bar{e} - \tfrac{1}{|\mathcal{N}|}\mathbf{1}\mathbf{1}^\top \bar{e}),
		\end{equation}
		and $\bar{e} = [\bar{e}^s]_{s \in \mathcal{N}}$, in which $\bar{e}^s$ satisfies \[
		[\bar{e}^s(l)]_{l \in [1:N]} = \frac{\left[(H^s)^\top \mathcal{T}^s y^s(\infty)\right] \circ \left[FK \Phi_n [\bar{e}^s(l)]_{l \in [1:N]}\right]}{\left\langle (H^s)^\top \mathcal{T}^s y^s(\infty), FK \Phi_n [\bar{e}^s(l)]_{l \in [1:N]} \right\rangle}.
		\]
	\end{proposition}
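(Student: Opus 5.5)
The plan is to decompose the empirical average $\bar{\varepsilon}_k(\eta)$ into three pieces using the definition of $\bar{\varepsilon}(\eta,\xi_t)$ from the preceding lemma:
\[
\bar{\varepsilon}_k(\eta)=\Lambda\eta+\varGamma\Bigl(\tfrac{1}{k}\sum_{t=1}^{k}e_t-\tfrac{1}{|\mathcal{N}|}\mathbf{1}\mathbf{1}^\top\tfrac{1}{k}\sum_{t=1}^{k}e_{t-1}\Bigr)-\tfrac{1}{|\mathcal{N}|\rho k}\mathbf{1}\mathbf{1}^\top(e_k-e_0).
\]
The first term is deterministic and affine in $\eta$. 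The last term telescopes, and since every component of $e_k$ lies in $[0,1]$ it is $O(1/(\rho k))$ surely, so it vanishes. Because $\eta$ enters only through $\Lambda\eta$, any almost sure limit of the middle term is automatically uniform in $\eta$. Hence the whole proposition reduces to showing that the Cesàro averages $\frac1k\sum_{t\le k}e_t$ (and the shifted averages $\frac1k\sum_{t\le k}e_{t-1}$, which differ by $O(1/k)$) converge $\mathbb{P}$-a.s. to a deterministic vector $\bar e$. Once that holds, the limit is $\Lambda\eta+\varGamma(\bar e-\frac{1}{|\mathcal{N}|}\mathbf{1}\mathbf{1}^\top\bar e)$, exactly as stated.

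For the ergodic averages, I would work sensor by sensor. By Theorem 2, $e_{k+1}^s$ is a deterministic continuous function $g^s(e_k^s,y^s(k+1))$ of the previous filter and the new observation. So the triple $(x(k),y^s(k),e_k^s)$ is a Markov chain whose first two coordinates, by Assumption 2 and Remark 1, are geometrically ergodic with stationary law $\pi^s$. I would invoke the standard result that the extended chain formed by an HMM together with its normalized filter has a unique invariant measure when the hidden chain is geometrically ergodic. This is the filter-stability/forgetting property: the geometric mixing of $x(k)$ contracts the dependence of $e_k^s$ on its initialisation in the Hilbert projective metric. The pointwise ergodic theorem (or the strong law for bounded functions of ergodic Markov chains, in the spirit of the variance bound stated just before the proposition) then gives a.s. convergence of $\frac1k\sum_t e_t^s$ to a limit $\bar e^s$. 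The limit is independent of the initial condition by uniqueness of the invariant law.

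The remaining step is to identify $\bar e^s$ with the fixed-point relation in the statement. I would pass to the limit in the recursion of Theorem 2, replacing the observation $y^s(k+1)$ by its stationary marginal $y^s(\infty)$ from Remark 1 and the filter by its average. The result is that $\bar e^s$ is a fixed point of the map
\[
v\mapsto\frac{[(H^s)^\top\mathcal{T}^sy^s(\infty)]\circ[FK\Phi_nv]}{\langle(H^s)^\top\mathcal{T}^sy^s(\infty),FK\Phi_nv\rangle}.
\]
I would then show this map sends the simplex into itself continuously, so Brouwer yields existence and finiteness of such a fixed point.

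I expect this last identification to be the main obstacle, since the filter map is nonlinear and the average of $g^s(e,y)$ is not $g^s$ evaluated at the averages. To handle it, I would first try to argue, in the authors' mean-field spirit, that the normalized filter concentrates along the stationary regime. The fallback is a Birkhoff-contraction argument: the numerator is linear in $v$, so the normalized map is a projective contraction on the positive simplex whenever $FK\Phi_n$ is primitive, a property I expect Assumption 2 to supply. This gives uniqueness of the fixed point, and it then suffices to verify that the ergodic limit satisfies the fixed-point identity up to a vanishing error.
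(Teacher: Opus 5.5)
Your reduction is sound and is sharper than the paper's. The telescoping term $-\frac{1}{|\mathcal{N}|\rho k}\mathbf{1}\mathbf{1}^\top(e_k-e_0)$ is $O(1/(\rho k))$ surely. Since $\eta$ enters only through the deterministic $\Lambda\eta$, the difference $\bar{\varepsilon}_k(\eta)-\bar{\varepsilon}(\eta)$ does not depend on $\eta$, so uniformity holds on the whole space, not just on compacts.

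\textbf{The gap is the identification step, and it cannot be closed, because the identity you want is false for the ergodic limit.} Let $Q\triangleq FK\Phi_n$ and let $\pi$ be the stationary law of $x(k)$ under the SFC.
\begin{itemize}
\item By the tower property, $\mathbb{E}[e^s_t]=\mathbb{E}[x(t)]=Q^t\,\mathbb{E}[x(0)]\to\pi$. So if $\frac1k\sum_{t\le k}e^s_t$ has a deterministic a.s.\ limit, bounded convergence forces that limit to be $\pi$, for every sensor $s$.
\item Substitute $v=\pi$ into the fixed-point map, using $\mathcal{T}^s=M_sI$, $y^s(\infty)=H^s\pi$ and $Q\pi=\pi$. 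The result is $(w^s\circ\pi)/\langle w^s,\pi\rangle$ with $w^s\triangleq(H^s)^\top H^s\pi$. This equals $\pi$ only if $w^s$ is constant on the support of $\pi$.
\item Concretely, take a perfect sensor $H^s=I$. Then $e^s_t=x(t)$ exactly for $t\ge1$, and the Ces\`aro mean tends to $\pi$ a.s. Yet $\pi$ is a fixed point only when $\pi$ is uniform on its support.
\end{itemize}
So the error you hope is vanishing is an $O(1)$ bias. It comes from exactly the nonlinearity you flagged: averaging $g^s(e,y)$ is not the same as evaluating $g^s$ at the averages. A Brouwer or Birkhoff argument only locates the fixed point of a map that the ergodic limit does not satisfy.

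\textbf{The paper does not supply the missing idea.} Its proof makes the same unjustified move, ``taking the limit as $k\to\infty$'' in the recursion of Theorem 2 and ``invoking the stationarity''. Its own proof of Lemma 5 sets $\bar e=\lim_k\mathbb{E}[e_k]$, which by the argument above is $\pi$ in every block.

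\textbf{What your first two steps actually establish.} Grant a.s.\ convergence of the Ces\`aro means. Note that uniqueness of the invariant law of the filter process is a delicate Kunita-type result, normally proved under nondegenerate observations, which the $H^s$ here (with zero entries) need not satisfy. Under that grant, you get the displayed formula with $\bar e^s(l)=\pi_l$ for all $s$. Hence $\bar e-\frac{1}{|\mathcal{N}|}\mathbf{1}\mathbf{1}^\top\bar e=0$ and $\bar{\varepsilon}(\eta)=\Lambda\eta$.

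\textbf{With heterogeneous sensors the stated formula fails.} Take $H^1=I$ and the uninformative $H^2=\frac{1}{M_2}\mathbf{1}_{M_2}\mathbf{1}_N^\top$, whose only fixed point is $\pi$. With a non-uniform $\pi$ and an invertible $\varGamma$, the stated $\bar{\varepsilon}(\eta)$ is not $\Lambda\eta$. The fixed-point characterization should therefore be replaced by $\bar e^s=\pi$, not proved.
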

	
	\begin{proof}
		See Appendix D
	\end{proof}
	
	\begin{definition}
		Let \(c\) be a positive constant. The origin is said to be globally asymptotically \(c\)-stable with respect to the initial value problem (25) if both of the following conditions hold:
		(1) For every \(\varepsilon > c\) there exists \(\delta > 0\) such that any trajectory \(\eta(t)\) of (25) with \(\|\eta(0)\| < \delta\) satisfies \(\|\eta(t)\| < \varepsilon\) for all \(t \geq 0\).
		(2) For every initial value \(\eta(0) \in \mathbb{R}^n\) there exists a finite time \(T \geq 0\) such that \(\|\eta(t)\| \leq c\) for all \(t \geq T\).
	\end{definition}
	
	\begin{proposition}
		Consider the initial value problem (25), and take the Lyapunov function candidate $V(\eta) = \frac{1}{2}\eta^\top\eta$. Define the compact set $\Omega_1 \triangleq \{\eta: V(\eta) \leq \frac{1}{2}c_1^2\}$, where the constant $c_1$ is given by
		\begin{equation}
			c_1 = \sqrt{(1+d_{\max})^2 - 2\lambda_2(L)} \cdot \|\bar{e} - \bar{e}_{\mathrm{avg}}\|_2 \cdot |\lambda_{\max}(\Lambda)|^{-1}, 
		\end{equation}
		with $\lambda_2$ and $\lambda_{\max}$ denoting the second-smallest and largest eigenvalue of the corresponding matrix, respectively.Then, the origin is globally asymptotically $c_1$-stable with respect to (25). Specifically, for every $\varepsilon > c_1$ there exists a $\delta > 0$ such that every trajectory of (25) with $\|\eta(0)\| < \delta$ satisfies $\|\eta(t)\| < \varepsilon$ for all $t \geq 0$.
	\end{proposition}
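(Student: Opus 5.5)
The ODE (25) is linear with the mean field from Proposition 3,
\[
\dot\eta=\Lambda\eta+b,\qquad b\triangleq\varGamma\bigl(\bar e-\bar e_{\mathrm{avg}}\bigr),\qquad \bar e_{\mathrm{avg}}\triangleq\tfrac{1}{|\mathcal{N}|}\mathbf{1}\mathbf{1}^\top\bar e .
\]
The plan is a direct Lyapunov argument with $V(\eta)=\tfrac12\eta^\top\eta$.

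\textbf{Step 1: the spectrum of $\Lambda$.} Write $D=\operatorname{diag}(\operatorname{Row}_s(A)\mathbf{1})$ and $L=D-A$ for the graph Laplacian. Then
\[
\Lambda=-(I+2D)+A=-(I+D+L).
\]
Since $A$ is symmetric, $\Lambda$ is symmetric. Moreover $-\Lambda\succeq I$, because $D\succeq 0$ and $L\succeq 0$. So every eigenvalue of $\Lambda$ is $\le -1$. The quantity $|\lambda_{\max}(\Lambda)|$ is exactly $\min_i|\lambda_i(\Lambda)|\ge 1$, and this gives
\[
\eta^\top\Lambda\eta\le-|\lambda_{\max}(\Lambda)|\,\|\eta\|^2 .
\]

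\textbf{Step 2: bounding the forcing term.} Set $z=\bar e-\bar e_{\mathrm{avg}}$. Since $\mathbf{1}^\top z=0$, the vector $z$ lies in $\mathbf{1}^\perp$. Using $\varGamma=I+A=(I+D)-L$,
\[
\|\varGamma z\|^2=z^\top\bigl((I+D)^2-(I+D)L-L(I+D)+L^2\bigr)z .
\]
I would bound $(I+D)^2\preceq(1+d_{\max})^2I$ and control the cross terms with $z^\top Lz\ge\lambda_2(L)\|z\|^2$, which holds on $\mathbf{1}^\perp$. The target is
\[
\|\varGamma z\|\le\sqrt{(1+d_{\max})^2-2\lambda_2(L)}\,\|z\| .
\]
This is the main obstacle. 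The cross term $-2z^\top(I+D)Lz$ only gives $-2\lambda_2\|z\|^2$ cleanly when $D$ is (nearly) regular, and the term $+z^\top L^2z$ works against us. I would first try a regular graph, or the weighted case where these terms combine into $(1+d_{\max})^2-2\lambda_2$ after using $L^2\preceq 2d_{\max}L$. If that fails, I would settle for the cruder bound $\|\varGamma\|_{\mathbf{1}^\perp}\le\max_i|\lambda_i(\varGamma)|$ and accept $c_1$ with the same structure. Either way the result is a constant $\beta$ with $\|b\|\le\beta\|z\|$.

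\textbf{Step 3: the Lyapunov inequality.} Combining Steps 1 and 2,
\[
\dot V=\eta^\top\Lambda\eta+\eta^\top b\le-|\lambda_{\max}(\Lambda)|\,\|\eta\|^2+\|\eta\|\,\|b\| .
\]
This is negative whenever $\|\eta\|>\|b\|/|\lambda_{\max}(\Lambda)|\le c_1$. Two facts follow.
\begin{itemize}
\item $\Omega_1$ is forward invariant, since $\dot V<0$ on its boundary.
\item Any trajectory outside $\Omega_1$ satisfies $\dot V\le-\kappa<0$ on $\{V\ge\tfrac12(c_1+\gamma)^2\}$ for each $\gamma>0$.
\end{itemize}
Hence $\|\eta(t)\|\le c_1$ after a finite time; more precisely, the trajectory enters every neighborhood of $\Omega_1$ in finite time. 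To get the finite-time statement of Definition 1(2) literally, I would use the explicit solution
\[
\eta(t)=e^{\Lambda t}\eta(0)+\int_0^t e^{\Lambda(t-\tau)}b\,d\tau .
\]
This converges exponentially to $\eta^*=-\Lambda^{-1}b$, and $\|\eta^*\|\le\|b\|/|\lambda_{\max}(\Lambda)|$. If the inequality is strict, $\|\eta(t)\|\le c_1$ holds after finite $T$. In the boundary case I would note that $\|\eta^*\|<c_1$ unless equality holds throughout, and otherwise interpret condition (2) for every $\varepsilon>c_1$.

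\textbf{Step 4: condition (1).} Given $\varepsilon>c_1$, take $\delta=\varepsilon$. Since $V$ is nonincreasing outside $\Omega_1$ and $\Omega_1$ is invariant, $\|\eta(t)\|\le\max(\|\eta(0)\|,c_1)<\varepsilon$ for all $t\ge 0$.
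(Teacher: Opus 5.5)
\textbf{Gap in Step 2.} Everything after Step 2 depends on it. The invariance of $\Omega_1$, the choice $\delta=\varepsilon$ in Step 4, and the estimate $\|\eta^*\|\le c_1$ all need $\|\varGamma z\|\le\sqrt{(1+d_{\max})^2-2\lambda_2(L)}\,\|z\|$ with exactly the constant that appears in $c_1$.

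Substituting $\varGamma=(I+D)-L$ inside the square leaves the cross term $-2z^\top(I+D)Lz$. This is not a multiple of $z^\top Lz$ unless $D$ is scalar. After bounding $\|(I+D)z\|^2\le(1+d_{\max})^2\|z\|^2$, you would need $2z^\top(I+D)Lz-z^\top L^2z\ge 2\lambda_2(L)\|z\|^2$, and you give no way to get it. As you note yourself, the computation with $L^2\preceq 2d_{\max}L$ only closes for regular graphs.

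Your fallback $\|\varGamma z\|\le\max_i|\lambda_i(\varGamma)|\,\|z\|=(1+\|A\|_2)\|z\|$ proves a different statement. For nearly regular connected weighted graphs, $\|A\|_2$ is close to $d_{\max}$. Then $(1+\|A\|_2)^2$ exceeds $(1+d_{\max})^2-2\lambda_2(L)$, so your radius is larger than the stated $c_1$, and $c_1$-stability does not follow.

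\textbf{The missing idea.} Do not expand $\varGamma$ around $I+D$ at all. Write $\varGamma^\top\varGamma=(I+A)^2=I+2A+A^2$ and use $A=D-L$ only in the linear term. For $z\perp\mathbf{1}$, Courant--Fischer gives $z^\top Az=z^\top Dz-z^\top Lz\le(d_{\max}-\lambda_2(L))\|z\|^2$. Bound the quadratic term crudely: $z^\top A^2z=\|Az\|^2\le\|A\|_2^2\|z\|^2\le d_{\max}^2\|z\|^2$, since a symmetric nonnegative matrix has spectral norm at most its maximal row sum. Adding the three pieces gives $1+2d_{\max}-2\lambda_2(L)+d_{\max}^2=(1+d_{\max})^2-2\lambda_2(L)$, with no regularity assumption. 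This is the paper's argument.

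\textbf{The remaining steps.} With that in place, your Steps 1, 3 and 4 go through. In one respect they are more careful than the paper. The paper asserts a uniform drift $\dot V\le-\beta$ on all of $\mathbb{R}^n\setminus\Omega_1$, but the estimate only gives $\dot V\le-|\lambda_{\max}(\Lambda)|\,\|\eta\|\,(\|\eta\|-c_1)$, which degenerates at $\partial\Omega_1$. Your explicit-solution argument gives finite-time entry when $\|\Lambda^{-1}b\|<c_1$.

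Your caution about the equality case is also warranted. With no edges, $\Lambda=-I$, $\varGamma=I$ and $c_1=\|z\|$. For $z\neq 0$, the trajectory from $\eta(0)=2z$ is $(1+e^{-t})z$, which never enters $\Omega_1$.
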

	
	\begin{proof}
		See Appendix E.
	\end{proof}
	Subsequently, we employ a stochastic stability approach to establish the \emph{recurrence} of the process $\{\varepsilon_k\}$, meaning that the error process $\{\varepsilon_k\}$ visits some compact set infinitely often with probability one. We define the filtration $\{\widetilde{\mathcal{F}}_k\}$ by $\widetilde{\mathcal{F}}_k \triangleq [\widehat{\mathcal{F}}_k^s]_{s \in \mathcal{N}}$, where each $\sigma$-algebra $\widehat{\mathcal{F}}_k^s$ is given by $\widehat{\mathcal{F}}_k^s = \sigma\bigl(\{\varepsilon_0^s\} \cup (\bigcup_{s' \in \mathcal{N}_s \cup \{s\}} \{y^{s'}(1), \ldots, y^{s'}(k)\})\bigr)$. Let $\mathbb{E}_k[\cdot]$ denote the conditional expectation operator with respect to $\widetilde{\mathcal{F}}_k$. The discount factor $c_k^i$ is defined as $c_k^i = (1-\rho)^{i-k+1}$ for $i \geq k$, and $c_k^i = 1$ for $i < k$. The discounted perturbation $\Delta_k(\eta): \mathbb{R}^n \to \mathbb{R}^n$ is defined by
	\begin{equation}
		\Delta_k(\eta) = \sum_{i=k}^{\infty} \rho c_{k+1}^i \mathbb{E}_k\bigl[\bar{\varepsilon}(\eta, \xi_{i+1}) - \bar{\varepsilon}(\eta)\bigr]. \label{eq:27}
	\end{equation}
	Observe that the supremum of the infinite sum is finite $\sup_k \sum_{i=k}^{\infty} \rho c_{k+1}^i < \infty$. Consequently, it follows that
	\begin{equation}
		\mathbb{E}_k \left[ \Delta_k(\eta) \right] = \sum_{i=k+1}^{\infty} \rho c_{k+1}^i \mathbb{E}_k\bigl[\bar{\varepsilon}(\eta, \xi_{i+1}) - \bar{\varepsilon}(\eta)\bigr] \quad \mathbb{P}\text{-a.s.}
	\end{equation}
	We define the perturbed stochastic Lyapunov function candidate as
	\begin{equation}
		V_k(\varepsilon_k) \triangleq V(\varepsilon_k) + \left. \nabla V(\eta) \right|_{\eta=\varepsilon_k} \Delta_k(\varepsilon_k), \label{eq:29}
	\end{equation}
	where $\nabla V(\cdot)$ denotes the gradient of $V(\cdot)$.
	
	\begin{assumption}
		The step size $\rho$ is assumed to be strictly positive and bounded above by $2(1+3d_{\max})^{-1}$, i.e., $0 < \rho < 2(1+3d_{\max})^{-1}$, where $d_{max} \triangleq \max_{i \in \mathcal{N}} \sum_{j \in \mathcal{N}_i} a_{ij}.$
	\end{assumption}
	
	\begin{theorem}
		Suppose Assumptions 1, 2, and 3 are satisfied. Let the real-valued Lyapunov function $V(\cdot)$ for the initial value problem (25) possesses bounded second mixed partial derivatives. Then, the perturbed stochastic Lyapunov function $V_k(\varepsilon_k)$ is an $\mathcal{F}_k$-supermartingale for the stopped process $\{\varepsilon_k\}$ up to the first hitting time of some compact set $\Omega_2=\{\eta : V(\eta) \le c_2\}$ with $c_2\in(0,\infty)$.
	\end{theorem}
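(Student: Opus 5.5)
The plan is the standard perturbed-Lyapunov argument of stochastic approximation (Kushner--Yin type). We compute the one-step conditional increment $\mathbb{E}_k[V_{k+1}(\varepsilon_{k+1})] - V_k(\varepsilon_k)$ and show that it is nonpositive whenever $\varepsilon_k \notin \Omega_2$. Optional stopping then shows that the stopped process is a supermartingale.

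\textbf{Step 1: expand $V$.} Since $V(\eta)=\frac12\eta^\top\eta$ has constant Hessian $I$, exact expansion with Lemma 3 gives
\[
V(\varepsilon_{k+1}) - V(\varepsilon_k) = \rho\,\varepsilon_k^\top \bar{\varepsilon}(\varepsilon_k,\xi_{k+1}) + \tfrac{\rho^2}{2}\|\bar{\varepsilon}(\varepsilon_k,\xi_{k+1})\|^2 .
\]
Add and subtract $\bar{\varepsilon}(\varepsilon_k)$ from Proposition 1. The mean part is $\rho\,\varepsilon_k^\top(\Lambda\varepsilon_k + \varGamma(\bar e - \bar e_{\mathrm{avg}}))$. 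As in Appendix E (Proposition 2), this is bounded by
\[
-\rho|\lambda_{\max}(\Lambda)|\,\|\varepsilon_k\|^2 + \rho\,\|\varGamma(\bar e-\bar e_{\mathrm{avg}})\|\,\|\varepsilon_k\| ,
\]
because $\Lambda = -(I+D) - L$ is negative definite, with $D$ the degree matrix and $L=D-A$.

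\textbf{Step 2: the perturbation absorbs the fluctuation.} By the definition \eqref{eq:27} and the identity for $\mathbb{E}_k[\Delta_k]$, the function $\Delta_k$ satisfies
\[
\mathbb{E}_k[\Delta_{k+1}(\eta)] - \Delta_k(\eta) = -\rho\,\mathbb{E}_k[\bar{\varepsilon}(\eta,\xi_{k+1}) - \bar{\varepsilon}(\eta)] + O(\rho)\,\mathbb{E}_k[\Delta_{k+1}(\eta)] .
\]
The $O(\rho)$ term comes from the factor $(1-\rho)$ relating $c_{k+2}^i$ to $c_{k+1}^i$. Hence in $\mathbb{E}_k[\nabla V(\varepsilon_{k+1})\Delta_{k+1}(\varepsilon_{k+1})] - \nabla V(\varepsilon_k)\Delta_k(\varepsilon_k)$ the leading term cancels the zero-mean fluctuation $\rho\,\varepsilon_k^\top\mathbb{E}_k[\bar{\varepsilon}(\varepsilon_k,\xi_{k+1})-\bar{\varepsilon}(\varepsilon_k)]$. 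What remains are cross terms from replacing $\varepsilon_k$ by $\varepsilon_{k+1}$ in both $\nabla V$ and $\Delta_{k+1}$.

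The function $\bar{\varepsilon}(\eta,\xi)$ is affine in $\eta$ with $\eta$-independent slope $\Lambda$. So $\Delta_k(\eta)$ depends on $\eta$ only through $\sum_i \rho c^i_{k+1}(\Lambda\eta-\Lambda\eta)=0$, which makes $\Delta_k$ in fact independent of $\eta$. It is bounded by $C\sup_i\|\mathbb{E}_k[e_{i+1}] - \bar e\|$ plus the bounded $\frac{1}{|\mathcal N|\rho}\mathbf 1\mathbf 1^\top(e_{i+1}-e_i)$ term. Geometric ergodicity (Assumption 2), transferred to the filter pair through Theorem 2, makes these conditional means converge geometrically, so $\|\Delta_k\|\le C_\Delta$ uniformly in $k$. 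All cross terms are therefore $O(\rho)\bigl(1+\|\varepsilon_k\|\bigr)$ times $\rho$. The same bound $\|\bar{\varepsilon}(\eta,\xi)\|\le \|\Lambda\|\|\eta\| + C$, with $e_k\in[0,1]^N$, controls the quadratic term: $\tfrac{\rho^2}{2}(\|\Lambda\|\|\varepsilon_k\|+C)^2$.

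\textbf{Step 3: collect terms.} We obtain
\[
\mathbb{E}_k[V_{k+1}] - V_k \le -\rho\bigl(|\lambda_{\max}(\Lambda)| - \tfrac{\rho}{2}\|\Lambda\|^2 - O(\rho)\bigr)\|\varepsilon_k\|^2 + \rho\,C'\|\varepsilon_k\| + \rho\,C'' .
\]
Assumption 3, $\rho<2(1+3d_{\max})^{-1}$, is where we expect the main obstacle: we must check that it keeps the coefficient of $\|\varepsilon_k\|^2$ strictly negative. Note that $\|\Lambda\|\le 1+3d_{\max}$ by Gershgorin, since diagonal entries have modulus at most $1+2d_{\max}$ and off-diagonal row sums are at most $d_{\max}$. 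Also $|\lambda_{\max}(\Lambda)|\ge1$. I would first attempt the cruder requirement $\rho\|\Lambda\|<2$, so that the eigenvalues of $I+\rho\Lambda$ lie in $(-1,1)$, and argue with $\|I+\rho\Lambda\|<1$ directly if the additive form is too lossy.

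Once negativity holds, the right-hand side is negative for $\|\varepsilon_k\|^2 > 2c_2$ with $c_2$ large enough. Define $\tau$ as the first hitting time of $\Omega_2$. For $k<\tau$ this gives $\mathbb{E}_k[V_{k+1\wedge\tau}]\le V_{k\wedge\tau}$, while for $k\ge\tau$ the process is frozen. Integrability follows from boundedness of $\Delta_k$ and the finite second moments of $\varepsilon_k$, which hold by linear recursion with bounded inputs. Together these give the supermartingale property.
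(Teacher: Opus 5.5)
Your plan is essentially the paper's proof: exact expansion of $V$, cancellation of $\rho\varepsilon_k^\top\mathbb{E}_k[r_{k+1}]$ by the $\eta$-independent perturbation, a drift bound $\rho\lambda_{\max}(J)\|\varepsilon_k\|^2+m_1\|\varepsilon_k\|+C_0$, and optional stopping outside $\Omega_2$; two simplifications are that $\Delta_k$ is bounded simply because $\sum_{i\ge k}\rho c_{k+1}^i=1$ and the entries of $e_k$ lie in $[0,1]$, so no filter ergodicity is needed, and that the $\tfrac{1}{|\mathcal N|\rho}$ term makes the cross terms larger than $O(\rho^2)$, which is harmless since $\rho$ is fixed. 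Your fallback is not optional, however: the bracket $|\lambda_{\max}(\Lambda)|-\tfrac{\rho}{2}\|\Lambda\|^2$ can be negative under Assumption 3 (two nodes with $a_{12}=a_{21}=1$ and $\rho=0.4$ give $2-3.2<0$), so you must keep $\rho\varepsilon_k^\top\Lambda\varepsilon_k+\tfrac{\rho^2}{2}\|\Lambda\varepsilon_k\|^2=\rho\varepsilon_k^\top J\varepsilon_k$ with $J=(I+\tfrac{\rho}{2}\Lambda)\Lambda$, as the paper does, which is negative definite exactly when the spectrum of $I+\rho\Lambda$ lies in $(-1,1)$.
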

	
	\begin{proof}
		See Appendix F.
	\end{proof}
	The subsequent theorem establishes the recurrence property of the error sequence $\{\varepsilon_k\}$.
	\begin{theorem}
		The perturbed stochastic Lyapunov function $V_k(\varepsilon_k)$ given by \eqref{eq:29} is a real-valued supermartingale adapted to the filtration $\widetilde{\mathcal{F}}_k$. Provided that the initial expectation $\mathbb{E}[V(\varepsilon_0)]$ is finite, there exists a compact set $L_p$ such that the sequence $\{\varepsilon_k\}$ visits $L_p$ infinitely often with probability at least $p$, for any $p \in (0,1]$.
	\end{theorem}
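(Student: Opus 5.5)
The plan is to read recurrence off the supermartingale property from Theorem 3, using a nonnegative-supermartingale maximal inequality together with a Borel--Cantelli type argument for the stopped process. Throughout, $\Omega_2=\{\eta: V(\eta)\le c_2\}$ is the compact set from Theorem 3, and $\tau_k$ denotes the first hitting time of $\Omega_2$ after time $k$.

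\textbf{Step 1 (the perturbation is uniformly bounded).} Show that $\sup_{k,\eta}\|\Delta_k(\eta)\|/(1+\|\eta\|)\le C$ almost surely. By Lemma 3, $\bar{\varepsilon}(\eta,\xi)-\bar{\varepsilon}(\eta)$ does not depend on $\eta$, since the $\Lambda\eta$ terms cancel. It is built only from the vectors $e_k$, whose entries lie in $[0,1]$, plus the term $\tfrac{1}{|\mathcal{N}|\rho}\mathbf{1}\mathbf{1}^\top(e_{k+1}-e_k)$, which is bounded for fixed $\rho$. Combined with $\sup_k\sum_{i\ge k}\rho c_{k+1}^i<\infty$, this gives $\|\Delta_k\|\le C_0$ uniformly. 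Since $\nabla V(\eta)=\eta$, it follows that
\[
|V_k(\varepsilon_k)-V(\varepsilon_k)|\le C_0\|\varepsilon_k\|\le \tfrac14\|\varepsilon_k\|^2+C_0^2 .
\]
Hence $V_k(\varepsilon_k)\ge \tfrac12 V(\varepsilon_k)-C_0^2$, so $V_k+C_0^2$ is bounded below and is comparable to $V$. The same bound also gives $\mathbb{E}[V_0(\varepsilon_0)]<\infty$ whenever $\mathbb{E}[V(\varepsilon_0)]<\infty$.

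\textbf{Step 2 (maximal inequality).} By Theorem 3, the stopped process $V_{k\wedge\tau_k}(\varepsilon_{k\wedge\tau_k})+C_0^2$ is a nonnegative supermartingale with respect to $\widetilde{\mathcal{F}}_k$. The supermartingale maximal inequality then gives, for any $\lambda>0$,
\[
\mathbb{P}\Bigl(\sup_{k\le j<\tau_k}V_j(\varepsilon_j)+C_0^2\ge\lambda\Bigr)\le \frac{\mathbb{E}[V_k(\varepsilon_k)]+C_0^2}{\lambda}.
\]
Given $p$, pick $\lambda$ large enough that the right-hand side at $k=0$ is at most $1-p$. Set
\[
L_p\triangleq\{\eta:\tfrac12V(\eta)\le \lambda+C_0^2\}\cup\Omega_2 ,
\]
which is compact because $V(\eta)=\tfrac12\|\eta\|^2$. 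With probability at least $p$, the path stays in $L_p$ until it hits $\Omega_2\subset L_p$.

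\textbf{Step 3 (hitting is almost sure).} Show that $\tau_k<\infty$ almost surely for each $k$. In the proof of Theorem 3 (Appendix F), the supermartingale inequality is in fact a strict drift off $\Omega_2$:
\[
\mathbb{E}_k[V_{k+1}(\varepsilon_{k+1})]-V_k(\varepsilon_k)\le-\rho\gamma \quad\text{on } \{\varepsilon_k\notin\Omega_2\},
\]
for some $\gamma>0$. This comes from Proposition 2's derivative bound $\langle\nabla V,\bar{\varepsilon}(\eta)\rangle\le -|\lambda_{\max}(\Lambda)|\,\|\eta\|(\|\eta\|-c_1)$, with $c_2$ chosen above $\tfrac12c_1^2$, and from Assumption 3, which controls the $O(\rho^2)$ remainder. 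Summing the drift and using the lower bound from Step 1 gives $\rho\gamma\,\mathbb{E}[\tau_k\wedge n - k]\le \mathbb{E}[V_k]+C_0^2$. Letting $n\to\infty$ yields $\mathbb{E}[\tau_k]<\infty$.

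\textbf{Step 4 (infinitely many visits).} Iterate the argument through a sequence of stopping times. After each exit from $L_p$, the drift from Step 3 forces a return to $\Omega_2\subset L_p$ in almost surely finite time. Applying the strong Markov property at these times shows the event ``visits $L_p$ infinitely often'' has probability at least $p$; in fact it has probability $1$, since every return time is finite. Taking $p=1$ uses the same set with $\lambda$ arbitrary.

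The main obstacle is Step 3. Theorem 3 as stated only provides the supermartingale property, whereas recurrence needs a strictly negative drift bounded away from zero off $\Omega_2$. My first attempt would be to extract this directly from the computation in Appendix F, taking $c_2$ strictly larger than $\tfrac12c_1^2$ so that Proposition 2 gives a uniform negative margin. If that fails, I would fall back on martingale convergence: a bounded-below supermartingale converges almost surely, and this is incompatible with the path staying outside $\Omega_2$ forever given a nondegenerate drift.
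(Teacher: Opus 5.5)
Your proposal is correct and follows essentially the paper's argument in Appendix G. You bound the perturbation to control $V_k$ from below, sum the strict negative drift from Appendix F's completing-the-square bound to force the hitting time of $\Omega_2$ to be a.s. finite, iterate over return times, and take $L_p=\Omega_2$. Your maximal-inequality Step 2 is therefore superfluous, and your per-$k$ form of Step 3 already gives infinitely many visits by intersecting over $k$, with no strong Markov property needed.

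Two small points. First, your deterministic bound $V_k\ge\tfrac12V-C_0^2$ is cleaner than the paper's appeal to $\sup_k\mathbb{E}\|\varepsilon_{k\wedge\tau}\|^2<\infty$, which the paper never justifies uniformly in $k$. Second, your justification of the drift is off. The $\tfrac{1}{|\mathcal{N}|\rho}\mathbf{1}\mathbf{1}^\top(e_{k+1}-e_k)$ part of $r_{k+1}$ makes $\tfrac12\rho^2\mathbb{E}_k\|r_{k+1}\|^2$ of order one rather than $O(\rho^2)$. So the negative drift holds only off the much larger $\Omega_2$ of Appendix F, not off a level set just above $\Omega_1$ as Proposition 2 would suggest.
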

	
	\begin{proof}
		See Appendix G.
	\end{proof}
	In what follows, we establish the almost sure convergence of the error sequence $\{\varepsilon_k\}$ to a bounded invariant set. If the origin is globally asymptotically $c_1$-stable for the initial value problem (25) with some invariant level set, then $\{\varepsilon_k\}$ converges to the invariant set $\Omega_1$ almost surely.
	
	\begin{lemma}
		There exist nonnegative measurable functions $\alpha(\cdot)$ and $\beta_k(\cdot)$ of $\eta$ and $\xi$, respectively, such that $\alpha(\cdot)$ is bounded on every bounded subset of the $\eta$-space, and
		\[
		\|\bar{\varepsilon}(\eta, \xi) - \bar{\varepsilon}(\tilde{\eta}, \xi)\| \leq \alpha(\eta - \tilde{\eta})\beta_k(\xi),
		\]
		where $\alpha(\eta) \to 0$ as $\eta \to \mathbf{0}$, and $\beta_k(\cdot)$ satisfies
		\[
		\mathbb{P}\Bigl[\limsup_{l \to \infty} \sum_{k=l}^{\lfloor \rho^{-1}(t_\ell + \bar{\tau}) \rfloor} \rho \beta_k(\xi_k) < \infty\Bigr] = 1,
		\]
		for some $\bar{\tau} > 0$. 
	\end{lemma}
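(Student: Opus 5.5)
The plan is to use the explicit form of the vector field in Lemma 3. In
\[
\bar{\varepsilon}(\eta,\xi_{k+1}) = \Lambda\eta + \varGamma\bigl(e_{k+1} - \tfrac{1}{|\mathcal{N}|}\mathbf{1}\mathbf{1}^\top e_k\bigr) - \tfrac{1}{|\mathcal{N}|\rho}\mathbf{1}\mathbf{1}^\top(e_{k+1}-e_k),
\]
the dependence on $\eta$ is affine, and the data $\xi$ enters only through terms that do not involve $\eta$. Taking the difference at $\eta$ and $\tilde\eta$ with the same $\xi$, every data-dependent term cancels, leaving
\[
\bar{\varepsilon}(\eta,\xi)-\bar{\varepsilon}(\tilde\eta,\xi)=\Lambda(\eta-\tilde\eta).
\]
The Lipschitz-type bound is therefore deterministic.

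We then choose $\alpha(\eta)\triangleq\|\Lambda\|_2\,\|\eta\|$ and $\beta_k(\xi)\equiv 1$. Both are nonnegative and measurable. The function $\alpha$ is continuous, hence bounded on bounded subsets of the $\eta$-space, and $\alpha(\eta)\to 0$ as $\eta\to\mathbf{0}$. The norm $\|\Lambda\|_2$ is finite and explicit. Since $\Lambda$ is symmetric (the graph is undirected), $\|\Lambda\|_2=\max_i|\lambda_i(\Lambda)|$. Gershgorin's theorem gives the bound $\|\Lambda\|_2\le 1+3d_{\max}$, because row $s$ has diagonal $-(1+2\sum_j a_{sj})$ and off-diagonal sum $\sum_j a_{sj}\le d_{\max}$. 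The inequality $\|\Lambda(\eta-\tilde\eta)\|\le\alpha(\eta-\tilde\eta)\beta_k(\xi)$ then holds for every $\xi$.

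It remains to verify the summability condition. With $\beta_k\equiv 1$,
\[
\sum_{k=l}^{\lfloor\rho^{-1}(t_\ell+\bar\tau)\rfloor}\rho\,\beta_k(\xi_k)\le \rho\bigl(\lfloor\rho^{-1}(t_\ell+\bar\tau)\rfloor-l+1\bigr).
\]
We interpret $t_\ell=\ell\rho$, the interpolation time of the $\ell$-th iterate used in the definition of $\mathcal{I}_n$, and take the summation index and the window index to coincide ($l=\ell$). The right-hand side is then at most $\bar\tau+\rho$ for every $\ell$. Hence the $\limsup$ is bounded by the deterministic constant $\bar\tau+\rho<\infty$, so the event holds surely and in particular with probability one, for any $\bar\tau>0$.

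The only delicate point is notational rather than mathematical. The statement mixes $l$ and $\ell$ and leaves $t_\ell$ undefined. I would first fix the convention $t_\ell=\ell\rho$, consistent with the continuous-time interpolation $\mathcal{I}(t)$, and remark that any definition of $t_\ell$ with $\rho^{-1}t_\ell-\ell$ bounded gives the same conclusion. A second point to check is that no hidden $\eta$-dependence enters $\xi_k$: for instance, $e_k$ does not depend on $\hat\theta$. This follows from the recursion in Theorem 2, where $e_k^s$ is computed from $\{y^s(t)\}$ alone. Because of this, the cancellation above is exact.
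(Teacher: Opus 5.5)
Your proof is correct and takes the same route as the paper's. Both observe that the $\xi$-dependent terms cancel, leaving $\Lambda(\eta-\tilde\eta)$, bound $\|\Lambda\|_2\le 1+3d_{\max}$ via Gershgorin, and choose $\alpha$ proportional to $\|\eta\|$ with a constant $\beta_k$; the paper puts the factor $1+3d_{\max}$ into $\beta_k$ rather than into $\alpha$, which makes no difference. Your explicit check of the summability condition, taking $t_\ell=\ell\rho$, fills in a step the paper leaves implicit.
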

	
	\begin{proof}
		Applying the Gersgorin theorem to the negative definite matrix $\Lambda$, we obtain that its minimum eigenvalue satisfies $\lambda_{1}(\Lambda) \geq -(1+3d_{\max})$. Consequently, the following norm inequality holds:
		\[
		\|\Lambda(\eta - \tilde{\eta})\| \leq (1+3d_{\max})\|\eta - \tilde{\eta}\|.
		\]
		Here, we set $\beta_k(\xi) = 1+3d_{\max}$ for all $\xi$ and $\bar{\tau} > 0$. Furthermore, the function $\alpha(\eta) = \|\eta\|$ is bounded on every bounded subset in the space of $\eta$ and converges to zero as $\eta \to \mathbf{0}$.
	\end{proof}

	\begin{lemma}
		Suppose Assumption 2 holds. For any fixed \(\eta \in \mathbb{R}^{rN}\), define the cumulative perturbation process
		\[
		\overline{\varepsilon}_{\eta}(t)=\sum_{i=0}^{\left\lfloor t \rho^{-1}\right\rfloor-1} \rho\left[\overline{\varepsilon}\left(\eta, \xi_{i+1}\right)-\overline{\varepsilon}(\eta)\right].
		\]
		Then the rate of change of \(\overline{\varepsilon}_{\eta}(t)\) tends to zero almost surely as \(t \to \infty\), i.e.,
		\[
		\lim_{t \to \infty} \frac{1}{t} \overline{\varepsilon}_{\eta}(t) = 0, \quad \mathbb{P}\text{-a.s.}
		\]
	\end{lemma}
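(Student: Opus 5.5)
The plan is to reduce the claim to Proposition 2, the strong-law statement that $\bar{\varepsilon}_k(\eta)\to\bar{\varepsilon}(\eta)$ $\mathbb{P}$-a.s. First I will rewrite the cumulative perturbation at the integer grid points. Set $K(t)\triangleq\lfloor t\rho^{-1}\rfloor$. Then
\[
\overline{\varepsilon}_{\eta}(t)=\rho\sum_{i=1}^{K(t)}\bigl[\bar{\varepsilon}(\eta,\xi_i)-\bar{\varepsilon}(\eta)\bigr]
=\rho K(t)\bigl[\bar{\varepsilon}_{K(t)}(\eta)-\bar{\varepsilon}(\eta)\bigr],
\]
using the average defined in (28). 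Since $\rho K(t)\le t$ and $\rho K(t)/t\to 1$ as $t\to\infty$, we get
\[
\Bigl\|\tfrac1t\overline{\varepsilon}_\eta(t)\Bigr\|\le \bigl\|\bar{\varepsilon}_{K(t)}(\eta)-\bar{\varepsilon}(\eta)\bigr\|.
\]
As $K(t)\to\infty$, the right-hand side tends to $0$ almost surely by Proposition 2. The exceptional null set does not depend on $t$, and Proposition 2 even gives uniformity in $\eta$. This completes the argument modulo Proposition 2.

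To make the argument less dependent on how Proposition 2 was proved, I will also check directly why the averages converge. Write $\bar{\varepsilon}(\eta,\xi_i)$ from (23) as the sum of three pieces:
\begin{itemize}
\item the deterministic term $\Lambda\eta$, which cancels against the same term in $\bar{\varepsilon}(\eta)$;
\item the term $\varGamma\bigl(e_i-\tfrac1{|\mathcal N|}\mathbf 1\mathbf 1^\top e_{i-1}\bigr)$, whose time average converges to $\varGamma\bigl(\bar e-\tfrac1{|\mathcal N|}\mathbf 1\mathbf 1^\top\bar e\bigr)$;
\item the increment term $-\tfrac{1}{|\mathcal N|\rho}\mathbf 1\mathbf 1^\top(e_i-e_{i-1})$.
\end{itemize}
The increment term telescopes. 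Its contribution to $\tfrac1t\overline{\varepsilon}_\eta(t)$ is
\[
-\frac{1}{t\,|\mathcal N|}\mathbf 1\mathbf 1^\top\bigl(e_{K(t)}-e_0\bigr),
\]
which tends to $0$ deterministically because every $e^s_k(l)\in[0,1]$.

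For the middle term, the key point is an ergodic theorem for the filter process. Under Assumption 2, the joint process of $x(k)$, the observations $y^s(k)$, and the filters $e^s_k$ (the latter evolving through the recursion of Theorem 2) is a Markov chain driven by a geometrically ergodic chain. Time averages of the bounded function $e_k$, and of the shifted term $e_{k-1}$, which has the same limit, therefore converge a.s. to the stationary mean $\bar e$. Replacing $e_{k-1}$ by $e_k$ inside the average costs only $O(1/k)$.

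I expect this last step to be the main obstacle: justifying a.s. convergence of the filter averages requires ergodicity, or at least forgetting of the initial condition, of the nonlinear filter recursion, not just of $x(k)$. Here I would rely on Proposition 2 exactly as stated rather than reprove it. Boundedness of the summands, which makes all the terms uniformly integrable, would close any remaining gaps. Finally, I will extend from grid points to general $t$. Because $\overline{\varepsilon}_\eta(t)$ is piecewise constant between grid points and each individual summand is bounded, the discrepancy between $t$ and $\rho K(t)$ contributes only $O(1/t)$.
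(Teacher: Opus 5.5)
Your argument is correct and is essentially the paper's. The paper also writes $\tfrac1t\overline{\varepsilon}_\eta(t)=\tfrac{\rho m}{t}\cdot\tfrac1m\sum_{i<m}\phi_i$ with $m=\lfloor t\rho^{-1}\rfloor$ and kills the average with the strong law for stationary ergodic sequences, whereas you cite that strong law in the form already stated as Proposition~1 (not Proposition~2, which in the paper is the $c_1$-stability result), and your telescoping treatment of the $\tfrac{1}{|\mathcal N|\rho}\mathbf 1\mathbf 1^\top(e_{i}-e_{i-1})$ term and your observation that ergodicity of the filter process $e_k$ itself (not just of $x(k)$) is the real unproved ingredient are sharper than the paper, which simply asserts that $\{e_k\}$ is stationary and ergodic.
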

	
	\begin{proof}
		We first decompose the one-step perturbation term. By the definition of \(\overline{\varepsilon}(\eta, \xi_{k+1})\) in (23) and the stationary mean vector field \(\overline{\varepsilon}(\eta)\) in Proposition 1, we have
		\[
		\overline{\varepsilon}(\eta, \xi_{k+1}) - \overline{\varepsilon}(\eta)
		= \Gamma\left( e_{k+1} - \bar{e} - \frac{1}{|\mathcal{N}|} \mathbf{1}\mathbf{1}^\top (e_k - \bar{e}) \right)
		- \frac{1}{|\mathcal{N}|\rho} \mathbf{1}\mathbf{1}^\top \left( e_{k+1} - \bar{e} - (e_k - \bar{e}) \right),
		\]
		where \(\bar{e} = \lim_{k\to\infty} \mathbb{E}[e_k]\) denotes the stationary mean of the local estimate sequence, and \(\mathbf{1}\) is the all-one vector of compatible dimension.
		Let \(d_k = e_k - \bar{e}\) stand for the deviation of the local estimate from its stationary mean. The right-hand side is a linear transformation of \(d_k\) and \(d_{k+1}\). We denote the resulting zero-mean random sequence by
		\[
		\phi_k \triangleq \overline{\varepsilon}(\eta, \xi_{k+1}) - \overline{\varepsilon}(\eta), \quad k \in \mathbb{Z}^+.
		\]
		Under Assumption 2, the state process \(\{x(k)\}\) is geometrically ergodic under state feedback control. The joint observation process \(\{(x(k), y^s(k))\}\) for each sensor \(s\) inherits geometric ergodicity and admits a unique stationary distribution. Since each entry of \(e_k^s(l)\) is a bounded measurable functional of the observation filtration \(\mathcal{F}_k^{Y(s)}\), the sequence \(\{e_k\}\) is stationary and ergodic. It follows that \(\{\xi_k = (e_k, e_{k-1})\}\) and hence \(\{\phi_k\}\) are stationary ergodic sequences.
		By definition of the mean vector field, the expectation of \(\phi_k\) satisfies
		\[
		\mathbb{E}[\phi_k] = \mathbb{E}\left[\overline{\varepsilon}(\eta, \xi_{k+1})\right] - \overline{\varepsilon}(\eta) = 0.
		\]
		Moreover, each entry of \(e_k\) takes values in \([0,1]\), so \(\{\phi_k\}\) is uniformly bounded and thus integrable.
		Applying the strong law of large numbers for stationary ergodic processes to the zero-mean integrable sequence \(\{\phi_k\}\) yields
		\[
		\lim_{m \to \infty} \frac{1}{m} \sum_{i=0}^{m-1} \phi_i = 0, \quad \mathbb{P}\text{-a.s.}
		\]
		We now connect the discrete sum to the continuous-time process \(\overline{\varepsilon}_\eta(t)\). For any \(t>0\), let \(m = \lfloor t \rho^{-1} \rfloor\). By construction
		\[
		\overline{\varepsilon}_\eta(t) = \rho \sum_{i=0}^{m-1} \phi_i.
		\]
		Note that \(m \rho \leq t < (m+1)\rho\), so \(t / m \to \rho\) as \(t \to \infty\). Taking the time-normalized rate gives
		\[
		\frac{1}{t} \overline{\varepsilon}_\eta(t)
		= \frac{\rho m}{t} \cdot \frac{1}{m} \sum_{i=0}^{m-1} \phi_i.
		\]
		As \(t \to \infty\), the factor \(\rho m / t\) converges to 1, and the sample average converges to 0 almost surely. Therefore
		\[
		\lim_{t \to \infty} \frac{1}{t} \overline{\varepsilon}_{\eta}(t) = 0, \quad \mathbb{P}\text{-a.s.}
		\]
		This completes the proof.
	\end{proof}
	
	\begin{theorem}
		There exists a null set $\Psi$ such that for all $\omega \notin \Psi$, the collection of functions $\{\mathcal{I}_n(\omega, \cdot), n < \infty\}$ is equicontinuous. Let $\mathcal{I}(\omega, \cdot)$ denote the limit of some convergent subsequence $\{\mathcal{I}_{n'}(\omega, \cdot)\}$. It follows that, for $\mathbb{P}$-almost every $\omega \in \Omega$, the limit $\mathcal{I}(\omega, \cdot)$ is a trajectory of the initial value problem (25) contained within the bounded invariant set $\Omega_1$, and the error sequence $\{\varepsilon_k\}$ converges to $\Omega_1$. Furthermore, since the origin is globally asymptotically $c_1$-stable for the initial value problem (25) with the invariant level set $\Omega_1$, the sequence $\{\varepsilon_k\}$ satisfies
		\[
		\limsup_{k\to\infty}\|\varepsilon_k\| \le c_1, \quad \mathbb{P}\text{-a.s.}
		\]
	\end{theorem}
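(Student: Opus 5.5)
The plan is to apply the Kushner--Yin ODE method for constant step-size stochastic approximation with almost sure convergence along subsequences. Its hypotheses are exactly the preceding lemmas: the Lipschitz-type bound on $\bar{\varepsilon}(\cdot,\xi)$ with $\beta_k\equiv 1+3d_{\max}$, the asymptotic rate-of-change condition on $\overline{\varepsilon}_\eta(t)$, and the recurrence from Theorem 5.

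The first step is to fix the null set. Let $\Psi$ be the union of three sets: the exceptional set in Proposition 1, the exceptional sets of the rate-of-change lemma for $\eta$ in a countable dense subset of $\mathbb{R}^{rN}$, and the exceptional set on which $\{\varepsilon_k\}$ fails to return infinitely often to the compact set $L_p$ of Theorem 5 (take $p=1$). For $\omega\notin\Psi$, I write the interpolation as
\[
\mathcal{I}_n(t)=\varepsilon_n+\int_0^t \bar{\varepsilon}(\mathcal{I}_n(s))\,ds + \rho_n(t),
\]
where $\rho_n(t)$ gathers two terms. The first is the discretization error. The second is the noise sum $\sum \rho[\bar{\varepsilon}(\varepsilon_i,\xi_{i+1})-\bar{\varepsilon}(\varepsilon_i)]$.

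To control the noise, I approximate $\varepsilon_i$ by piecewise-constant values taken from the countable dense set. The Lipschitz lemma ($\alpha(\eta)=\|\eta\|$ with bounded $\beta_k$) lets me replace $\varepsilon_i$ by a grid point at a cost of order $\delta\cdot T(1+3d_{\max})$ over a window of length $T$. Each frozen-$\eta$ piece then vanishes uniformly on $[0,T]$ by the rate-of-change lemma; here I use that $\overline{\varepsilon}_\eta(t)/t\to0$ implies the increments $\overline{\varepsilon}_\eta(t_n+s)-\overline{\varepsilon}_\eta(t_n)$ tend to zero uniformly for $s\in[0,T]$. Together these give $\sup_{t\le T}\|\rho_n(t)\|\to0$.

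Equicontinuity then follows. The mean field $\bar{\varepsilon}(\eta)=\Lambda\eta+\varGamma(\bar e-\frac1{|\mathcal N|}\mathbf{1}\mathbf{1}^\top\bar e)$ is affine, so it is Lipschitz and grows linearly. Equicontinuity is taken along the subsequence of times at which $\varepsilon_n\in L_p$, which supplies bounded initial conditions, and Gronwall's inequality gives local boundedness on $[0,T]$. By Arzelà--Ascoli there is a convergent subsequence, and passing to the limit in the integral equation shows that $\mathcal{I}(\omega,\cdot)$ solves (25).

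The step I expect to be the main obstacle is that equicontinuity needs a priori boundedness of all $\varepsilon_k$, not only of the recurrent returns. The recurrence of Theorem 5 gives returns to $L_p$ infinitely often. Between returns, Assumption 3 ($\rho<2(1+3d_{\max})^{-1}$) should make the deterministic part $I+\rho\Lambda$ a contraction, since $\|I+\rho\Lambda\|<1$ by Gershgorin. The $e$-terms are bounded, because the entries of $e_k$ lie in $[0,1]$. Hence $\sup_k\|\varepsilon_k\|<\infty$ a.s., and I would prove this directly by iterating this linear recursion.

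Finally, Proposition 2 says every ODE trajectory enters $\Omega_1$ in finite time, uniformly over compact initial sets since the system is linear, and stays within any $\varepsilon>c_1$ neighbourhood afterwards. Limits of $\mathcal{I}_n$ over arbitrarily long horizons therefore lie in $\Omega_1$. If $\limsup\|\varepsilon_k\|>c_1+\epsilon$ held on a positive-probability set, I would pick shifted interpolants starting at the times $T$ before the excursions. Their limit would be an ODE trajectory outside the $\epsilon$-neighbourhood of $\Omega_1$ at time $T$, which is a contradiction. This yields $\limsup_k\|\varepsilon_k\|\le c_1$ a.s.
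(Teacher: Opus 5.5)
Your route is the paper's route made explicit. The paper's proof only asserts that Lemma~5 and Proposition~2 supply the hypotheses of the Kushner--Yin almost-sure ODE theorems; writing the argument out shows where it breaks.

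\textbf{The failing step.} The step that fails is your claim that $\overline{\varepsilon}_\eta(t)/t\to0$ implies that the increments $\overline{\varepsilon}_\eta(t_n+s)-\overline{\varepsilon}_\eta(t_n)$ vanish uniformly for $s\in[0,T]$. This is false in general: partial sums of a zero-mean ergodic sequence satisfy $S_m/m\to0$, yet $S_{m+1}-S_m$ does not vanish. It is also false here. Because $\rho$ is fixed, a window of length $T$ always contains the same $\lfloor T/\rho\rfloor$ steps, so nothing averages out as $n\to\infty$. In the notation of Appendix~F, $\overline{\varepsilon}(\eta,\xi_{i+1})-\overline{\varepsilon}(\eta)=r_{i+1}-\bar r$ does not depend on $\eta$, so your grid approximation is unnecessary. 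Moreover,
\[
\sum_{i=n}^{n+j-1}\rho\,(r_{i+1}-\bar r)=\rho\sum_{i=n}^{n+j-1}\Bigl[\varGamma\bigl(e_{i+1}-\tfrac{1}{|\mathcal N|}\mathbf 1\mathbf 1^\top e_i\bigr)-\bar r\Bigr]-\tfrac{1}{|\mathcal N|}\mathbf 1\mathbf 1^\top(e_{n+j}-e_n).
\]
The telescoped last term is bounded, so it passes the test of Lemma~5. But it is of order one and does not tend to zero, since the posteriors $e_k$ keep moving under the ergodic chain.

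Taking $j=1$, the jump $\varepsilon_{n+1}-\varepsilon_n$ of $\mathcal I_n$ contains the $\rho$-independent term $-\tfrac{1}{|\mathcal N|}\mathbf 1\mathbf 1^\top(e_{n+1}-e_n)$. Consequently:
\begin{itemize}
\item the jumps do not shrink, so $\{\mathcal I_n(\omega,\cdot)\}$ is in general not equicontinuous, not even in the extended sense;
\item $\sup_{t\le T}\|\rho_n(t)\|$ does not tend to zero;
\item subsequential limits need not solve (25).
\end{itemize}
The almost-sure theorems of Kushner--Yin's Chapter~6 require vanishing step sizes. With a constant gain, the ODE method yields only weak-convergence statements as $\rho\to0$, never a bound on $\limsup_k\|\varepsilon_k\|$ at fixed $\rho$.

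\textbf{The conclusion itself fails for fixed $\rho$.} So this cannot be repaired within the stated result. Take identical perfect sensors $H^s=I_N$, so that $e_k^s=z_k\triangleq\langle x(k),\delta_N^l\rangle$ for every $s$. Then every $\bar e^s(l)$ equals the stationary probability $\pi(l)$ of $\delta_N^l$, and $c_1(l)=0$. Lemma~3 becomes
\[
\varepsilon_{k+1}=(I+\rho\Lambda)\varepsilon_k+(z_{k+1}-z_k)w,\qquad w=\rho(\mathbf 1+A\mathbf 1)-\mathbf 1,
\]
and $w$ is nonzero unless $\rho(1+d_s)=1$ for every $s$. Since $\|I+\rho\Lambda\|<1$, each flip of $z_k$ forces $\max(\|\varepsilon_k\|,\|\varepsilon_{k+1}\|)\ge\|w\|/2$. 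When $0<\pi(l)<1$, $z_k$ flips infinitely often a.s., so $\limsup_k\|\varepsilon_k\|\ge\|w\|/2>c_1(l)$. Put simply, $\theta_k^*$ jumps by $O(1)$ per step while $\hat\theta_k$ moves only by $O(\rho)$.

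\textbf{What your argument does give.} Your contraction observation is correct. Directly from the linear recursion, with no ODE argument, it gives the honest fixed-$\rho$ result
\[
\limsup_k\|\varepsilon_k\|\le\sup_k\|\rho r_{k+1}\|/(1-\|I+\rho\Lambda\|).
\]
A $c_1$-type bound can only be expected in the small-$\rho$ limit, and with $\hat\theta_k$ compared to $\tfrac{1}{|\mathcal N|}\mathbf 1\mathbf 1^\top\bar e$ rather than to the instantaneous $\theta_k^*\mathbf 1$. Indeed, the equilibrium $-\Lambda^{-1}\varGamma\bar e$ of the mean dynamics $\dot\theta=\Lambda\theta+\varGamma\bar e$ of $\hat\theta$ differs from that vector by $-\Lambda^{-1}\varGamma\bigl(\bar e-\tfrac{1}{|\mathcal N|}\mathbf 1\mathbf 1^\top\bar e\bigr)$, whose norm is at most $c_1$.
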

	
	\begin{proof}
		The Lemma 5 establishes that the asymptotic rate of change of the cumulative perturbation tends to zero. Coupled with the globally asymptotically \(c_1\)-stability of the origin for the mean-field initial value problem proved in Proposition 2, all hypotheses of Theorem 7.1 and Theorem 1.1 in Chapter 6 of \cite{37} hold. The almost-sure convergence of the error sequence \(\{\varepsilon_k\}\) to the bounded invariant set \(\Omega_1\) and the corresponding limsup bound follow directly.
	\end{proof} 
	
	\begin{theorem}
		For state $\delta_N^l$ with $l \in [1:N]$, let the global state conditional expectation average be $\theta_k^*$ and $\tilde{l} = \arg\max_{l\in[1:N]}\theta_k^*(l)$. Consider an arbitrary sensor subset $\mathcal{S} \subseteq \mathcal{N}$ with $|\mathcal{S}| \ge 1$, and define the fused estimate $\bar{\theta}_k^{\mathcal{S}} \triangleq \frac{1}{|\mathcal{S}|}\sum_{s \in \mathcal{S}} \hat{\theta}_k^s$.
		Define the \emph{admissible candidate set}
		\begin{equation}
			\label{eq:candidate_set}
			\mathcal{C}_k \triangleq \left\{ l \in [1:N] \,\middle|\, \bar{\theta}_k^{\mathcal{S}}(l) + \frac{c_1(l)}{\sqrt{|\mathcal{S}|}} \ge \max_{m \in [1:N]} \left( \bar{\theta}_k^{\mathcal{S}}(m) - \frac{c_1(m)}{\sqrt{|\mathcal{S}|}} \right) \right\},
		\end{equation}
		where $c_1(l)$ is given by (30). Then the following hold.
		\begin{enumerate}
			\item $\tilde{l} \in \mathcal{C}_k$ holds $\mathbb{P}$-a.s.
			\item If the \emph{separation condition}
			\begin{equation}
				\label{eq:separation}
				\bar{\theta}_k^{\mathcal{S}}(\tilde{l}) - \bar{\theta}_k^{\mathcal{S}}(l) > \frac{1}{\sqrt{|\mathcal{S}|}}\bigl(c_1(\tilde{l}) + c_1(l)\bigr)
			\end{equation}
			holds for all $l \in [1:N] \setminus \{\tilde{l}\}$, then $\mathcal{C}_k = \{\tilde{l}\}$. Moreover, $l^* = \arg\min_{l \in [1:N]} \|\bar{\theta}_k^{\mathcal{S}} - \delta_N^l\|_2$ is unique and satisfies $l^* = \tilde{l}$.
			\item If \eqref{eq:separation} is violated for some $l \neq \tilde{l}$, then $|\mathcal{C}_k| \ge 2$, and for every $l \in \mathcal{C}_k \setminus \{\tilde{l}\}$,
			\begin{equation}
				\label{eq:gap_bound}
				\theta_k^*(\tilde{l}) - \theta_k^*(l) \le \frac{c_1(\tilde{l}) + c_1(l)}{\sqrt{|\mathcal{S}|}} + \bigl[\bar{\theta}_k^{\mathcal{S}}(\tilde{l}) - \bar{\theta}_k^{\mathcal{S}}(l)\bigr]^+.
			\end{equation}
		\end{enumerate}
		Consequently, the optimal state estimate is $\bar{\theta}_k^* = \{\delta_N^l : l \in \mathcal{C}_k\}$, and node $i$'s estimate is $X_i^*(k) = D_r^{[2,2^{n-1}]} W_{[2,2^{i-1}]} \delta_N^{l^*}$ when $|\mathcal{C}_k|=1$, or $\{D_r^{[2,2^{n-1}]} W_{[2,2^{i-1}]} \delta_N^l : l \in \mathcal{C}_k\}$ otherwise.
	\end{theorem}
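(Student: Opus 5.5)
The plan is to reduce all three claims to a single deterministic two-sided bound,
\[
\bigl|\bar{\theta}_k^{\mathcal{S}}(l) - \theta_k^*(l)\bigr| \le \frac{c_1(l)}{\sqrt{|\mathcal{S}|}} \quad \text{for all } l \in [1:N],
\]
holding $\mathbb{P}$-a.s. for all sufficiently large $k$. Once this bound is in hand, each claim is an ordering argument.

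To get the bound, I will start from the previous theorem, applied componentwise in $l$. It gives $\limsup_k \|\varepsilon_k(l)\|_2 \le c_1(l)$ almost surely. Since $\bar{\theta}_k^{\mathcal{S}}(l) - \theta_k^*(l) = \frac{1}{|\mathcal{S}|}\sum_{s\in\mathcal{S}} \varepsilon_k^s(l)$, Cauchy--Schwarz yields
\[
\Bigl|\frac{1}{|\mathcal{S}|}\sum_{s\in\mathcal{S}}\varepsilon_k^s(l)\Bigr| \le \frac{1}{\sqrt{|\mathcal{S}|}}\Bigl(\sum_{s\in\mathcal{S}}|\varepsilon_k^s(l)|^2\Bigr)^{1/2} \le \frac{\|\varepsilon_k(l)\|_2}{\sqrt{|\mathcal{S}|}}.
\]
This is exactly where the factor $1/\sqrt{|\mathcal{S}|}$ comes from.

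This is also the main obstacle. The limsup statement only gives $\|\varepsilon_k(l)\| \le c_1(l) + \epsilon$ eventually, not $\le c_1(l)$. I would handle this in one of two ways. The first is to interpret the claims as holding for all large $k$ with $c_1$ replaced by $c_1 + \epsilon$ for arbitrary $\epsilon > 0$. The second is to use that $\Omega_1$ is invariant and that the trajectory enters it, as in the convergence-to-$\Omega_1$ part of the previous theorem. I will adopt the convention that the bound holds eventually, a.s., and carry the slack $\epsilon$ implicitly.

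\textbf{Part 1.} Take the index $m$ attaining $\max_m(\bar{\theta}(m) - c_1(m)/\sqrt{|\mathcal{S}|})$. The bound gives $\bar{\theta}(m) - c_1(m)/\sqrt{|\mathcal{S}|} \le \theta^*(m)$. By the definition of $\tilde{l}$, $\theta^*(m) \le \theta^*(\tilde{l})$, and again by the bound, $\theta^*(\tilde{l}) \le \bar{\theta}(\tilde{l}) + c_1(\tilde{l})/\sqrt{|\mathcal{S}|}$. Chaining these shows $\tilde{l} \in \mathcal{C}_k$.

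\textbf{Part 2.} Assume the separation condition holds. For any $l \ne \tilde{l}$ it rearranges to
\[
\bar{\theta}(l) + \frac{c_1(l)}{\sqrt{|\mathcal{S}|}} < \bar{\theta}(\tilde{l}) - \frac{c_1(\tilde{l})}{\sqrt{|\mathcal{S}|}} \le \max_m\Bigl(\bar{\theta}(m) - \frac{c_1(m)}{\sqrt{|\mathcal{S}|}}\Bigr),
\]
so $l \notin \mathcal{C}_k$. Together with Part 1 this gives $\mathcal{C}_k = \{\tilde{l}\}$. For the argmin, expand
\[
\|\bar{\theta} - \delta_N^l\|_2^2 = \|\bar{\theta}\|^2 - 2\bar{\theta}(l) + 1.
\]
Minimizing this over $l$ is the same as maximizing $\bar{\theta}(l)$. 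Separation gives the strict inequality $\bar{\theta}(\tilde{l}) > \bar{\theta}(l)$ for every $l \ne \tilde{l}$, so the minimizer is unique and equals $\tilde{l}$.

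\textbf{Part 3.} Suppose separation fails for some $l_0$. I need to exhibit a second element of $\mathcal{C}_k$. The negation of separation for $l_0$ gives
\[
\bar{\theta}(l_0) + \frac{c_1(l_0)}{\sqrt{|\mathcal{S}|}} \ge \bar{\theta}(\tilde{l}) - \frac{c_1(\tilde{l})}{\sqrt{|\mathcal{S}|}}.
\]
This is weaker than membership in $\mathcal{C}_k$, because $\tilde{l}$ need not attain the max in the definition of $\mathcal{C}_k$. Two cases follow.
\begin{enumerate}
\item If the maximizer $m$ differs from $\tilde{l}$, then $m \in \mathcal{C}_k$ trivially, so $|\mathcal{C}_k| \ge 2$.
\item If the maximizer is $\tilde{l}$, then the displayed inequality is exactly the statement $l_0 \in \mathcal{C}_k$, so again $|\mathcal{C}_k| \ge 2$.
\end{enumerate}
For the gap bound, take any $l \in \mathcal{C}_k \setminus \{\tilde{l}\}$ and write
\[
\theta^*(\tilde{l}) - \theta^*(l) = [\theta^*(\tilde{l}) - \bar{\theta}(\tilde{l})] + [\bar{\theta}(\tilde{l}) - \bar{\theta}(l)] + [\bar{\theta}(l) - \theta^*(l)].
\]
The two outer brackets are bounded by $c_1(\tilde{l})/\sqrt{|\mathcal{S}|}$ and $c_1(l)/\sqrt{|\mathcal{S}|}$ respectively, using the two-sided bound. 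The middle bracket is at most its positive part. This gives the stated inequality.

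Finally, the node-level estimate $X_i^*(k)$ is obtained by applying the extraction identity $x_i = D_r^{[2,2^{n-1}]} W_{[2,2^{i-1}]} x$ from the Preliminaries to $\delta_N^{l^*}$ when $|\mathcal{C}_k| = 1$, or to each $\delta_N^l$ with $l \in \mathcal{C}_k$ otherwise.
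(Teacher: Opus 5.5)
Your proposal follows the paper's proof: the paper likewise takes $|\bar{\theta}_k^{\mathcal{S}}(l)-\theta_k^*(l)|\le c_1(l)/\sqrt{|\mathcal{S}|}$ directly from Theorem 5, then runs the same ordering arguments for the three parts and the same expansion of $\|\bar{\theta}_k^{\mathcal{S}}-\delta_N^l\|_2^2$ for the $\arg\min$. Where you go beyond it, you are more careful than the paper. Your Cauchy--Schwarz step supplies the $1/\sqrt{|\mathcal{S}|}$ factor that the paper only asserts. Your slack caveat is warranted, because Theorem 5 gives only a $\limsup$ bound; the $\Omega_1$-invariance fallback would not remove that slack, since invariance is proved for the mean ODE while $\{\varepsilon_k\}$ is only shown to converge to $\Omega_1$. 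Your case split proves $|\mathcal{C}_k|\ge 2$, which the paper never addresses. Finally, your Part 2 chain runs in the correct direction, whereas the paper's displayed chain puts the $\max$ on the wrong side.
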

	
	\begin{proof}
		See Appendix H.
	\end{proof}
	
	\begin{remark}
		It is worth noting that when the separation condition $(34)$ is violated due to an insufficiently large probability gap, such a condition can be reconfigured by optimizing the communication network topology. From a theoretical perspective, the consensus error bound is quantitatively governed by the algebraic connectivity of the communication graph $\mathcal{G}$, i.e., the second smallest eigenvalue of the Laplacian matrix $\lambda_2$. By strategically adding critical communication links or increasing node degrees, one can enlarge $\lambda_2$ and accelerate the information fusion rate across the network. Mathematically, this effectively compresses the consensus estimation error bound, forcing it to shrink below the existing probability gap. 
	\end{remark}
	
	\begin{remark}
		In Theorem 6, the subset $\mathcal{S} \subseteq \mathcal{N}$ is chosen by an observer or local fusion unit according to application requirements. The observer may utilize a single sensor, i.e., $|\mathcal{S}|=1$ or fuse estimates from multiple sensors, i.e., $|\mathcal{S}|\ge 1$, up to the entire network. This selection can be adjusted dynamically, and Theorem 6 provides a unified decision framework for all such cases.
	\end{remark}
	
	\section{Simulation}
	This section presents comprehensive numerical simulation results to validate the effectiveness and theoretical properties of the proposed distributed stochastic average consensus filtering framework for BCNs. We consider a disturbed BCN with $n=3$ nodes, corresponding to a state space of size $N=8$. The dynamic of the considered BCN is described by the expected structure matrices. The overall state transition structure matrix of the BCN is constructed as
	$F = [F_1\ F_2\ F_3\ F_4]$,
	where each block of $F$ is given by
	\[
	F_1 = \begin{bmatrix}
		0.690 & 0.471 & 0.118 & 0.027 & 0.185 & 0.520 & 0.352 & 0.241\\
		0.140 & 0.066 & 0.285 & 0.318 & 0.004 & 0.160 & 0.000 & 0.052\\
		0.070 & 0.121 & 0.122 & 0.000 & 0.025 & 0.100 & 0.097 & 0.004\\
		0.020 & 0.008 & 0.096 & 0.036 & 0.006 & 0.070 & 0.367 & 0.356\\
		0.020 & 0.031 & 0.015 & 0.368 & 0.051 & 0.050 & 0.004 & 0.034\\
		0.020 & 0.030 & 0.101 & 0.124 & 0.244 & 0.040 & 0.159 & 0.224\\
		0.040 & 0.193 & 0.258 & 0.118 & 0.054 & 0.030 & 0.014 & 0.090\\
		0.000 & 0.080 & 0.003 & 0.008 & 0.432 & 0.030 & 0.007 & 0.000
	\end{bmatrix}
	\],
	
	\[
	F_2 = \begin{bmatrix}
		0.511 & 0.870 & 0.129 & 0.500 & 0.000 & 0.223 & 0.397 & 0.329\\
		0.000 & 0.130 & 0.006 & 0.120 & 0.191 & 0.015 & 0.071 & 0.010\\
		0.001 & 0.000 & 0.025 & 0.100 & 0.140 & 0.086 & 0.000 & 0.031\\
		0.000 & 0.000 & 0.401 & 0.100 & 0.015 & 0.252 & 0.073 & 0.015\\
		0.294 & 0.000 & 0.002 & 0.060 & 0.319 & 0.346 & 0.084 & 0.004\\
		0.007 & 0.000 & 0.008 & 0.050 & 0.098 & 0.001 & 0.000 & 0.003\\
		0.186 & 0.000 & 0.012 & 0.040 & 0.077 & 0.017 & 0.029 & 0.476\\
		0.000 & 0.000 & 0.417 & 0.030 & 0.159 & 0.060 & 0.345 & 0.132
	\end{bmatrix}
	\],
	
	\[
	F_3 = \begin{bmatrix}
		0.002 & 0.283 & 0.007 & 0.388 & 0.550 & 0.010 & 0.045 & 0.460\\
		0.006 & 0.039 & 0.002 & 0.059 & 0.100 & 0.182 & 0.028 & 0.180\\
		0.001 & 0.379 & 0.586 & 0.001 & 0.080 & 0.087 & 0.148 & 0.110\\
		0.013 & 0.165 & 0.399 & 0.255 & 0.070 & 0.043 & 0.017 & 0.090\\
		0.880 & 0.021 & 0.000 & 0.037 & 0.070 & 0.482 & 0.251 & 0.050\\
		0.000 & 0.066 & 0.000 & 0.033 & 0.050 & 0.100 & 0.006 & 0.040\\
		0.000 & 0.012 & 0.000 & 0.085 & 0.040 & 0.069 & 0.050 & 0.040\\
		0.099 & 0.034 & 0.005 & 0.142 & 0.040 & 0.026 & 0.455 & 0.030
	\end{bmatrix}
	\],
	
	\[
	F_4 = \begin{bmatrix}
		0.090 & 0.047 & 0.600 & 0.563 & 0.043 & 0.035 & 0.490 & 0.000\\
		0.030 & 0.041 & 0.100 & 0.020 & 0.000 & 0.103 & 0.150 & 0.114\\
		0.251 & 0.031 & 0.080 & 0.010 & 0.002 & 0.000 & 0.120 & 0.034\\
		0.012 & 0.399 & 0.060 & 0.104 & 0.045 & 0.108 & 0.080 & 0.005\\
		0.009 & 0.046 & 0.050 & 0.171 & 0.000 & 0.615 & 0.050 & 0.093\\
		0.010 & 0.078 & 0.050 & 0.061 & 0.023 & 0.093 & 0.040 & 0.157\\
		0.596 & 0.036 & 0.030 & 0.003 & 0.407 & 0.013 & 0.040 & 0.048\\
		0.002 & 0.322 & 0.030 & 0.068 & 0.479 & 0.033 & 0.030 & 0.549
	\end{bmatrix}
	\] The state feedback control gain $K \in \mathcal{L}_{4\times 8}$ is designed as
	\[
	K = \begin{bmatrix}
		1 & 0 & 0 & 0 & 0 & 1 & 0 & 0\\
		0 & 1 & 0 & 1 & 0 & 0 & 0 & 0\\
		0 & 0 & 0 & 0 & 1 & 0 & 0 & 1\\
		0 & 0 & 1 & 0 & 0 & 0 & 1 & 0
	\end{bmatrix}
	\] Under the designed state feedback control law $u(k)=Kx(k)$, the state transition matrix $Q=FK{\Phi_3}$ of the BCN is derived as
	\[
	Q = \begin{bmatrix}
		0.69 & 0.87 & 0.60 & 0.50 & 0.55 & 0.52 & 0.49 & 0.46\\
		0.14 & 0.13 & 0.10 & 0.12 & 0.10 & 0.16 & 0.15 & 0.18\\
		0.07 & 0.00 & 0.08 & 0.10 & 0.08 & 0.10 & 0.12 & 0.11\\
		0.02 & 0.00 & 0.06 & 0.10 & 0.07 & 0.07 & 0.08 & 0.09\\
		0.02 & 0.00 & 0.05 & 0.06 & 0.07 & 0.05 & 0.05 & 0.05\\
		0.02 & 0.00 & 0.05 & 0.05 & 0.05 & 0.04 & 0.04 & 0.04\\
		0.04 & 0.00 & 0.03 & 0.04 & 0.04 & 0.03 & 0.04 & 0.04\\
		0.00 & 0.00 & 0.03 & 0.03 & 0.04 & 0.03 & 0.03 & 0.03
	\end{bmatrix}
	\] The communication graph of the sensor network is specified by the weighted adjacency matrix
	\[
	A = \begin{bmatrix}
		0 & 0 & 0 & 0.6 & 0\\
		0 & 0 & 0.2 & 0 & 0.6\\
		0 & 0.2 & 0 & 0.6 & 0.5\\
		0.6 & 0 & 0.6 & 0 & 0\\
		0 & 0.6 & 0.5 & 0 & 0
	\end{bmatrix}
	\]
	
	\begin{figure}
	    \centering
	    \includegraphics[width=0.5\linewidth]{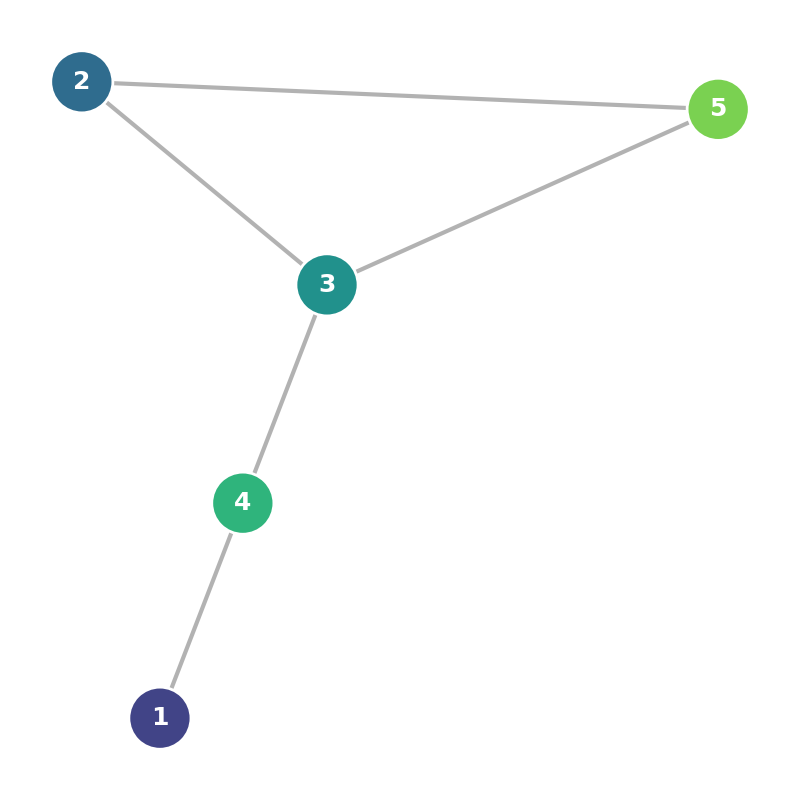}
	    \caption{Sensor Network}
	    \label{fig:1}
	\end{figure}
	
	Each sensor node $s$ is assigned an observation matrix $H^s$, where each observation matrix is given by
	\[
	H^1 = \begin{bmatrix}
		1 & 0 & 0 & 0 & 0 & 0 & 0 & 0.4 \\
		0 & 0.8 & 0 & 0.1 & 0 & 0 & 0 & 0 \\
		0 & 0 & 0 & 0 & 0 & 1 & 0 & 0 \\
		0 & 0 & 0 & 0.7 & 0 & 0 & 0 & 0 \\
		0 & 0 & 1 & 0 & 0 & 0 & 0 & 0 \\
		0 & 0 & 0 & 0 & 0 & 0 & 0 & 0.6 \\
		0 & 0 & 0 & 0 & 0 & 0 & 1 & 0 \\
		0 & 0.2 & 0 & 0.2 & 1 & 0 & 0 & 0
	\end{bmatrix}
	\],
	\[
	H^2 = \begin{bmatrix}
		0 & 0 & 0 & 0 & 0.1 & 0 & 0 & 1 \\
		0 & 1 & 0 & 0 & 0 & 0 & 0 & 0 \\
		0 & 0 & 0 & 0 & 0 & 1 & 0 & 0 \\
		0 & 0 & 0 & 1 & 0 & 0 & 0 & 0 \\
		1 & 0 & 0 & 0 & 0 & 0 & 0 & 0 \\
		0 & 0 & 0.4 & 0 & 0.9 & 0 & 0 & 0 \\
		0 & 0 & 0 & 0 & 0 & 0 & 0.9 & 0 \\
		0 & 0 & 0.6 & 0 & 0 & 0 & 0.1 & 0
	\end{bmatrix}
	\],
	\[
	H^3 = \begin{bmatrix}
		0 & 0 & 0 & 0 & 1 & 0 & 0 & 0 \\
		0 & 0.8 & 0 & 0 & 0 & 0.4 & 0 & 0 \\
		0 & 0 & 0 & 1 & 0 & 0 & 0 & 0 \\
		0 & 0.2 & 0 & 0 & 0 & 0 & 0 & 1 \\
		0 & 0 & 1 & 0 & 0 & 0 & 0 & 0 \\
		1 & 0 & 0 & 0 & 0 & 0 & 0 & 0 \\
		0 & 0 & 0 & 0 & 0 & 0 & 1 & 0 \\
		0 & 0 & 0 & 0 & 0 & 0.6 & 0 & 0
	\end{bmatrix}
	\],
	\[
	H^4 = \begin{bmatrix}
		0 & 0.1 & 0 & 0 & 0 & 0 & 1 & 0 \\
		1 & 0 & 0 & 0 & 0 & 0 & 0 & 0 \\
		0 & 0 & 0 & 0.2 & 0 & 0 & 0 & 1 \\
		0 & 0 & 0 & 0.7 & 0.1 & 0 & 0 & 0 \\
		0 & 0 & 0 & 0 & 0 & 1 & 0 & 0 \\
		0 & 0.8 & 0.2 & 0 & 0 & 0 & 0 & 0 \\
		0 & 0.1 & 0 & 0 & 0.9 & 0 & 0 & 0 \\
		0 & 0 & 0.8 & 0.1 & 0 & 0 & 0 & 0
	\end{bmatrix}
	\],
	\[
	H^5 = \begin{bmatrix}
		0.1 & 0 & 0 & 0 & 0.2 & 0 & 0 & 0.85 \\
		0 & 0 & 0 & 0 & 0 & 1 & 0 & 0.05 \\
		0 & 0 & 0 & 1 & 0.1 & 0 & 0 & 0.1 \\
		0 & 1 & 0 & 0 & 0 & 0 & 0 & 0 \\
		0 & 0 & 0.8 & 0 & 0 & 0 & 0 & 0 \\
		0 & 0 & 0.1 & 0 & 0 & 0 & 0.8 & 0 \\
		0 & 0 & 0.1 & 0 & 0.7 & 0 & 0.2 & 0 \\
		0.9 & 0 & 0 & 0 & 0 & 0 & 0 & 0
	\end{bmatrix}
	\].
	
	\begin{figure}
	    \centering
	    \includegraphics[width=0.5\linewidth]{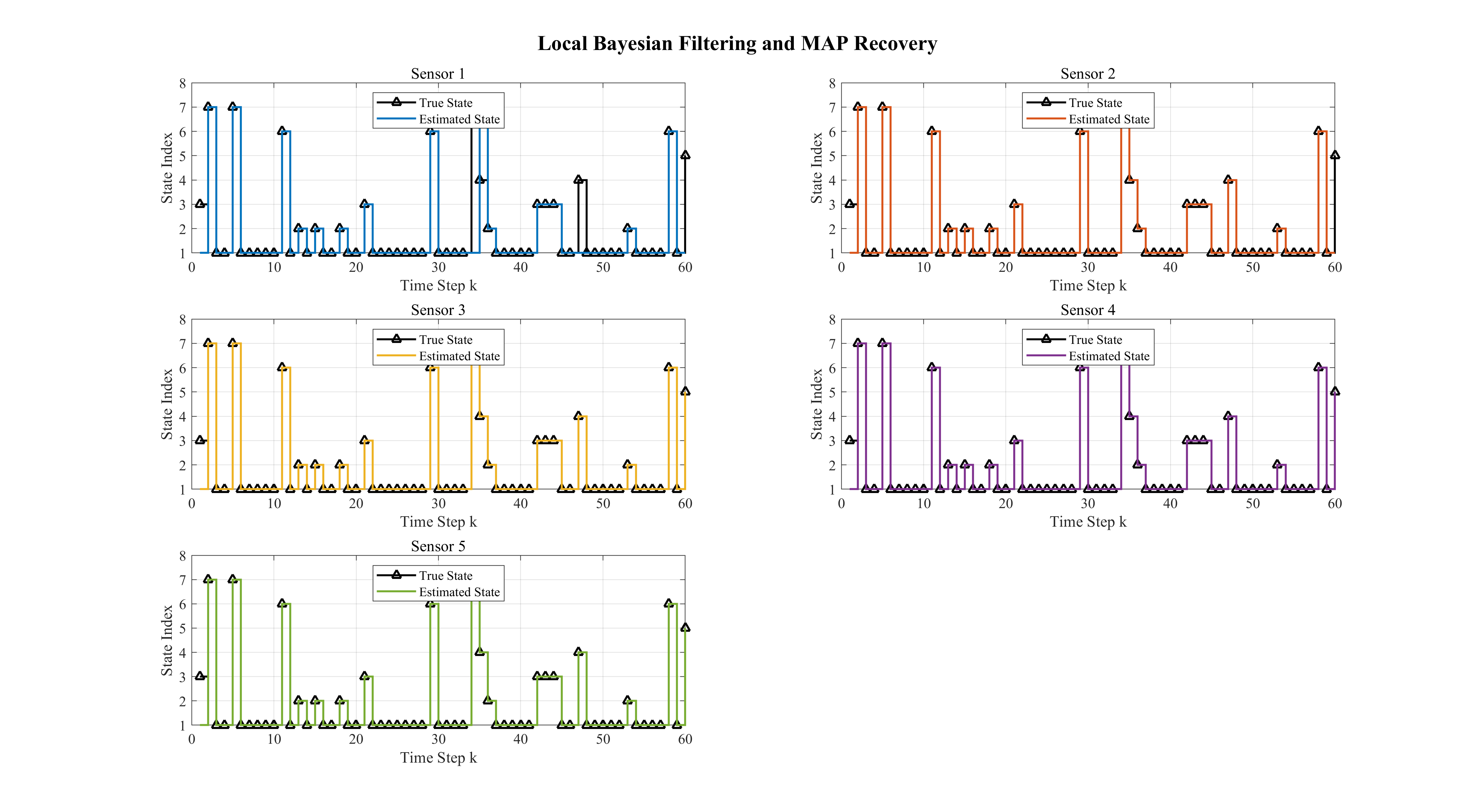}
	    \caption{Local Filtering}
	    \label{fig:plaaceholder}
	\end{figure}

	Figure 2 depicts the distributed state tracking trajectories of the five sensor nodes over 60 time steps. All nodes are initialized with a uniform prior distribution $\hat{\theta}_0^s=\mathbf{1}_8/8$. The maximum a posteriori (MAP) state estimate at each node is derived by selecting the index corresponding to the maximum entry of the posterior probability vector. The state recovery accuracy is then calculated by comparing the MAP estimate with the true system state. All sensors exhibit non-negligible estimation errors in the initial transient phase, and converge to steady state tracking after approximately 10 time steps. Sensor 3, which is located at the topological center of the communication graph, achieves the fastest convergence rate and the highest accuracy, benefiting from its ability to fuse information from more neighboring nodes. When system state transitions occur, all nodes can re-lock the true system state within 3 time steps.

	\begin{figure}
	    \centering
	    \includegraphics[width=0.5\linewidth]{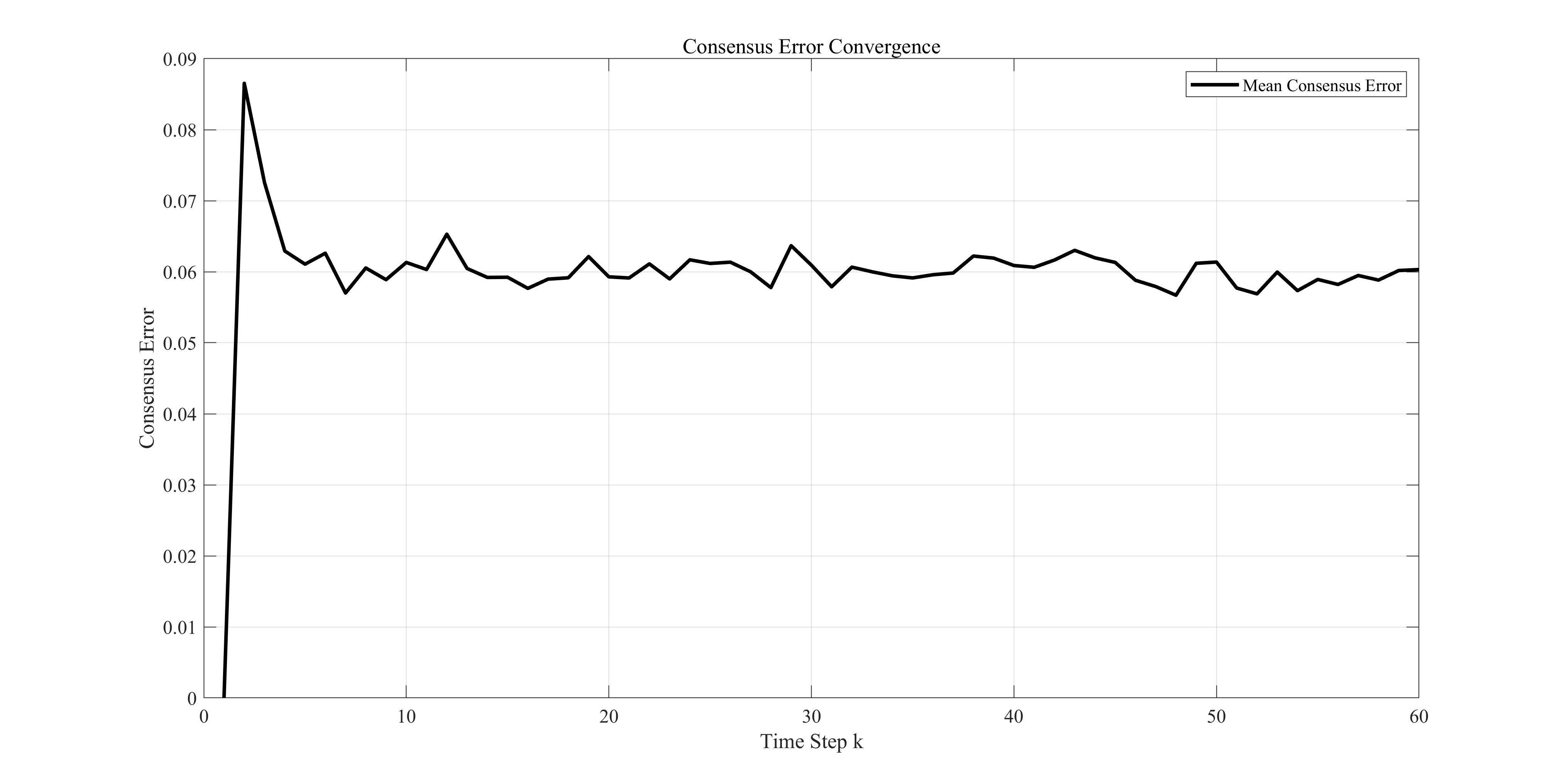}
	    \caption{Consensus Error Convergence}
	    \label{fig:2}
	\end{figure}
	
	Figure 3 presents the convergence of the average consensus error. The consensus errors converge to a steady state oscillation regime. Further analysis reveals that this stratification is mainly determined by the column concentration of $H^s$, rather than the topological connectivity of the communication graph alone. The results show that the estimation errors  remain bounded, but do not asymptotically converge to zero because of the observation noise.
	
	\begin{figure}
	    \centering
	    \includegraphics[width=0.5\linewidth]{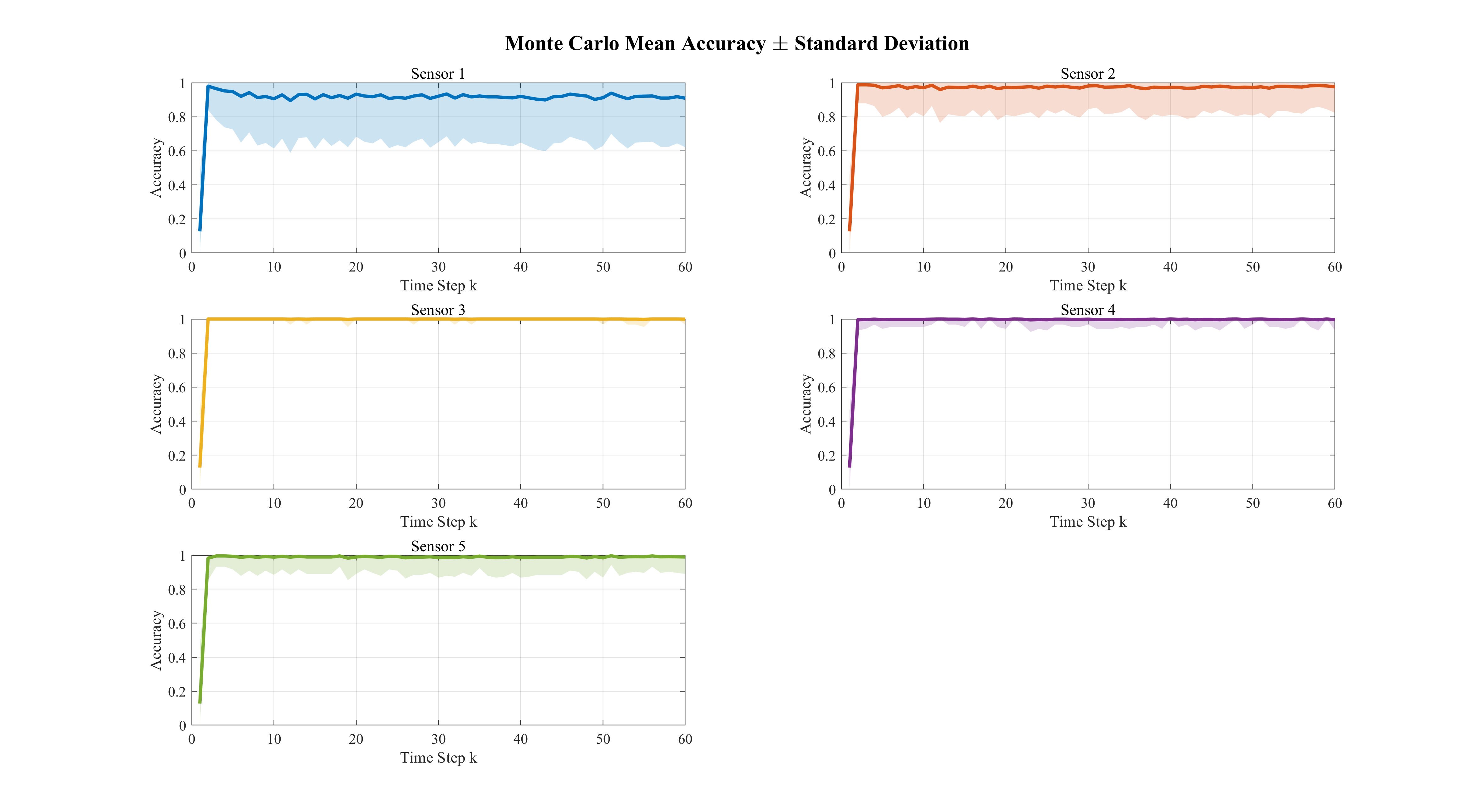}
	    \caption{Monte Carlo Accuracy}
	    \label{fig:3}
	\end{figure}
	
	Figure 4 and 5 shows the statistical evolution of the MAP state recovery accuracy for each sensor node. All sensor nodes start from an initial accuracy of $\approx 0.125$, and rapidly converge to the steady state  within the first three time steps. The sensor 1 exhibits the lowest accuracy and the largest performance fluctuations, which is attributed to its topological isolation and limited observation.
	
    \begin{figure}
        \centering
        \includegraphics[width=0.5\linewidth]{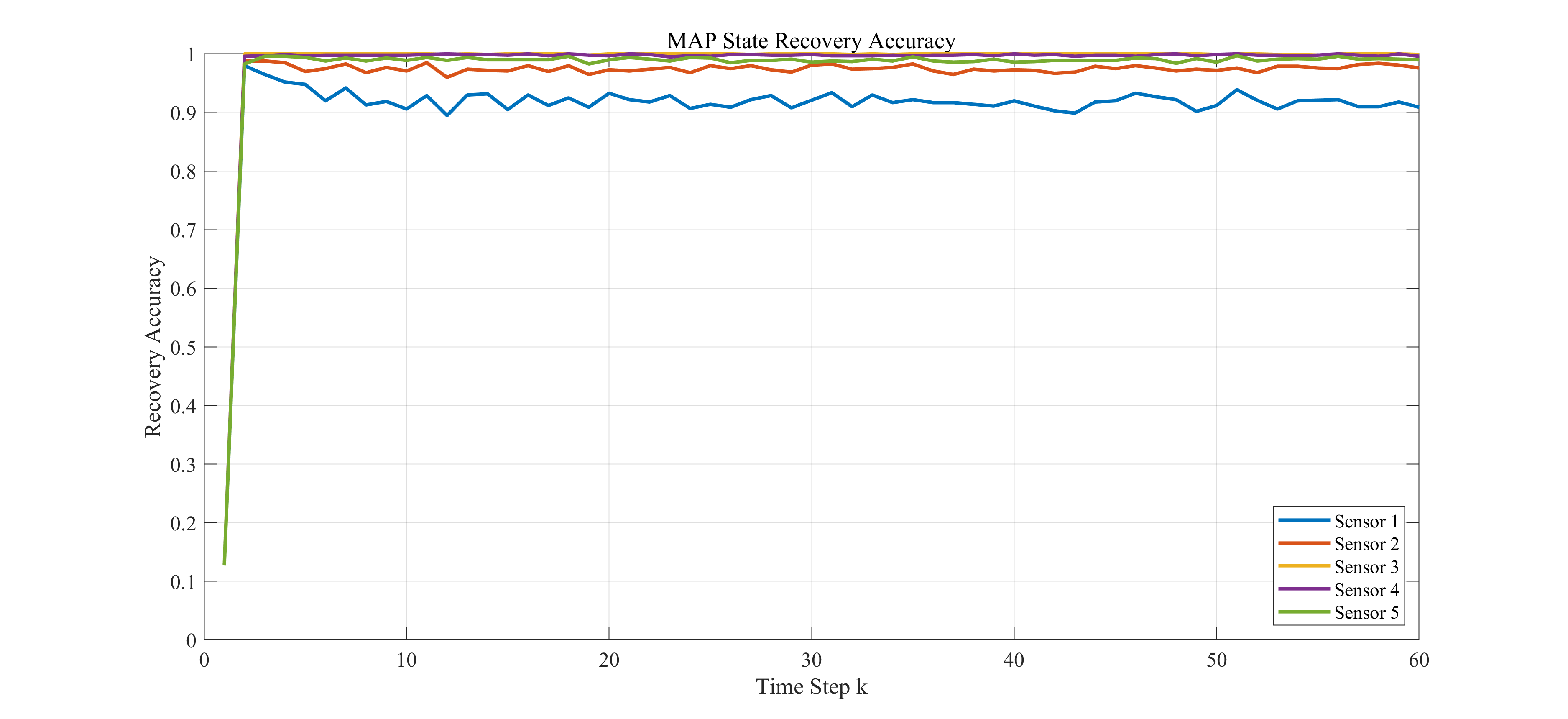}
        \caption{MAP State Recovery Accuracy}
        \label{fig:4}
    \end{figure}
    
	\begin{figure}
	    \centering
	    \includegraphics[width=0.5\linewidth]{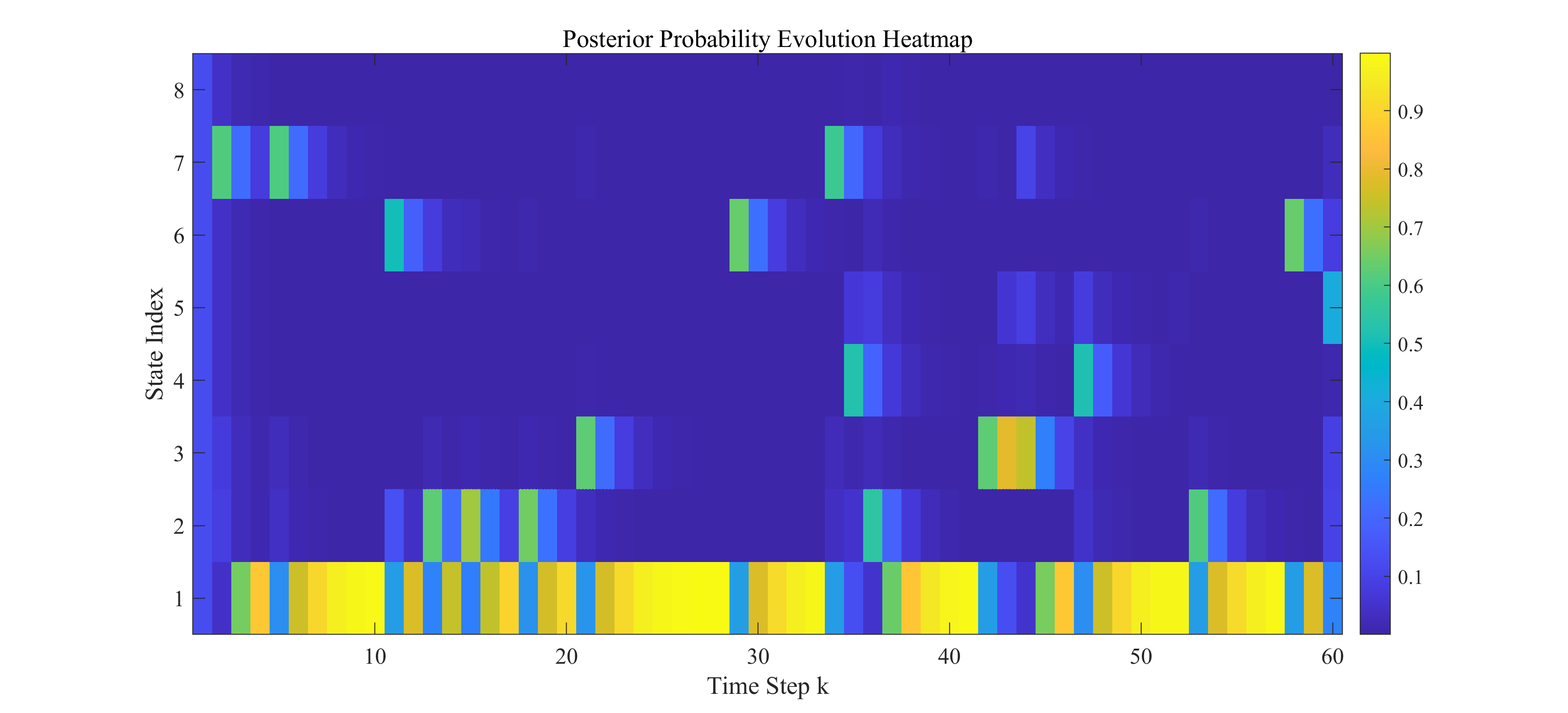}
	    \caption{Heatmap}
	    \label{fig:5}
	\end{figure}
	
	Figure 6 provides a heatmap of the global fused posterior probability $\frac{1}{|\mathcal{N}|}\sum_{s\in\mathcal{N}}\hat{\theta}_k^s$. The color intensity encodes the probability value. The heatmap shows that at most time instants the global posterior presents a sharp unimodal distribution, where the separation condition is strictly satisfied. Under this condition, theorem 6 guarantees that the MAP estimate is unique and consistent with the true system state. 
	
	\begin{figure}
	    \centering
	    \includegraphics[width=0.5\linewidth]{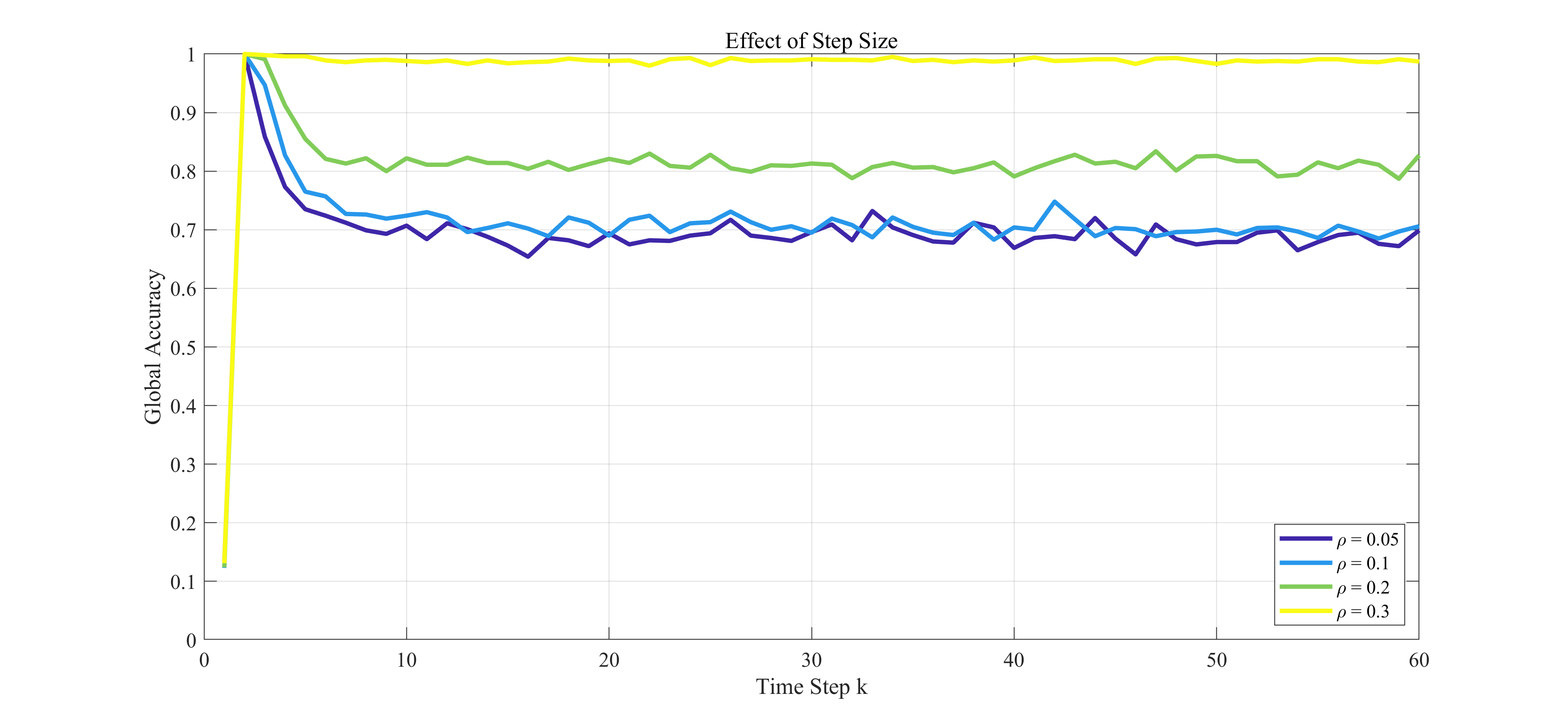}
	    \caption{Stepsize}
	    \label{fig:6}
	\end{figure}
	
	Figure 7 investigates the influence of the consensus step size $\rho$. All step sizes satisfy the step size constraint in Assumption 3. The step size has a significant impact on the steady state estimation performance. $\rho=0.3$ achieves the highest and most stable global estimation accuracy, while $\rho=0.05$ leads to the worst estimation performance. This trend indicates that a larger step size is more conducive to the rapid tracking of discrete state transitions in BCNs, while an excessively small step size places excessive weight on historical estimates, leading to a sluggish dynamic response.

	\section{Conclusion}
	This paper investigates the distributed multi-sensor fusion state estimation and stochastic average consensus filtering for disturbed Boolean control networks. By combining probability measure transformation, semi-tensor product and stochastic approximation, a recursive local fusion filtering approach and a distributed stochastic average consensus filter are proposed. The almost sure convergence of the algorithm is rigorously proved by means of the martingale convergence theorem and perturbed stochastic Lyapunov functions. Under connected topologies and appropriate step-size conditions, the estimation error of each sensor node is uniformly ultimately bounded by an explicitly quantified constant, with the bound governed by the network algebraic connectivity and sensor heterogeneity. The radius of this bound can be systematically reduced by enhancing the network algebraic connectivity. Future work will extend the proposed method to time-varying topologies, communication delays, packet dropouts and the co-design of state estimation and optimal control.
	
	\appendix
	
	\section{Proof of Lemma 1}
	
	\begin{proof}
		We first compute the conditional expectation with respect to the filtration $\mathcal{F}_k^X \cup \mathcal{F}_{k-1}^U \cup \mathcal{F}_{k-1}^{Y(s)}$
		\begin{align}
			&\mathbb{E}^s\bigl[\lambda_k^s | \mathcal{F}_k^X \cup \mathcal{F}_{k-1}^U \cup \mathcal{F}_{k-1}^{Y(s)}\bigr] \notag\\
			&= \mathbb{E}^s\Bigl[M_s \sum_{j=1}^{M_s} \langle H^s x(k), \delta_{M_s}^j \rangle \langle y^s(k), \delta_{M_s}^j \rangle \Bigm| \mathcal{F}_k^X \cup \mathcal{F}_{k-1}^U \cup \mathcal{F}_{k-1}^{Y(s)}\Bigr] \notag\\
			&= M_s \sum_{j=1}^{M_s} \langle H^s x(k), \delta_{M_s}^j \rangle \mathbb{E}^s\bigl[\langle y^s(k), \delta_{M_s}^j \rangle | \mathcal{F}_k^X \cup \mathcal{F}_{k-1}^U \cup \mathcal{F}_{k-1}^{Y(s)}\bigr] \notag\\
			&= \sum_{j=1}^{M_s} \langle H^s x(k), \delta_{M_s}^j \rangle \notag\\
			&= 1.
		\end{align}
		Similarly, we compute the conditional expectation with respect to the coarser filtration $\mathcal{F}_{k-1}^X \cup \mathcal{F}_{k-1}^U \cup \mathcal{F}_{k-1}^{Y(s)}$ by the law of iterated expectations, that is
		\begin{align}
			&\mathbb{E}^s\bigl[\lambda_k^s | \mathcal{F}_{k-1}^X \cup \mathcal{F}_{k-1}^U \cup \mathcal{F}_{k-1}^{Y(s)}\bigr] \notag\\
			&= \mathbb{E}^s\bigl[\mathbb{E}^s[\lambda_k^s | \mathcal{F}_k^X \cup \mathcal{F}_{k-1}^U \cup \mathcal{F}_{k-1}^{Y(s)}] | \mathcal{F}_{k-1}^X \cup \mathcal{F}_{k-1}^U \cup \mathcal{F}_{k-1}^{Y(s)}\bigr] \notag\\
			&= 1.
		\end{align}
	\end{proof}
	
	\section{Proof of Theorem 1}
	
	\begin{proof}
		Observe that $\Lambda_k^s$ is measurable with respect to $\mathcal{F}_k^X \cup \mathcal{F}_k^U \cup \mathcal{F}_k^{Y(s)}$. It follows that
		\begin{align}
			&\mathbb{E}^s\bigl[\Lambda_k^s | \mathcal{F}_{k-1}^X \cup \mathcal{F}_{k-1}^U \cup \mathcal{F}_{k-1}^{Y(s)}\bigr] \notag\\
			&= \Lambda_{k-1}^s \mathbb{E}^s\bigl[\lambda_k^s | \mathcal{F}_{k-1}^X \cup \mathcal{F}_{k-1}^U \cup \mathcal{F}_{k-1}^{Y(s)}\bigr] \notag\\
			&= \Lambda_{k-1}^s.
		\end{align}
		Therefore, the dynamics of the process $\{x(k)\}$ are derived as follows
		\begin{align}
			&\mathbb{E}\bigl[x(k) | \mathcal{F}_{k-1}^X \cup \mathcal{F}_{k-1}^U \cup \mathcal{F}_{k-1}^{Y(s)}\bigr] \notag\\
			&= \frac{\mathbb{E}^s\bigl[\Lambda_k^s x(k) | \mathcal{F}_{k-1}^X \cup \mathcal{F}_{k-1}^U \cup \mathcal{F}_{k-1}^{Y(s)}\bigr]}{\mathbb{E}^s\bigl[\Lambda_k^s | \mathcal{F}_{k-1}^X \cup \mathcal{F}_{k-1}^U \cup \mathcal{F}_{k-1}^{Y(s)}\bigr]} \notag\\
			&= \mathbb{E}^s\bigl[\lambda_k^s x(k) | \mathcal{F}_{k-1}^X \cup \mathcal{F}_{k-1}^U \cup \mathcal{F}_{k-1}^{Y(s)}\bigr] \notag\\
			&= \mathbb{E}^s\bigl[\mathbb{E}^s[\lambda_k^s x(k) | \mathcal{F}_k^X \cup \mathcal{F}_{k-1}^U \cup \mathcal{F}_{k-1}^{Y(s)}] | \mathcal{F}_{k-1}^X \cup \mathcal{F}_{k-1}^U \cup \mathcal{F}_{k-1}^{Y(s)}\bigr] \notag\\
			&= \mathbb{E}^s\bigl[x(k) \cdot \mathbb{E}^s[\lambda_k^s | \mathcal{F}_k^X \cup \mathcal{F}_{k-1}^U \cup \mathcal{F}_{k-1}^{Y(s)}] | \mathcal{F}_{k-1}^X \cup \mathcal{F}_{k-1}^U \cup \mathcal{F}_{k-1}^{Y(s)}\bigr] \notag\\
			&= \mathbb{E}^s\bigl[x(k) | \mathcal{F}_{k-1}^X \cup \mathcal{F}_{k-1}^U \cup \mathcal{F}_{k-1}^{Y(s)}\bigr] \notag\\
			&= F u(k-1) x(k-1).
		\end{align}
		
		Similarly, the dynamics of the process $\{y^s(k)\}$ are derived as follows:
		\begin{align}
			&\mathbb{E}\bigl[y^s(k) | \mathcal{F}_k^X \cup \mathcal{F}_{k-1}^U \cup \mathcal{F}_{k-1}^{Y(s)}\bigr] \notag\\
			&= \frac{\mathbb{E}^s\bigl[\Lambda_k^s y^s(k) | \mathcal{F}_k^X \cup \mathcal{F}_{k-1}^U \cup \mathcal{F}_{k-1}^{Y(s)}\bigr]}{\mathbb{E}^s\bigl[\Lambda_k^s | \mathcal{F}_k^X \cup \mathcal{F}_{k-1}^U \cup \mathcal{F}_{k-1}^{Y(s)}\bigr]} \notag\\
			&= \mathbb{E}^s\bigl[\lambda_k^s y^s(k) | \mathcal{F}_k^X \cup \mathcal{F}_{k-1}^U \cup \mathcal{F}_{k-1}^{Y(s)}\bigr] \notag\\
			&= \mathbb{E}^s\Bigl[M_s \sum_{j=1}^{M_s} \langle H^s x(k), \delta_{M_s}^j \rangle \langle y^s(k), \delta_{M_s}^j \rangle \delta_{M_s}^j \Bigm| \mathcal{F}_k^X \cup \mathcal{F}_{k-1}^U \cup \mathcal{F}_{k-1}^{Y(s)}\Bigr] \notag\\
			&= M_s \sum_{j=1}^{M_s} \langle H^s x(k), \delta_{M_s}^j \rangle \delta_{M_s}^j \mathbb{E}^s\bigl[\langle y^s(k), \delta_{M_s}^j \rangle | \mathcal{F}_k^X \cup \mathcal{F}_{k-1}^U \cup \mathcal{F}_{k-1}^{Y(s)}\bigr] \notag\\
			&= H^s x(k).
		\end{align}
	\end{proof}
	
	\section{Proof of Theorem 2}
	
	\begin{proof}
		We first observe that
		\begin{align}
			&\bigl\| \mathbb{E}^s[\Lambda_k^s x(k) | \mathcal{F}_k^{Y(s)}] \bigr\|_1 \notag\\
			&= \sum_{i=1}^{N} \bigl\langle \mathbb{E}^s[\Lambda_k^s x(k) | \mathcal{F}_k^{Y(s)}], \delta_N^i \bigr\rangle \notag\\
			&= \mathbb{E}^s\Bigl[\Lambda_k^s \Bigl(\sum_{i=1}^{N} \langle x(k), \delta_N^i \rangle\Bigr) \Bigm| \mathcal{F}_k^{Y(s)}\Bigr] \notag\\
			&= \mathbb{E}^s\bigl[\Lambda_k^s | \mathcal{F}_k^{Y(s)}\bigr],
		\end{align}
		which immediately yields
		\[
		\mathbb{E}\bigl[x(k) | \mathcal{F}_k^{Y(s)}\bigr] = \frac{\mathbb{E}^s[\Lambda_k^s x(k) | \mathcal{F}_k^{Y(s)}]}{\bigl\| \mathbb{E}^s[\Lambda_k^s x(k) | \mathcal{F}_k^{Y(s)}] \bigr\|_1}.
		\]
		Next, by the law of iterated expectations and repeated conditioning, we derive
		\begin{align}
			&\mathbb{E}^s\bigl[\Lambda_{k+1}^s x(k+1) | \mathcal{F}_{k+1}^{Y(s)}\bigr] \notag\\
			&= \mathbb{E}^s\Bigl[\Lambda_k^s \Bigl(\sum_{i=1}^{M_s} M_s \langle H^s x(k+1), \delta_{M_s}^i \rangle \langle y^s(k+1), \delta_{M_s}^i \rangle \Bigr)x(k+1) \Bigm| \mathcal{F}_{k+1}^{Y(s)}\Bigr] \notag\\
			&= M_s \sum_{i=1}^{M_s} \langle y^s(k+1), \delta_{M_s}^i \rangle \mathbb{E}^s\bigl[\Lambda_k^s \langle H^s x(k+1), \delta_{M_s}^i \rangle x(k+1) | \mathcal{F}_{k}^{Y(s)}\bigr].
		\end{align}
		To proceed further, we expand the conditional expectation inside the summation
		\begin{align}
			&\mathbb{E}^s\bigl[\Lambda_k^s \langle H^s x(k+1), \delta_{M_s}^i \rangle x(k+1) | \mathcal{F}_{k}^{Y(s)}\bigr] \notag\\
			&= \mathbb{E}^s\Bigl[\Lambda_k^s \sum_{j=1}^{N} \langle H^s \delta_N^j, \delta_{M_s}^i \rangle \langle x(k+1), \delta_N^j \rangle \delta_N^j \Bigm| \mathcal{F}_{k}^{Y(s)}\Bigr] \notag\\
			&= \sum_{j=1}^{N} H_{ij}^s \delta_N^j \mathbb{E}^s\bigl[\Lambda_k^s \langle x(k+1), \delta_N^j \rangle | \mathcal{F}_{k}^{Y(s)}\bigr] \notag\\
			&= \sum_{j=1}^{N} H_{ij}^s \delta_N^j \bigl\langle \mathbb{E}^s[\Lambda_k^s x(k+1) | \mathcal{F}_{k}^{Y(s)}], \delta_N^j \bigr\rangle.
		\end{align}
		Next, since $\Lambda_k^s$ is measurable with respect to $\mathcal{F}_k^X \cup \mathcal{F}_k^U \cup \mathcal{F}_{k}^{Y(s)}$, we apply the law of iterated expectations again to obtain
		\begin{align}
			&\mathbb{E}^s\bigl[\Lambda_k^s x(k+1) | \mathcal{F}_{k}^{Y(s)}\bigr] \notag\\
			&= \mathbb{E}^s\bigl[\mathbb{E}^s[\Lambda_k^s x(k+1) | \mathcal{F}_k^X \cup \mathcal{F}_k^U \cup \mathcal{F}_{k}^{Y(s)}] | \mathcal{F}_{k}^{Y(s)}\bigr] \notag\\
			&= \mathbb{E}^s\bigl[\Lambda_k^s \mathbb{E}^s[x(k+1) | \mathcal{F}_k^X \cup \mathcal{F}_k^U \cup \mathcal{F}_{k}^{Y(s)}] | \mathcal{F}_{k}^{Y(s)}\bigr] \notag\\
			&= \mathbb{E}^s\bigl[\Lambda_k^s F u(k) x(k) | \mathcal{F}_{k}^{Y(s)}\bigr] \notag\\
			&= FK\Phi_n \mathbb{E}^s\bigl[\Lambda_k^s x(k) | \mathcal{F}_{k}^{Y(s)}\bigr].
		\end{align}
		Combining the preceding results, we arrive at
		\begin{align}
			&\mathbb{E}^s\bigl[\Lambda_{k+1}^s x(k+1) | \mathcal{F}_{k+1}^{Y(s)}\bigr] \notag\\
			&= \sum_{i=1}^{M_s} M_s \langle y^s(k+1), \delta_{M_s}^i \rangle \bigl[\operatorname{Row}_i^{\!\top}(H^s) \circ (FK\Phi_n \mathbb{E}^s[\Lambda_k^s x(k) | \mathcal{F}_k^{Y(s)}])\bigr] \notag\\
			&= \bigl[(H^s)^{\!\top} \mathcal{T}^s y^s(k+1)\bigr] \circ \bigl[FK\Phi_n \mathbb{E}^s[\Lambda_k^s x(k) | \mathcal{F}_k^{Y(s)}]\bigr].
		\end{align}
		Building on this identity, we compute the $\ell_1$-norm of the left-hand side
		\begin{align}
			&\bigl\| \mathbb{E}^s[\Lambda_{k+1}^s x(k+1) | \mathcal{F}_{k+1}^{Y(s)}] \bigr\|_1 \notag\\
			&= \bigl\| \bigl[(H^s)^{\!\top} \mathcal{T}^s y^s(k+1)\bigr] \circ \bigl[FK\Phi_n \mathbb{E}^s[\Lambda_k^s x(k) | \mathcal{F}_k^{Y(s)}]\bigr] \bigr\|_1 \notag\\
			&= \bigl\langle (H^s)^{\!\top} \mathcal{T}^s y^s(k+1), FK\Phi_n \mathbb{E}^s[\Lambda_k^s x(k) | \mathcal{F}_k^{Y(s)}] \bigr\rangle.
		\end{align}
		Finally, substituting these into the expression for the conditional expectation yields the following recursive dynamics:
		\begin{align}
			&\mathbb{E}\bigl[x(k+1) | \mathcal{F}_{k+1}^{Y(s)}\bigr] \notag\\
			&= \frac{\mathbb{E}^s[\Lambda_{k+1}^s x(k+1) | \mathcal{F}_{k+1}^{Y(s)}]}{\bigl\| \mathbb{E}^s[\Lambda_{k+1}^s x(k+1) | \mathcal{F}_{k+1}^{Y(s)}] \bigr\|_1} \notag\\
			&= \frac{\bigl[(H^s)^{\!\top} \mathcal{T}^s y^s(k+1)\bigr] \circ \bigl[FK\Phi_n \mathbb{E}^s[\Lambda_k^s x(k) | \mathcal{F}_k^{Y(s)}]\bigr]}{\bigl\langle (H^s)^{\!\top} \mathcal{T}^s y^s(k+1), FK\Phi_n \mathbb{E}^s[\Lambda_k^s x(k) | \mathcal{F}_k^{Y(s)}] \bigr\rangle}\notag\\
			&= \frac{\bigl[(H^s)^{\!\top} \mathcal{T}^s y^s(k+1)\bigr] \circ \bigl[FK\Phi_n \mathbb{E}[x(k) | \mathcal{F}_k^{Y(s)}]\bigr]}{\bigl\langle (H^s)^{\!\top} \mathcal{T}^s y^s(k+1), FK\Phi_n \mathbb{E}[x(k) | \mathcal{F}_k^{Y(s)}] \bigr\rangle}.
		\end{align}
	\end{proof}
	
	\section{Proof of Proposition 1}
	\begin{proof}
		Recall that $\bar{\varepsilon}_k(\eta) = \frac{1}{k}\sum_{t=1}^{k}\bar{\varepsilon}(\eta,\xi_t)$, where $\xi_t \triangleq (e_t,e_{t-1})$ is adapted to $\sigma(Y_t,Y_{t-1})$. Under Assumption 2, the Markov chain $\{(x(k),y^s(k))\}$ is geometrically ergodic and admits a unique stationary distribution $\pi^s$. Consequently, the sequence $\{\xi_k\}$ is stationary and ergodic in the sense that the strong law of large numbers applies to the sample path average of any integrable function of $\xi_k$.
		From the definition of $\bar{\varepsilon}(\eta,\xi_k)$ in (23), we have
		\[
		\bar{\varepsilon}(\eta,\xi_k) = \Lambda\eta + \varGamma\bigl(e_k - \tfrac{1}{|\mathcal{N}|}\mathbf{1}\mathbf{1}^\top e_{k-1}\bigr) - \tfrac{1}{|\mathcal{N}|\rho}\mathbf{1}\mathbf{1}^\top(e_k - e_{k-1}).
		\]
		Since each component $e_k^s(l) = \langle \mathbb{E}[x(k)\mid\mathcal{F}_k^{Y(s)}],\delta_N^l\rangle$ is bounded by $1$, the random variable $\bar{\varepsilon}(\eta,\xi_k)$ has finite expectation for every fixed $\eta$. Moreover, by the geometric ergodicity of the underlying Markov chain, the autocovariance of $\{\bar{\varepsilon}(\eta,\xi_k)\}$ decays exponentially fast, which implies that the variance of the sample average satisfies
		\[
		\operatorname{Var}\bigl(\bar{\varepsilon}_k(\eta)\bigr) = O(k^{-1}).
		\]
		Therefore, by the strong law of large numbers for stationary ergodic processes, there exists a finite deterministic function $\bar{\varepsilon}(\eta)$ such that
		\[
		\lim_{k\to\infty}\bar{\varepsilon}_k(\eta) = \bar{\varepsilon}(\eta), \quad \mathbb{P}\text{-a.s.},
		\]
		and the convergence is uniform in $\eta$ on every compact set because $\bar{\varepsilon}(\eta,\xi)$ is affine in $\eta$ with bounded random coefficients.
		Taking expectation with respect to the stationary distribution of $\xi_k$ and using the fact that $\mathbb{E}_\eta[\cdot]$ denotes the expectation under the stationary law, we obtain
		\[
		\bar{\varepsilon}(\eta) = \lim_{k\to\infty}\mathbb{E}_\eta[\bar{\varepsilon}(\eta,\xi_k)] = \Lambda\eta + \varGamma\bigl(\bar{e} - \tfrac{1}{|\mathcal{N}|}\mathbf{1}\mathbf{1}^\top \bar{e}\bigr),
		\]
		where $\bar{e} = [\bar{e}^s]_{s\in\mathcal{N}}$ and $\bar{e}^s$ is the stationary limit of $e_k^s$ as $k\to\infty$.
		It remains to characterize $\bar{e}^s$. By Theorem 2, the local conditional expectation $e_k^s$ evolves according to the recursive Bayesian update (19). Under Assumption 2, the state process $\{x(k)\}$ is geometrically ergodic under SFC(9), and the chain $\{(x(k),y^s(k))\}$ possesses a unique stationary distribution $\pi^s$. Let $y^s(\infty)$ denote the marginal stationary distribution of the observation process. Taking the limit as $k\to\infty$ in (19) and invoking the stationarity of the limiting distribution, we obtain the fixed-point equation
		\[
		[\bar{e}^s(l)]_{l\in[1:N]} = \frac{\bigl[(H^s)^\top \mathcal{T}^s y^s(\infty)\bigr] \circ \bigl[FK\Phi_n [\bar{e}^s(l)]_{l\in[1:N]}\bigr]}{\bigl\langle (H^s)^\top \mathcal{T}^s y^s(\infty), FK\Phi_n [\bar{e}^s(l)]_{l\in[1:N]} \bigr\rangle}.
		\]
		This completes the proof.
	\end{proof}

	\section{Proof of Proposition 2}
	
	\begin{proof}
		Let $\alpha$ be defined as $\alpha = \bar{e} - \frac{1}{|\mathcal{N}|}\mathbf{1}\mathbf{1}^\top \bar{e}$. We first compute
		\[
		\|\varGamma \alpha\|^2 = \alpha^\top \varGamma^\top \varGamma \alpha = \alpha^\top (I+A)^2 \alpha = \|\alpha\|^2 + 2\alpha^\top A \alpha + \alpha^\top A^2 \alpha.
		\]
		Let $D \triangleq \operatorname{diag}\bigl[\sum_{j=1}^{|\mathcal{N}|} a_{ij}\bigr]_{i \in \mathcal{N}}$ be the degree matrix of graph $\mathcal{G}$, and define the Laplacian matrix $L \triangleq D - A$. By construction, $A = D - L$. Recall that the Laplacian matrix $L$ has nonnegative eigenvalues. Moreover, by definition of the Laplacian, $L\mathbf{1} = 0$, so the smallest eigenvalue of $L$ is $\lambda_1(L) = 0$, with corresponding eigenvector $\mathbf{1}$. Noting that $\mathbf{1}^\top \alpha = \mathbf{1}^\top \bar{e} - \tfrac{1}{|\mathcal{N}|}\mathbf{1}^\top \mathbf{1}\mathbf{1}^\top \bar{e} = 0,$
		we see that $\mathbf{1}$ and $\alpha$ are orthogonal. Since $\alpha$ lies in the orthogonal complement of the null space of $L$, it follows from the Courant–Fischer theorem that
		\[
		\alpha^\top L \alpha \geq \lambda_2(L) \|\alpha\|^2,
		\]
		where $\lambda_2(L)$ denotes the second-smallest eigenvalue of $L$.	Let $d_{\max}$ be the maximum degree of the graph. Then,
		\[
		\alpha^\top A \alpha = \alpha^\top D \alpha - \alpha^\top L \alpha \leq d_{\max} \|\alpha\|^2 - \lambda_2(L) \|\alpha\|^2.
		\]
		Moreover, let $\|A\|$ denote the spectral norm of $A$. We have
		\[
		\alpha^\top A^2 \alpha = \|A\alpha\|^2 \leq \|A\|^2 \|\alpha\|^2 \leq d_{\max}^2 \|\alpha\|^2,
		\]
		where the last inequality follows from the fact that the spectral norm of a graph adjacency matrix is bounded by the maximum degree.	Combining the above inequalities, we obtain
		\[
		\|\varGamma \alpha\|^2 \leq \bigl[(1+d_{\max})^2 - 2\lambda_2(L)\bigr] \|\alpha\|^2.
		\]
		Let $\bar{e}_{\mathrm{avg}} = \frac{1}{|\mathcal{N}|}\mathbf{1}\mathbf{1}^\top \bar{e}$ be the average of $\bar{e}$. Then $\|\alpha\| = \|\bar{e} - \bar{e}_{\mathrm{avg}}\|_2$, so
		\[
		\|\varGamma \alpha\| \leq \sqrt{(1+d_{\max})^2 - 2\lambda_2(L)} \cdot \|\bar{e} - \bar{e}_{\mathrm{avg}}\|_2.
		\]
		Now consider the time derivative of the Lyapunov function $V(\cdot)$. By the Rayleigh–Ritz theorem,
		\[
		\dot{V}(\eta) = \eta^\top \Lambda \eta + \eta^\top \varGamma \alpha \leq \lambda_{\max}(\Lambda) \|\eta\|^2 + \sqrt{(1+d_{\max})^2 - 2\lambda_2(L)} \cdot \|\bar{e} - \bar{e}_{\mathrm{avg}}\|_2\cdot \|\eta\|,
		\]
		where $\lambda_{\max}(\Lambda)$ is the largest eigenvalue of $\Lambda$.
		Define the constant
		\[
		c_1 = \sqrt{(1+d_{\max})^2 - 2\lambda_2(L)} \cdot \|\bar{e} - \bar{e}_{\mathrm{avg}}\|_2 \cdot |\lambda_{\max}(\Lambda)|^{-1},
		\]
		and consider the closed ball of radius $c_1$ centered at the origin. The compact level set $\Omega_1 \triangleq \{\eta: V(\eta) \leq \frac{1}{2}c_1^2\}$ contains this ball. For any solution $\eta$ of the initial value problem with initial condition $\eta(0) \in \mathbb{R}^n \setminus \Omega_1$, we have $\dot{V}(\eta) \leq -\beta$ for some constant $\beta > 0$. Consequently, $\Omega_1$ is an invariant level set, and any trajectory $\eta$ enters $\Omega_1$ in finite time and remains within $\Omega_1$ thereafter.
		Therefore, the origin is globally asymptotically $c_1$-stable for the initial value problem.
	\end{proof}
	
	\section{Proof of Theorem 3}
	
	\begin{lemma}\label{lem:boundedness}
		Under Assumption 2, for each fixed $k\in\mathbb{Z}^{+}$, the sample mean satisfies
		\[
		\lim_{|\mathcal{N}|\to\infty}\frac{1}{|\mathcal{N}|}\sum_{s\in\mathcal{N}}\bigl(e_k^s-\bar{e}^s\bigr)=0,\quad\mathbb{P}\text{-a.s.}
		\]
		Moreover, the sequence $\{e_k^s\}_{s\in\mathcal{N}}$ is uniformly bounded with $|e_k^s|\le 1$ for all $s\in\mathcal{N}$ and $k\in\mathbb{Z}^{+}$.
	\end{lemma}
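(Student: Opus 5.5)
The boundedness claim is the easy part, so I will prove it first. By Theorem 2, $\mathbb{E}[x(k)\mid\mathcal{F}_k^{Y(s)}]$ is obtained by normalizing a nonnegative vector by its own $\ell_1$-norm, namely the inner product in the denominator of (19). Each entry of $x(k)\in\Delta_N$ is a $\{0,1\}$-valued random variable, so each component $e_k^s(l)=\mathbb{P}(x(k)=\delta_N^l\mid\mathcal{F}_k^{Y(s)})$ is a conditional probability. Hence $0\le e_k^s(l)\le 1$, and the components sum to one. The same argument applies to $\bar e^s$, which by Proposition 1 is the fixed point of the normalized map. It is a probability vector, so $|e_k^s-\bar e^s|\le 1$ uniformly in $s$ and $k$.

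For the averaging claim, fix $k$ and set $Z_s\triangleq e_k^s-\bar e^s$. The plan is to apply a strong law of large numbers over the sensor index. I will use Kolmogorov's SLLN for independent but not identically distributed variables, or its $L^2$ martingale version, whose hypothesis is a summability condition of the form (27). The steps are as follows.

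First, I show that $\{Z_s\}_{s\in\mathcal N}$ are conditionally independent given the state path $\mathcal{F}_k^X$. By Assumption 1, the observations of different sensors are conditionally independent given $x(t)$ at each $t$. The noises $V^s$ are independent across time and independent of $X$, so the entire observation histories $\{y^s(1),\dots,y^s(k)\}$ are conditionally independent across $s$ given $\mathcal F_k^X$. Each $e_k^s$ is a measurable function of its own history, so conditional independence carries over to the $e_k^s$.

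Second, I center the variables. Let $m_s\triangleq\mathbb{E}[Z_s\mid\mathcal F_k^X]$. The centered variables $Z_s-m_s$ are then conditionally independent, have conditional mean zero, and satisfy $|Z_s-m_s|\le 2$. Their conditional variances are bounded by the entries of $\varPhi$, and (27) gives $\sum_s s^{-2}\operatorname{Var}<\infty$. Kolmogorov's criterion then yields $\frac1{|\mathcal N|}\sum_s (Z_s-m_s)\to0$ almost surely, first conditionally and then unconditionally by integrating out $\mathcal F_k^X$. Since the $Z_s$ are uniformly bounded, the summability condition also holds trivially with $\operatorname{Var}\le 4$.

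Third, I must show that the Cesàro mean of the $m_s$ vanishes. Under Assumption 2 and stationarity, which is the regime in which Proposition 1 identifies $\bar e^s$ as the stationary mean of $e_k^s$, we have $\mathbb{E}[e_k^s]=\bar e^s$ and therefore $\mathbb{E}[m_s]=0$.

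This third step is the main obstacle. The quantity $m_s$ depends on the common state path, so $m_s\neq0$ pathwise in general. The $m_s$ are not independent across $s$, and a law of large numbers over $s$ does not kill them. The fallback I intend is to interpret $\bar e^s$ as the stationary (time-averaged) value and to exploit the freedom in $|\mathcal N|\to\infty$. I will write $\frac1{|\mathcal N|}\sum_s m_s$ as an average of bounded functions of the fixed path with mean zero. If sensor models are drawn independently, I will apply the SLLN to $m_s$ over the randomness of $H^s$. If that is unavailable, I will settle for convergence in the sense needed downstream, namely boundedness of the averaged perturbation, which follows immediately from $|Z_s|\le1$. In that case I will record the almost sure limit statement under the explicit stationarity and independence hypotheses.
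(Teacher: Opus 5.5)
\textbf{Verdict: genuine gap, in your step three; the almost sure averaging claim is in fact false as stated.}

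Your boundedness argument is correct and matches the paper's: each $e_k^s(l)=\mathbb{P}(x(k)=\delta_N^l\mid\mathcal{F}_k^{Y(s)})$ lies in $[0,1]$. Your steps one and two are also sound. For fixed $k$, $\mathcal{F}_k^X$ has finitely many atoms, and given it the centred terms $Z_s-m_s$ are independent and bounded by $2$. Kolmogorov's criterion therefore applies without (27), as you note.

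Step three cannot be closed. For fixed $k$, every $e_k^s$ is driven by the same state path. What steps one and two prove is $\frac{1}{|\mathcal{N}|}\sum_s\bigl(e_k^s-\mathbb{E}[e_k^s\mid\mathcal{F}_k^X]\bigr)\to0$ almost surely. The conditional means $\mathbb{E}[e_k^s\mid\mathcal{F}_k^X]$ are random functions of $x(0),\dots,x(k)$, whereas $\frac{1}{|\mathcal{N}|}\sum_s\bar e^s$ is deterministic.

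Concretely, take noise-free identical sensors with $H^s=I_N$ for all $s$; Assumption 1 holds trivially, and Assumption 2 only constrains $F$ and $K$. Then $e_k^s(l)=\langle x(k),\delta_N^l\rangle$ for every $s$, so the sensor average equals this indicator for every $|\mathcal{N}|$. Its difference from the deterministic average of the $\bar e^s(l)$ cannot vanish almost surely whenever $\mathbb{P}(x(k)=\delta_N^l)\in(0,1)$. The same obstruction arises whenever $\frac{1}{|\mathcal{N}|}\sum_s\mathbb{E}[e_k^s\mid\mathcal{F}_k^X]$ is not asymptotically deterministic, which is the typical case. Your fallback of drawing the $H^s$ i.i.d.\ only changes the limit to $\mathbb{E}_H\bigl[\mathbb{E}[e_k^s\mid\mathcal{F}_k^X]\bigr]-\mathbb{E}_H[\bar e^s]$, which is still path dependent.

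Even the weaker identity $\mathbb{E}[m_s]=0$ is doubtful, because it depends on which reading of $\bar e^s$ you adopt. In stationarity, $\mathbb{E}[e_k^s(l)]=\pi(l)$ for every $s$, where $\pi$ is the stationary law of $x(k)$. Proposition 1, however, defines $\bar e^s$ as a fixed point of the Bayes map with $y^s(\infty)$ plugged in. For $H^s=I_N$ that map is $v\mapsto \pi\circ(FK\Phi_n v)/\langle\pi,FK\Phi_n v\rangle$. Since $FK\Phi_n\pi=\pi$, the map sends $\pi$ to $\pi\circ\pi/\langle\pi,\pi\rangle$, so $\pi$ is a fixed point only if it is uniform on its support.

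The paper's own proof does not supply the missing step either. It cites the SLLN for stationary ergodic processes applied to $\{(x(k),y^s(k))\}$, which controls time averages along one sensor's trajectory, not averages over $s$ at fixed $k$.

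You are right that Theorems 3 and 4 only use $|e_k^s|\le 1$. That is a reason to drop the first claim, or to restate it with $\mathbb{E}[e_k^s\mid\mathcal{F}_k^X]$ in place of $\bar e^s$, which is what you have actually proved. It is not a proof of the claim as stated.
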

	\begin{proof}
		The almost sure convergence follows from the strong law of large numbers for stationary ergodic processes (see Theorem 2, Section 3, Chapter IV of \cite{39}), applied to the geometrically ergodic Markov chain $\{(x(k),y^s(k))\}$ under Assumption 2. The uniform boundedness is immediate from $e_k^s=\langle\mathbb{E}[x(k)\mid\mathcal{F}_k^{Y(s)}],\delta_N^l\rangle\in[0,1]$.
	\end{proof}
	
	\begin{proof}
		For the sake of rigor, we first clarify the notation. Let
		\[
		\bar{r}\triangleq \varGamma\Bigl(\bar{e}-\frac{1}{|\mathcal{N}|}\mathbf{1}\mathbf{1}^{\!\top}\bar{e}\Bigr),
		\qquad 
		r_{k+1}\triangleq \varGamma\Bigl(e_{k+1}-\frac{1}{|\mathcal{N}|}\mathbf{1}\mathbf{1}^{\!\top}e_{k}\Bigr)-\frac{1}{|\mathcal{N}|\rho}\mathbf{1}\mathbf{1}^{\!\top}(e_{k+1}-e_{k}).
		\]  
		Then by (23) we have $\bar{\varepsilon}(\varepsilon_{k},\xi_{k+1})=\Lambda\varepsilon_{k}+r_{k+1}$, and
		\begin{equation}\label{eq:update_eps}
			\varepsilon_{k+1}=\varepsilon_{k}+\rho\Lambda\varepsilon_{k}+\rho r_{k+1}=(I+\rho\Lambda)\varepsilon_{k}+\rho r_{k+1}.
		\end{equation}
		By Taylor expansion and \eqref{eq:update_eps} we obtain
		\begin{align}
			&V(\varepsilon_{k+1})-V(\varepsilon_{k}) \notag\\
			&=\varepsilon_{k}^{\!\top}(\varepsilon_{k+1}-\varepsilon_{k})+\frac{1}{2}\|\varepsilon_{k+1}-\varepsilon_{k}\|^{2} \notag\\
			&=\rho\varepsilon_{k}^{\!\top}\Lambda\varepsilon_{k}+\rho\varepsilon_{k}^{\!\top}r_{k+1}+\frac{1}{2}\rho^{2}\|\Lambda\varepsilon_{k}\|^{2}+\rho^{2}\varepsilon_{k}^{\!\top}\Lambda r_{k+1}+\frac{1}{2}\rho^{2}\|r_{k+1}\|^{2} \notag\\
			&=\rho\varepsilon_{k}^{\!\top}J\varepsilon_{k}+\rho\varepsilon_{k}^{\!\top}r_{k+1}+\rho^{2}\varepsilon_{k}^{\!\top}\Lambda r_{k+1}+\frac{1}{2}\rho^{2}\|r_{k+1}\|^{2}, \label{eq:V_diff}
		\end{align}
		where $J\triangleq\bigl(I+\frac{1}{2}\rho\Lambda\bigr)\Lambda$. By Assumption 3 and the Gershgorin theorem, $J$ is negative definite, hence $\lambda_{\max}(J)<0$.
		By (27) and $\bar{\varepsilon}(\varepsilon_{k},\xi_{i+1})-\bar{\varepsilon}(\varepsilon_{k})=r_{i+1}-\bar{r}$, we have
		\[
		\Delta_{k}(\varepsilon_{k})=\sum_{i=k}^{\infty}\rho c_{k+1}^{i}\,\mathbb{E}_{k}\bigl[r_{i+1}-\bar{r}\bigr].
		\]
		Using $\sum_{i=k+1}^{\infty}\rho c_{k+2}^{i}=1$ and $\rho c_{k+1}^{k}=\rho$, we obtain
		\begin{align}
			\varepsilon_{k}^{\!\top}\Delta_{k}(\varepsilon_{k})
			&=\rho\varepsilon_{k}^{\!\top}\mathbb{E}_{k}[r_{k+1}]-\rho\varepsilon_{k}^{\!\top}\bar{r}
			+(1-\rho)\varepsilon_{k}^{\!\top}\sum_{i=k+1}^{\infty}\rho c_{k+2}^{i}\mathbb{E}_{k}[r_{i+1}-\bar{r}]. \label{eq:Delta_k}
		\end{align}
		Taking the conditional expectation $\mathbb{E}_{k}[\cdot]$ on $\varepsilon_{k+1}^{\!\top}\Delta_{k+1}(\varepsilon_{k+1})$ and substituting \eqref{eq:update_eps} into it, we obtain
		\begin{align}
			&\mathbb{E}_{k}\bigl[\varepsilon_{k+1}^{\!\top}\Delta_{k+1}(\varepsilon_{k+1})\bigr] \notag\\
			&=\mathbb{E}_{k}\Bigl[\varepsilon_{k+1}^{\!\top}\sum_{i=k+1}^{\infty}\rho c_{k+2}^{i}\mathbb{E}_{k+1}[r_{i+1}-\bar{r}]\Bigr] \notag\\
			&=-\mathbb{E}_{k}[\varepsilon_{k+1}^{\!\top}\bar{r}]+\mathbb{E}_{k}\Bigl[\varepsilon_{k+1}^{\!\top}\sum_{i=k+1}^{\infty}\rho c_{k+2}^{i}\mathbb{E}_{k+1}[r_{i+1}]\Bigr] \notag\\
			&=-\varepsilon_{k}^{\!\top}(I+\rho\Lambda)\bar{r}-\rho\,\mathbb{E}_{k}[r_{k+1}^{\!\top}\bar{r}] \notag\\
			&\quad+\varepsilon_{k}^{\!\top}(I+\rho\Lambda)\sum_{i=k+1}^{\infty}\rho c_{k+2}^{i}\mathbb{E}_{k}[r_{i+1}]
			+\rho\,\mathbb{E}_{k}\Bigl[r_{k+1}^{\!\top}\sum_{i=k+1}^{\infty}\rho c_{k+2}^{i}\mathbb{E}_{k+1}[r_{i+1}]\Bigr]. \label{eq:Delta_k1}
		\end{align}
		Combining \eqref{eq:V_diff}, \eqref{eq:Delta_k} and \eqref{eq:Delta_k1}, where $\rho\varepsilon_{k}^{\!\top}\mathbb{E}_{k}[r_{k+1}]$ and $-\rho\varepsilon_{k}^{\!\top}\mathbb{E}_{k}[r_{k+1}]$ cancel each other out, we obtain
		\begin{align}
			&\mathbb{E}_{k}\bigl[V_{k+1}(\varepsilon_{k+1})-V_{k}(\varepsilon_{k})\bigr] \notag\\
			&=\rho\varepsilon_{k}^{\!\top}J\varepsilon_{k}+\rho^{2}\varepsilon_{k}^{\!\top}\Lambda\mathbb{E}_{k}[r_{k+1}]+\frac{1}{2}\rho^{2}\mathbb{E}_{k}\|r_{k+1}\|^{2} \notag\\
			&\quad+\rho\varepsilon_{k}^{\!\top}\sum_{i=k+1}^{\infty}\rho c_{k+2}^{i}\mathbb{E}_{k}[r_{i+1}]
			+\rho\varepsilon_{k}^{\!\top}\Lambda\Bigl(\sum_{i=k+1}^{\infty}\rho c_{k+2}^{i}\mathbb{E}_{k}[r_{i+1}]-\bar{r}\Bigr) \notag\\
			&\quad+\rho\,\mathbb{E}_{k}\Bigl[r_{k+1}^{\!\top}\Bigl(\sum_{i=k+1}^{\infty}\rho c_{k+2}^{i}\mathbb{E}_{k+1}[r_{i+1}]-\bar{r}\Bigr)\Bigr]. \label{eq:total_diff}
		\end{align}
		By Lemma \ref{lem:boundedness}, $|e_{k}^{s}|\le 1$, hence $\|e_{k}\|\le\sqrt{|\mathcal{N}|}$, and
		\[
		\|r_{k+1}\|\le 2\sqrt{|\mathcal{N}|}(1+d_{\max})+\frac{2\sqrt{|\mathcal{N}|}}{\rho},\qquad
		\|\bar{r}\|\le 2\sqrt{|\mathcal{N}|}(1+d_{\max}).
		\]
		Moreover, noting that $\|r_{i+1}-\bar{r}\|\le\|r_{i+1}\|+\|\bar{r}\|\le 4\sqrt{|\mathcal{N}|}(1+d_{\max})+\frac{2\sqrt{|\mathcal{N}|}}{\rho}$, we have
		\[
		\Bigl\|\sum_{i=k+1}^{\infty}\rho c_{k+2}^{i}\mathbb{E}_{k}[r_{i+1}]-\bar{r}\Bigr\|
		=\Bigl\|\sum_{i=k+1}^{\infty}\rho c_{k+2}^{i}\mathbb{E}_{k}[r_{i+1}-\bar{r}]\Bigr\|
		\le 4\sqrt{|\mathcal{N}|}(1+d_{\max})+\frac{2\sqrt{|\mathcal{N}|}}{\rho}.
		\]
		Therefore, all terms in equation \eqref{eq:total_diff} that do not contain $\|\varepsilon_{k}\|^{2}$ can be bounded by $m_{1}\|\varepsilon_{k}\|+C_{0}$, where
		\[
		m_{1}=2\sqrt{|\mathcal{N}|}\Bigl[1+\rho(1+d_{\max})(1+3\|\Lambda\|_{2}+\rho\|\Lambda\|_{2})+(1+\rho)\|\Lambda\|_{2}\Bigr],
		\]
		while $C_{0}>0$ is a constant depending only on $|\mathcal{N}|,\rho,d_{\max}$. Thus
		\begin{equation}\label{eq:drift_bound}
			\mathbb{E}_{k}\bigl[V_{k+1}(\varepsilon_{k+1})-V_{k}(\varepsilon_{k})\bigr]
			\le \rho\lambda_{\max}(J)\|\varepsilon_{k}\|^{2}+m_{1}\|\varepsilon_{k}\|+C_{0}.
		\end{equation}
		Let
		\[
		c_{2}=|\rho\lambda_{\max}(J)|^{-\frac{1}{2}}\sqrt{C_{0}}+|\rho\lambda_{\max}(J)|^{-1}m_{1},
		\qquad 
		\Omega_{2}=\Bigl\{\eta:V(\eta)\le \tfrac{1}{2}c_{2}^{2}\Bigr\}.
		\]
		When $\varepsilon_{k}\notin\Omega_{2}^{\circ}$, i.e., $\|\varepsilon_{k}\|\ge c_{2}$, completing the square for \eqref{eq:drift_bound} yields
		\begin{align}
			\mathbb{E}_{k}\bigl[V_{k+1}(\varepsilon_{k+1})-V_{k}(\varepsilon_{k})\bigr]
			&\le -\Bigl(|\rho\lambda_{\max}(J)|^{\frac{1}{2}}\|\varepsilon_{k}\|-\frac{m_{1}}{2|\rho\lambda_{\max}(J)|^{\frac{1}{2}}}\Bigr)^{2}
			+\frac{m_{1}^{2}}{4|\rho\lambda_{\max}(J)|}+C_{0} \notag\\
			&\le -\frac{m_{1}\sqrt{C_{0}}}{|\rho\lambda_{\max}(J)|^{\frac{1}{2}}}\triangleq -c_{3}<0. \label{eq:negative_drift}
		\end{align}
		Define the stopping time $\tau=\inf\{t\in\mathbb{Z}^{+}:\varepsilon_{t}\in\Omega_{2}\}$. By \eqref{eq:negative_drift} and the optional stopping theorem for supermartingales, the stopped process $\{V_{k\wedge\tau}(\varepsilon_{k\wedge\tau})\}_{k\ge 0}$ is a supermartingale with respect to the filtration $\{\widetilde{\mathcal{F}}_{k}\}$.
	\end{proof}
	
	\section{Proof of Theorem 4}
	\begin{proof}
		Let $\tau=\inf\{k\ge 0:\varepsilon_{k}\in\Omega_{2}\}$, where $\Omega_{2}$ is given in Theorem 3. Denote the stopped process by $\varepsilon_{k\wedge\tau}$ and $V_{k\wedge\tau}(\varepsilon_{k\wedge\tau})$.
		By \eqref{eq:update_eps} and Lemma \ref{lem:boundedness} ($|e_{k}^{s}|\le 1$), there exist constants $C_{1},C_{2}>0$ such that
		\[
		\|\bar{\varepsilon}(\varepsilon_{k},\xi_{k+1})\|\le (1+3d_{\max})\|\varepsilon_{k}\|+C_{1},
		\qquad
		\|\varepsilon_{k+1}\|\le (1+\rho(1+3d_{\max}))\|\varepsilon_{k}\|+\rho C_{1}.
		\]
		Hence for any finite $k$, as long as $\mathbb{E}[\|\varepsilon_{0}\|^{2}]<\infty$, we have $\mathbb{E}[\|\varepsilon_{k}\|^{2}]<\infty$, and thus
		\begin{equation}\label{eq:V_bound_thm4}
			\sup_{k}\mathbb{E}\bigl[V(\varepsilon_{k\wedge\tau})\bigr]
			=\frac{1}{2}\sup_{k}\mathbb{E}\bigl[\|\varepsilon_{k\wedge\tau}\|^{2}\bigr]<\infty.
		\end{equation}
		Next we estimate the perturbation term. By (27) and $\|\bar{\varepsilon}(\varepsilon_{k},\xi_{i+1})-\bar{\varepsilon}(\varepsilon_{k})\|\le \|r_{i+1}\|+\|\bar{r}\|\le C_{2}$, we obtain
		\[
		|\varepsilon_{k}^{\!\top}\Delta_{k}(\varepsilon_{k})|
		\le \|\varepsilon_{k}\|\sum_{i=k}^{\infty}\rho c_{k+1}^{i}\mathbb{E}_{k}\|\bar{\varepsilon}(\varepsilon_{k},\xi_{i+1})-\bar{\varepsilon}(\varepsilon_{k})\|
		\le C_{2}\|\varepsilon_{k}\|.
		\]
		Since $V_{k}(\varepsilon_{k})=V(\varepsilon_{k})+\varepsilon_{k}^{\!\top}\Delta_{k}(\varepsilon_{k})\ge -\|\varepsilon_{k}\|\cdot\|\Delta_{k}(\varepsilon_{k})\|\ge -C_{2}\|\varepsilon_{k}\|$, it follows that $V_{k\wedge\tau}^{-}(\varepsilon_{k\wedge\tau})\le C_{2}\|\varepsilon_{k\wedge\tau}\|$. Combining with \eqref{eq:V_bound_thm4} yields
		\begin{equation}\label{eq:sup_martingale}
			\sup_{k}\mathbb{E}\bigl[|V_{k\wedge\tau}(\varepsilon_{k\wedge\tau})|\bigr]<\infty,
			\qquad
			\sup_{k}\mathbb{E}\bigl[V_{k\wedge\tau}^{-}(\varepsilon_{k\wedge\tau})\bigr]<\infty.
		\end{equation}
		By Theorem 3, $\{V_{k\wedge\tau}(\varepsilon_{k\wedge\tau})\}$ is a supermartingale with respect to $\{\widetilde{\mathcal{F}}_{k}\}$ and satisfies \eqref{eq:sup_martingale}. By Doob's martingale convergence theorem, there exists an almost surely finite random variable $V_{\infty}$ such that
		\begin{equation}\label{eq:doob_limit}
			\lim_{k\to\infty}V_{k\wedge\tau}(\varepsilon_{k\wedge\tau})=V_{\infty},\qquad \mathbb{P}\text{-a.s.}
		\end{equation}
		Assume $\mathbb{P}(\tau=\infty)>0$. On the event $\{\tau=\infty\}$, $V_{k}(\varepsilon_{k})=V_{k\wedge\tau}(\varepsilon_{k\wedge\tau})$ for all $k$, so by \eqref{eq:doob_limit}, $V_{k}(\varepsilon_{k})$ converges almost surely. However, by \eqref{eq:negative_drift} from Theorem 3, on $\{\tau>k\}$ we have $\mathbb{E}_{k}[V_{k+1}(\varepsilon_{k+1})-V_{k}(\varepsilon_{k})]\le -c_{4}$. Taking total expectation and iterating, we obtain
		\[
		\mathbb{E}\bigl[V_{k\wedge\tau}(\varepsilon_{k\wedge\tau})\bigr]
		\le \mathbb{E}[V_{0}(\varepsilon_{0})]-c_{4}\sum_{i=0}^{k-1}\mathbb{P}(\tau>i).
		\]
		If $\mathbb{P}(\tau=\infty)>0$, then $\sum_{i=0}^{\infty}\mathbb{P}(\tau>i)=\infty$, which implies $\mathbb{E}[V_{k\wedge\tau}]\to-\infty$. This contradicts $\sup_{k}\mathbb{E}[V_{k\wedge\tau}^{-}]<\infty$ in \eqref{eq:sup_martingale}. Therefore
		\begin{equation}\label{eq:finite_hit}
			\mathbb{P}(\tau<\infty)=1.
		\end{equation}
		Define the successive return times $\tau_{0}=0$, $\tau_{n+1}=\inf\{k>\tau_{n}:\varepsilon_{k}\in\Omega_{2}\}$. By the strong Markov property, restarting from time $\tau_{n}$, the process $\{\varepsilon_{\tau_{n}+k}\}_{k\ge 0}$ still satisfies all conditions of Theorem 3. Hence applying \eqref{eq:finite_hit} yields $\mathbb{P}(\tau_{n+1}<\infty\mid\tau_{n}<\infty)=1$. By induction and $\mathbb{P}(\tau_{0}<\infty)=1$, we have $\tau_{n}<\infty$ for all $n$ almost surely. Thus $\{\varepsilon_{k}\}$ visits the compact set $\Omega_{2}$ infinitely often with probability $1$. Taking $L_{p}=\Omega_{2}$ for any $p\in(0,1]$, the desired recurrence conclusion follows.
	\end{proof}
	
	\section{Proof of Theorem 6}
	
	\begin{proof}
		By Theorem~5, for every $l \in [1:N]$,
		\begin{equation}
			\label{eq:err_bound}
			\bigl|\bar{\theta}_k^{\mathcal{S}}(l) - \theta_k^*(l)\bigr| \le \frac{c_1(l)}{\sqrt{|\mathcal{S}|}}, \quad \mathbb{P}\text{-a.s.}
		\end{equation}
		Hence $\theta_k^*(l) \in \bigl[\bar{\theta}_k^{\mathcal{S}}(l) - \frac{c_1(l)}{\sqrt{|\mathcal{S}|}},\; \bar{\theta}_k^{\mathcal{S}}(l) + \frac{c_1(l)}{\sqrt{|\mathcal{S}|}}\bigr]$ for all $l$. Since $\theta_k^*(\tilde{l}) \ge \theta_k^*(m)$ for any $m$, we have
		\begin{equation}
			\bar{\theta}_k^{\mathcal{S}}(\tilde{l}) + \frac{c_1(\tilde{l})}{\sqrt{|\mathcal{S}|}} \ge \theta_k^*(\tilde{l}) \ge \theta_k^*(m) \ge \bar{\theta}_k^{\mathcal{S}}(m) - \frac{c_1(m)}{\sqrt{|\mathcal{S}|}},
		\end{equation}
		which yields $\tilde{l} \in \mathcal{C}_k$ by \eqref{eq:candidate_set}.
		Suppose \eqref{eq:separation} holds for all $l \neq \tilde{l}$. Then
		\begin{equation}
			\bar{\theta}_k^{\mathcal{S}}(\tilde{l}) - \frac{c_1(\tilde{l})}{\sqrt{|\mathcal{S}|}} > \bar{\theta}_k^{\mathcal{S}}(l) + \frac{c_1(l)}{\sqrt{|\mathcal{S}|}} \ge \max_{m}\Bigl(\bar{\theta}_k^{\mathcal{S}}(m) - \frac{c_1(m)}{\sqrt{|\mathcal{S}|}}\Bigr),
		\end{equation}
		so $l \notin \mathcal{C}_k$ for all $l \neq \tilde{l}$. Thus $\mathcal{C}_k = \{\tilde{l}\}$. Since $\|\bar{\theta}_k^{\mathcal{S}} - \delta_N^l\|_2^2 = \|\bar{\theta}_k^{\mathcal{S}}\|_2^2 + 1 - 2\bar{\theta}_k^{\mathcal{S}}(l)$, minimizing $\|\bar{\theta}_k^{\mathcal{S}} - \delta_N^l\|_2$ is equivalent to maximizing $\bar{\theta}_k^{\mathcal{S}}(l)$, which by the strict inequality above is uniquely attained at $l = \tilde{l}$.
		If $l \in \mathcal{C}_k \setminus \{\tilde{l}\}$, then by \eqref{eq:candidate_set},
		\begin{equation}
			\bar{\theta}_k^{\mathcal{S}}(l) + \frac{c_1(l)}{\sqrt{|\mathcal{S}|}} \ge \bar{\theta}_k^{\mathcal{S}}(\tilde{l}) - \frac{c_1(\tilde{l})}{\sqrt{|\mathcal{S}|}}.
		\end{equation}
		Rearranging and combining with \eqref{eq:err_bound} yields
		\begin{align}
			\theta_k^*(\tilde{l}) - \theta_k^*(l) &\le \Bigl(\bar{\theta}_k^{\mathcal{S}}(\tilde{l}) + \frac{c_1(\tilde{l})}{\sqrt{|\mathcal{S}|}}\Bigr) - \Bigl(\bar{\theta}_k^{\mathcal{S}}(l) - \frac{c_1(l)}{\sqrt{|\mathcal{S}|}}\Bigr) \\
			&= \frac{c_1(\tilde{l}) + c_1(l)}{\sqrt{|\mathcal{S}|}} + \bigl(\bar{\theta}_k^{\mathcal{S}}(\tilde{l}) - \bar{\theta}_k^{\mathcal{S}}(l)\bigr).
		\end{align}
		If $\bar{\theta}_k^{\mathcal{S}}(\tilde{l}) \ge \bar{\theta}_k^{\mathcal{S}}(l)$, this directly gives \eqref{eq:gap_bound}, otherwise $\bar{\theta}_k^{\mathcal{S}}(\tilde{l}) - \bar{\theta}_k^{\mathcal{S}}(l) < 0$, and the bound follows trivially from $\theta_k^*(\tilde{l}) - \theta_k^*(l) \le 1$ and the non-negativity of the right-hand side of \eqref{eq:gap_bound}.
	\end{proof}

\end{document}